\newenvironment{myfig}{\begin{figure*}[tb]}{\end{figure*}}
\def\cqedsymbol{\ifmmode$\lrcorner$\else{\unskip\nobreak\hfil
\penalty50\hskip1em\null\nobreak\hfil$\lrcorner$
\parfillskip=0pt\finalhyphendemerits=0\endgraf}\fi} 
\newcommand{\cqed}{}
\newcommand{\Oh}{\mathcal{O}}
\newcommand{\Q}{\mathbb{Q}}
\newcommand{\NP}[0]{\ensuremath{\mathrm{NP}}\xspace}
\newcommand{\APX}[0]{\ensuremath{\mathrm{APX}}\xspace}
\newcommand{\leqx}{\leq_{\mathrm{x}}}
\newcommand{\leqy}{\leq_{\mathrm{y}}}
\newcommand{\cell}{\mathsf{cell}}
\newcommand{\apxcell}{\mathsf{apxcell}}
\newcommand{\area}{\mathsf{area}}
\newcommand{\apx}{\textsf{apx}}
\newcommand{\Xapx}{X_0}
\newcommand{\Yapx}{Y_0}
\newcommand{\Cells}{\mathsf{Cells}}
\newcommand{\opt}{\textsf{opt}}
\newcommand{\optcell}{\mathsf{optcell}}
\newcommand{\lead}{\phi}
\newcommand{\cont}{\delta}
\newcommand{\leader}{\mathsf{leader}}
\newcommand{\situ}{\sigma}
\newcommand{\btlines}{L}
\newcommand{\alllines}{\btlines'}
\newcommand{\altlines}{\widetilde{\btlines}}
\newcommand{\stlines}{\altlines'}
\newcommand{\seq}{S}
\newcommand{\stseq}{\widetilde{\seq}}
\newcommand{\segment}[3]{#1 [#2,#3 ]}
\newcommand{\Xlines}{X_{\mathsf{lin}}}
\newcommand{\Xapxlines}{\Xlines^\mathsf{apx}}
\newcommand{\Xoptlines}{\Xlines^\mathsf{opt}}
\newcommand{\Xapxsol}{\zeta^{\mathrm{x},\mathsf{apx}}}
\newcommand{\Xoptsol}{\zeta^{\mathrm{x},\mathsf{opt}}}
\newcommand{\Xsol}{\zeta^\mathrm{x}}
\newcommand{\Ylines}{Y_{\mathsf{lin}}}
\newcommand{\Yapxlines}{\Ylines^\mathsf{apx}}
\newcommand{\Yoptlines}{\Ylines^\mathsf{opt}}
\newcommand{\Yapxsol}{\zeta^{\mathrm{y},\mathsf{apx}}}
\newcommand{\Yoptsol}{\zeta^{\mathrm{y},\mathsf{opt}}}
\newcommand{\Ysol}{\zeta^\mathrm{y}}
\newcommand{\apxsol}{\zeta^\mathsf{apx}}
\newcommand{\points}{\mathsf{pts}}
\newcommand{\leadpos}{f}
\newcommand{\leadup}{f^\uparrow}
\newcommand{\leaddown}{f^\downarrow}
\newcommand{\leadact}{\mathsf{active}}
\newcommand{\parent}{\mathsf{parent}}
\newcommand{\treeroot}{\mathsf{root}}
\newcommand{\leaves}{\mathsf{leaves}}
\newcommand{\flf}{\mathsf{type}}
\newcommand{\typeinc}{\mathsf{inc}}
\newcommand{\typedec}{\mathsf{dec}}
\newcommand{\epoch}{\epsilon}
\newcommand{\trees}{\mathsf{trees}}
\newcommand{\tree}{\mathsf{tree}}
\newtheorem{lemma}{Lemma}[section]
\newtheorem{observation}{Observation}
\newtheorem{theorem}[lemma]{Theorem}
\newtheorem{claim}{Claim}
\theoremstyle{definition}
\newtheorem{definition}{Definition}
\newcommand{\optdis}{\textsc{Optimal Discretization}\xspace}
\begin{document}

\title{Optimal Discretization is Fixed-parameter Tractable%
  \thanks{
      This research is a part of a project that has received funding from the European Research Council (ERC) under the European Union's Horizon 2020 research and innovation programme
      Grant Agreement no.~714704 (TM, IM, MP, and MS) and 648527 (IM).
    \newline%
    TM also supported by Charles University, student grant number SVV–2017–260452.
    MS also supported by Alexander von Humboldt-Stiftung.
    MS' main work done while with University of Warsaw and funded by the ERC.
    \newline%
    An extended abstract of this manuscript appeared at ACM-SIAM Symposium on Discrete Algorithms (SODA 2021)~\cite{soda-version}.
  }%
}%

\author{
   Stefan Kratsch\thanks{Institut f\"{u}r Informatik, Humboldt-Universit\"{a}t zu Berlin, Germany, \texttt{kratsch@informatik.hu-berlin.de}}
   \and
  Tom\'{a}\v{s} Masa\v{r}\'{i}k\thanks{Faculty of Mathematics and Physics, Charles University, Czech Republic \& Faculty of Mathematics, Informatics and Mechanics, University of Warsaw, Poland, \texttt{masarik@kam.mff.cuni.cz}} 
  \and Irene Muzi\thanks{Technische Universit\"{a}t Berlin, Germany, \texttt{irene.muzi@tu-berlin.de}} 
    \and Marcin Pilipczuk\thanks{Faculty of Mathematics, Informatics and Mechanics, University of Warsaw, Poland, \texttt{malcin@mimuw.edu.pl}}
    \and Manuel Sorge\thanks{Faculty of Mathematics, Informatics and Mechanics, University of Warsaw, Poland and
      Institute of Logic and Computation, TU Wien, Austria, \texttt{manuel.sorge@ac.tuwien.ac.at}}
}

\date{}

\maketitle

\begin{textblock}{20}(0, 12.0)
\includegraphics[width=40px]{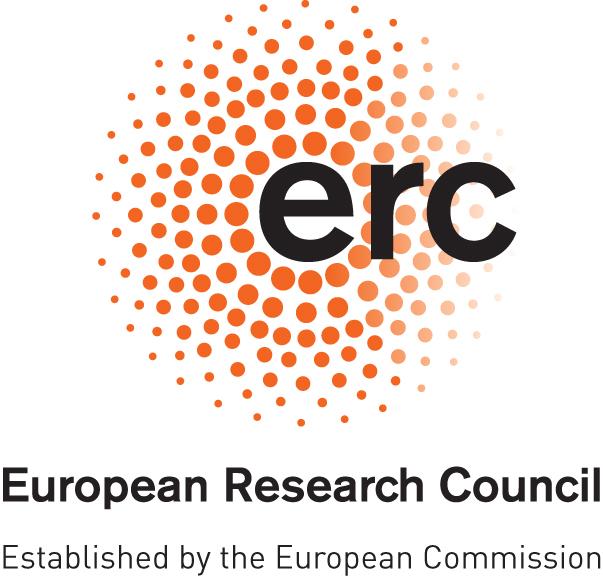}%
\end{textblock}
\begin{textblock}{20}(0, 12.9)
\includegraphics[width=40px]{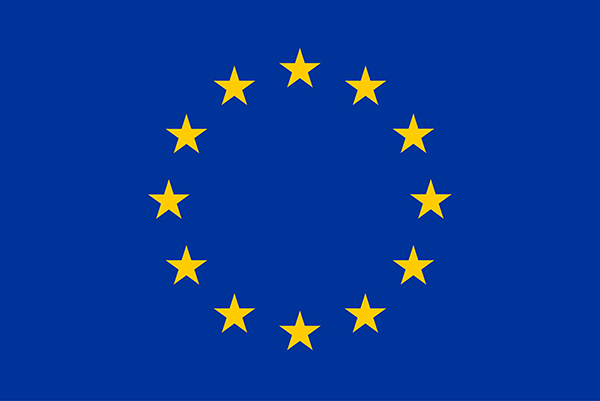}%
\end{textblock}

\begin{abstract}
Given two disjoint sets $W_1$ and $W_2$ of points in the plane, the
\optdis{} problem asks for the minimum size of a family of horizontal and vertical lines
that separate $W_1$ from $W_2$, that is, in every region into which the lines partition the plane
there are either only points of $W_1$, or only points of $W_2$, or the region is empty.
Equivalently, \optdis{} can be phrased as a task of discretizing continuous variables:
We would like to discretize the range of $x$-coordinates and the range of $y$-coordinates
into as few segments as possible, maintaining that no pair of points
from $W_1 \times W_2$ are projected onto the same pair of segments under this discretization. 

We provide a fixed-parameter algorithm for the problem, parameterized by the number of lines
in the solution. Our algorithm works in time $2^{\Oh(k^2 \log k)} n^{\Oh(1)}$, where
$k$ is the bound on the number of lines to find and $n$ is the number of points in the input. 

Our result answers in positive a question of Bonnet, Giannopolous, and Lampis [IPEC 2017]
and of Froese (PhD thesis, 2018) and is in contrast with the known intractability
of two closely related generalizations: the \textsc{Rectangle Stabbing} problem
and the generalization in which the selected lines are not required to be axis-parallel.
\end{abstract}

\section{Introduction}
Separating and breaking geometric objects or point sets, often into clusters, is a common task in computer science.
For example, it is a subtask in divide and conquer algorithms~\cite{jones_search_2021} and appears in machine-learning contexts~\cite{liu_discretization_2002}.
A fundamental problem in this area is to separate two given sets of points by introducing the smallest number of axis-parallel hyperplanes~\cite{Megiddo88}.
This is a classical problem that is even challenging in two dimensions as it is \NP-complete and \APX-hard
already in this case~\cite{Megiddo88,CalinescuDKW05}.
We call the two-dimensional variant of this problem \optdis.
\optdis\ and related problems have been continually studied, in particular with respect to their parameterized complexity~\cite{DomFRS12,giannopoulos_fixedparameter_2013,heggernes_fixedparameter_2013,BonnetGL17,froese_finegrained_2018,banik_parameterized_2020,MisraMS20,dev_redblue_2022}.
Nevertheless, the parameterized complexity status of \optdis\ when parameterized by the number of hyperplanes to introduce remained open~\cite{BonnetGL17,froese_finegrained_2018}.
In this work we show that \optdis\ is fixed-parameter tractable.

Formally, \optdis\ is defined as follows.
For three numbers $a,b,c \in \Q$, we say that \emph{$b$ is between $a$ and $c$} if $a < b< c$ or $c < b< a$. 
The input to \optdis{} consists of two sets $W_1,W_2 \subseteq \Q \times \Q$ and an integer $k$.
A pair $(X,Y)$ of sets $X, Y \subseteq \Q$ is called a \emph{separation} (of $W_1$ and $W_2$) if for every $(x_1,y_1) \in W_1$
and $(x_2,y_2) \in W_2$ there exists an element of $X$ between $x_1$ and $x_2$ or an element of $Y$ between $y_1$ and $y_2$.
We also call the elements of $X \cup Y$ \emph{lines}, for the geometric interpretation of a separation~$(X, Y)$ is as follows.
We draw $|X|$ vertical lines at $x$-coordinates from $X$ and $|Y|$ horizontal lines at $y$-coordinates from $Y$
and focus on partitioning the plane into $(|X|+1)(|Y|+1)$ regions given by the drawn lines.
We require that the closure of every such region does not contain both a point from $W_1$ and a point from $W_2$.
The optimization version of \optdis{} asks for a separation $(X,Y)$ minimizing $|X|+|Y|$; the decision version
takes also an integer $k$ as an input and looks for a separation $(X,Y)$ with $|X|+|Y| \leq k$.

Here we establish fixed-parameter tractability of \optdis{} by showing the following.
\begin{theorem}\label{thm:main}
  \textsc{Optimal Discretization} can be solved in time $2^{\Oh(k^2 \log k)} n^{\Oh(1)}$, where $k$ is the upper bound on the number of lines and $n$ is the number of input points.  
\end{theorem}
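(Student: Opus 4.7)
My plan is to combine a polynomial-time constant-factor approximation with a structural enumeration of how an optimal solution is positioned relative to the approximate one. The target running time $2^{\Oh(k^2 \log k)}$ suggests guessing roughly $\Oh(k^2)$ substructures, each admitting $k^{\Oh(1)}$ possibilities.

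First, I would design a polynomial-time algorithm producing an approximate separation $(\Xapx, \Yapx)$ of size $\Oh(k)$, or reporting that no separation of size at most $k$ exists. A natural route is to phrase \optdis{} as an interval-hitting problem on the $W_1 \times W_2$ pairs --- for every pair we need either an $x$-line in a specific $x$-interval or a $y$-line in a specific $y$-interval --- and apply LP rounding or a greedy charging scheme. The approximate lines partition the plane into a grid of $\Oh(k^2)$ \emph{approximate cells}, each monochromatic (or empty) with respect to $W_1 \cup W_2$.

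Next, I would classify any hypothetical optimal solution $(X^\opt, Y^\opt)$ by its \emph{skeleton}: for every optimal line, the index of the unique approximate column or row that strictly contains it. With at most $k$ optimal lines and $\Oh(k)$ slots, there are $2^{\Oh(k \log k)}$ skeletons. A skeleton tells us which approximate cells each optimal cell spans, but not which input points fall into which optimal cell --- two optimal lines sharing an approximate column can be positioned in $\Oh(n)$ distinct ways relative to the input points there. I would then refine the guess by recording, for every approximate cell, how the at most $\Oh(k)$ optimal lines crossing it cut the (monochromatic) points inside. If a suitable structural lemma bounds the number of combinatorially distinct cut patterns per cell by $k^{\Oh(1)}$, the total refinement enumeration is $k^{\Oh(k^2)} = 2^{\Oh(k^2 \log k)}$. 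Given a full skeleton-plus-refinement, a polynomial-time feasibility check verifies that every optimal cell, reassembled from its per-approximate-cell pieces, is monochromatic, after which actual line coordinates can be recovered via a system of inequalities.

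The crux is the structural lemma underpinning the refinement step: one must argue that only $k^{\Oh(1)}$ combinatorially distinct placements of the optimal lines inside each approximate cell need to be considered, despite the cell possibly containing up to $n$ points. The paper's terminology of \emph{leaders}, \emph{trees}, \emph{epochs}, and increasing/decreasing \emph{types} hints that the authors organize the optimal lines into a tree in which each node is responsible for separating a small canonical set of points across consecutive epochs of a sweep, producing the succinct encoding. I expect the main obstacle to be proving completeness of this encoding --- that every true optimum is realized by some tree in the enumeration --- while simultaneously maintaining efficiency, with roughly $\Oh(k)$ nodes, each with $\Oh(k)$ local choices and $\Oh(k)$ children, contributing the desired $k^{\Oh(k^2)}$ factor; the interaction between $x$- and $y$-lines across cell boundaries makes both directions of this proof delicate.
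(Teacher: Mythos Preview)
Your outline captures the opening move correctly --- compute a constant-factor approximation, guess the layout of the optimal lines relative to it, and guess which of the $\Oh(k^2)$ cells are empty --- but the heart of the argument is missing, and the step you flag as ``the crux'' does not work as stated.

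The claimed structural lemma, that only $k^{\Oh(1)}$ combinatorially distinct placements of the optimal lines inside each approximate cell need to be considered, is false in the sense you need. An approximate cell can contain $\Theta(n)$ points, and the combinatorial type of a single optimal line inside it genuinely depends on which of those $\Theta(n)$ gaps it occupies; there is no $k^{\Oh(1)}$ bound to be had. Consequently, ``recover actual line coordinates via a system of inequalities'' is not a polynomial-time step: once the layout and cell contents are fixed, what remains is a CSP on $\Oh(k)$ variables with domains of size $\Theta(n)$ whose constraints are \emph{not} expressible as conjunctions of clauses $(x<a)\vee(y>b)$. The paper exhibits explicit instances (Figure~\ref{fig:over:reversion} and the right panel of Figure~\ref{fig:over:binary2}) where the binary constraint between two solution lines fails to admit the median polymorphism, so it is neither 2-SAT-like nor solvable by LP/inequality propagation.

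Your reading of the leader/tree/epoch machinery is also off. The trees are not an enumeration device contributing a $k^{\Oh(k^2)}$ factor; they are a \emph{representation} device. After additional branching on which cells contain the extremal points of each block and on the $\leqx$/$\leqy$ order of those extremal points (Steps~D and~E, costing $2^{\Oh(k^2\log k)}$), the paper proves that each remaining hard constraint --- ``line $\ell$ lies above/below block $B_1^i$ as a function of the position of $p_1$'' --- can be written as a downwards-closed relation composed with $\Oh(k)$ \emph{segment reversions} along a root-to-leaf path in a tree of size $\Oh(k)$ (Lemma~\ref{lem:imp-depth} via Lemma~\ref{lem:leaders}). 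This casts the residual problem as a \textsc{Forest CSP} instance of apparent size $\Oh(k^2)$, for which the paper designs a separate FPT algorithm (Lemma~\ref{lem:csp}) based on branching that contracts edges and merges trees. None of this is a feasibility check over a fixed finite enumeration; it is an algorithm on a CSP whose domains still have size $\Theta(n)$.
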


\paragraph{Motivation and related work.}
Studying \optdis\ is motivated from three contexts: machine learning, geometric covering problems, and the theory of CSPs.

First, discretization is a preprocessing technique in machine learning in which continuous features of the elements of a data set are discretized.
This is done in order to make the data set amenable to classification by clustering algorithms that work only with discrete features, to speed up algorithms whose running time is sensitive to the number of different feature values, or to improve the interpretability of learning results~\cite{liu_discretization_2002,yang_discretization_2010,witten_data_2011,garcia_survey_2013}.
Various discretization techniques have been studied and are implemented in standard machine learning frameworks~\cite{witten_data_2011,garcia_survey_2013}.
\optdis\ is a formalization of the so-called supervised discretization~\cite{yang_discretization_2010} for two features and two classes; herein, we are given a data set labeled with classes and want to discretize each continuous feature into a minimum number of distinct values so as to avoid mapping two data points with distinct classes onto the same discretized values~\cite{chlebus_finding_1998,froese_finegrained_2018}.
Within this context, fixed-parameter tractability of \optdis{} was posed as an open question by Froese~\cite[Section 5.5]{froese_finegrained_2018}.

Second, being a fundamental geometric problem, \optdis\ and related problems have been studied for a long time in this context.
Indeed, Megiddo~\cite[Proposition 4]{Megiddo88} showed \optdis\ to be \NP-complete in 1988, preceding a later independent \NP-completeness proof by Chlebus and Nguyen~\cite[Corollary 1]{chlebus_finding_1998}.
\optdis\ can be seen as a geometric set covering problem:
A set covering problem is, given a universe $U$ and a family $\mathcal{F}$ of subsets of this universe, to cover the universe~$U$ with the smallest number of subsets from $\mathcal{F}$.
In a geometric covering problem the universe $U$ and the family $\mathcal{F}$ have some geometric relation.
For instance, in \optdis\ the universe $U$ is the set of lines we may select and the family $\mathcal{F}$ is the set of all rectangles that are defined by taking each pair of points $w_1 \in W_1$ and $w_2 \in W_2$ as two antipodal vertices of a rectangle.

Geometric covering problems arise in many different applications such as reducing interference in cellular networks, facility location, or railway-network maintenance and are subject to intensive research (see e.g.\ \cite{agarwal_efficient_1998,giannopoulos_parameterized_2008,DomFRS12,heggernes_fixedparameter_2013,giannopoulos_fixedparameter_2013,bringmann_hitting_2016,ashok_exact_2018,agrawal_parameterized_2020,bonnet_parameterized_2020,banik_parameterized_2020}).
Focusing on the parameterized complexity of geometric covering problems, one can get the impression that they are fixed-parameter tractable when the elements of the universe are pairwise disjoint~\cite{langerman_covering_2005,giannopoulos_parameterized_2008,giannopoulos_fixedparameter_2013,heggernes_fixedparameter_2013} but W[1]-hard when they may non-trivially overlap~\cite{giannopoulos_fixedparameter_2013,DomFRS12,BonnetGL17}.
A particular example from the latter category is the well-studied \textsc{Rectangle Stabbing} problem, wherein the universe $U$ is a set of axis-parallel lines and the family $\mathcal{F}$ a set of axis-parallel rectangles~\cite{hassin_approximation_1991,even_algorithms_2008,DomFRS12,giannopoulos_fixedparameter_2013}.
Similarly closely related but also W[1]-hard is the variant of \optdis\ in which the lines are allowed to have arbitrary slopes, as shown by Bonnet, Giannopolous, and Lampis~\cite{BonnetGL17}.
They also proved that \optdis\ is fixed-parameter tractable under a larger parameterization, namely the cardinality of the smaller of the sets~$W_1$ and $W_2$.
They conjectured that \optdis\ is fixed-parameter tractable with respect to~$k$, which we confirm here.

Approximation algorithms for geometric covering problems have been studied intensively, see the overview by Agarwal and Pan~\cite{agarwal_nearlinear_2020}.
For \optdis, C{\u{a}}linescu, Dumitrescu, Karloff, and Wan~\cite{CalinescuDKW05} obtained a factor-2 approximation in polynomial time, which we use as a subprocedure in our algorithm.
However, they also showed that \optdis\ is \APX-hard.

Third, it turns out that our work is relevant in the framework of determining efficient algorithms for special cases of the \textsc{Constraint Satisfaction Problem} (CSP).
In the proof of Theorem~\ref{thm:main} we ultimately reduce to special form of a CSP.
This special form is over ordered domains of unbounded size and the constraints are binary, that is, they each bind two variables, and they have some restricted form.
In general, such CSPs capture the \textsc{Multicolored Clique} problem~\cite{fellows_parameterized_2009} and are therefore W[1]-hard to solve.
In contrast, we show that our special case is fixed-parameter tractable with some explicit algorithm and an efficient running time.
To the best of our knowledge, this is one of the first approaches that showed a particular problem to be fixed-parameter tractable by a reformulation as a hand-crafted CSP over a large ordered domain, which is then shown to be tractable.
We believe this methodology has the potential to have broader impact. 
Indeed, a more recent work~\cite{multicut-3-term} shows an alternative and more general way how to solve a wider class of binary CSPs, but it comes with the price of a large non-explicit running-time.
The high (but still tractable) running time is caused by use of a meta-theorem: the framework of first-order model checking for structures of bounded twin-width~\cite{BonnetKTW22,tww4}.
In contrast, our algorithm is combinatorial and singly exponential.

\paragraph{Our approach.}
In the proof of Theorem~\ref{thm:main}, we proceed as follows.
Let $(X_0,Y_0)$ be an approximate solution (that can be obtained via, e.g., the iterative compression 
technique or the known polynomial-time $2$-approximation algorithm~\cite{CalinescuDKW05}). 
Let $(X,Y)$ be an optimal solution.
For every two consecutive elements of~$X_0$, we guess (by trying all possibilities) how many (if any) elements of $X$ are between
them and similarly for every two consecutive elements of $Y_0$.
This gives us a general picture of the \emph{layout} of the lines of $X$, $X_0$, $Y$, and $Y_0$.

Consider all $\Oh(k^2)$ \emph{cells} in which the vertical lines with $x$-coordinates from $X_0 \cup X$
and the horizontal lines with $y$-coordinates from $Y_0 \cup Y$ partition the plane.
Similarly, consider all $\Oh(k^2)$ \emph{supercells} in which the vertical lines with $x$-coordinates from $X_0$
and the horizontal lines with $y$-coordinates from $Y_0$ partition the plane.
Every cell is contained in exactly one supercell.
For every cell, guess whether it is empty or contains a point of $W_1 \cup W_2$. 
Note that the fact that $(X_0,Y_0)$ is a solution implies that every supercell contains only 
points from $W_1$, only points from $W_2$, or is empty.
Hence, for each nonempty cell, we can deduce
whether it contains only points of $W_1$ or only points of $W_2$.
Check Figure~\ref{fig:basic} for an example of such a situation.

\begin{myfig}
\begin{center}
\includegraphics{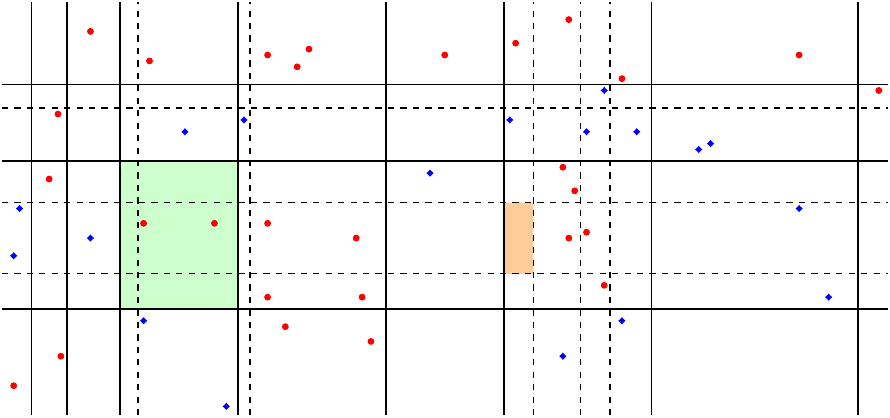}
\caption{Example of a basic situation.
An approximate solution $(X_0,Y_0)$ is denoted by solid lines, an optimal solution $(X,Y)$ by dashed lines.
A supercell is marked in green and a cell in orange.}
\label{fig:basic}
\end{center}
\end{myfig}

We treat every element of $X \cup Y$ as a variable with a domain being all rationals between
the closest lines of $X_0$ or $Y_0$, respectively.

If we know that there exists an optimal solution $(X,Y)$ such that between every two consecutive
elements of $X_0$ there is at most one element of $X$ and between
every two consecutive elements of $Y_0$ there is at most one element of $Y$, we can proceed as above.
For every two consecutive elements of $X_0$, we guess (trying both possibilities) whether there is an element of $X$ between them and similarly for every two consecutive elements of $Y_0$.
This ensures that every cell has at most two borders coming from $X \cup Y$. 
Also, as before, for every cell, we guess whether it is empty.
Thus, for every cell $\mathcal{C}$ that is guessed to be empty and every point $p$ in the supercell containing $\mathcal{C}$ we add a constraint binding
the at most two borders of $\mathcal{C}$ from $X \cup Y$, asserting that $p$ does not land in $\mathcal{C}$.

The crucial observation is that the instance of CSP constructed in this manner admits the median as 
a so-called majority polymorphism, and such CSPs are polynomial-time solvable (for more on majority polymorphisms, which are ternary near-unanimity polymorphisms, see e.g.\ \cite{BartoKW17} or~\cite{CarbonnelC16}).
We remark that Agrawal et al.~\cite{agrawal_parameterized_2020} recently obtained a fixed-parameter algorithm for the \textsc{Art Gallery} problem by reducing it to an equivalent CSP variant, which they called \textsc{Monotone 2-CSP} and directly proved to be polynomial-time solvable.

However, the above approach breaks down if there are multiple lines of $X$ between two consecutive
elements of $X_0$. One can still construct a CSP instance with variables corresponding
to the lines of $X \cup Y$ and constraints asserting that the content of the cells is as we guessed
it to be. Nonetheless, it is possible to show that the constructed CSP instance no longer admits a majority polymorphism.

To cope with that, we perform an involved series of branching and color-coding steps on the instance to clean up the structure of the constructed constraints and obtain a tractable CSP instance.
In Section~\ref{sec:csp}, we introduce the corresponding special CSP variant and prove its tractability via an explicit yet nontrivial branching algorithm.
As already mentioned, the tractability of the obtained CSP instance follows also from recent arguments based on the twin-width; see~\cite{multicut-3-term}. 
We provide a more detailed explanation of this approach later in a paragraph in \cref{sec:overview}.

\medskip
\paragraph{Organization of the paper.}
First, we give a detailed overview of the algorithm in Section~\ref{sec:overview}.
The overview repeats a number of definitions and statements from the full proof. 
Thus, the reader may choose to read only the overview, to get some insight into the main ideas of the proof, or choose to skip the
overview entirely and start with Section~\ref{sec:segments} directly.
A full description of the algorithm for the auxiliary CSP problem is in Sections~\ref{sec:segments} and~\ref{sec:csp}.
We conclude the proof of Theorem~\ref{thm:main} in Section~\ref{sec:redblue}.
There we give an algorithm that constructs a branching tree such that at each branch, the algorithm tries a limited number of options for some property of the solution.
At the leaves we will then assume that the chosen options are correct and reduce the resulting restricted instance of \optdis\ to the auxiliary CSP from Section~\ref{sec:csp}. 
Finally, a discussion of future research directions is in Section~\ref{sec:conclusions}.

\section{Overview}\label{sec:overview}
Let $(W_1,W_2,k)$ be an \optdis{} instance given as input.

\medskip
\paragraph{Layout and cell content.}
As discussed in the introduction, we start by computing a $2$-approximate solution $(X_0,Y_0)$ 
and, in the first branching step, guess the \emph{layout} of $(X_0,Y_0)$ and the sought solution $(X,Y)$, that is,
how many elements of $X$ are between two consecutive elements of $X_0$ and how many elements of $Y$
are between two consecutive elements of~$Y_0$.
By adding a few artificial elements to $X_0$ and $Y_0$ that bound the picture, we can assume that
the first and the last element of $X_0 \cup X$ is from $X_0$
and the first and the last element of $Y_0 \cup Y$ is from $Y_0$. 
Also, simple discretization steps ensure that all elements of $X_0$, $Y_0$, $X$, and $Y$
are integers and $X_0 \cap X = \emptyset$, $Y_0 \cap Y = \emptyset$. 

In the layout, we have $\Oh(k^2)$ \emph{cells} in which the vertical lines 
with $x$-coordinates from $X_0 \cup X$
and the horizontal lines with $y$-coordinates from $Y_0 \cup Y$ partition the plane.
If we only look at the way in which the lines from $X_0$ and $Y_0$ partition the plane, 
we obtain $\Oh(k^2)$ \emph{\apx-supercells}.
If we only look at the way in which the lines from $X$ and $Y$ partition the plane, 
we obtain $\Oh(k^2)$ \emph{\opt-supercells}.
Every cell is in exactly one \apx-supercell and exactly one \opt-supercell.

A second branching step is to guess, for every cell, whether it is empty or not. 
Note that, since $(X_0,Y_0)$ is an approximate solution, a nonempty cell
contains points only from $W_1$ or only from $W_2$, and we can deduce which one is
the case from the instance.
At this moment we verify whether the guess indeed leads to a solution $(X,Y)$:
We reject the current guess if there is an \opt-supercell containing both a cell guessed
to have an element of $W_1$ and a cell guessed to have an element of $W_2$.
Consequently, if we choose $X$ and $Y$ to
ensure that the cells guessed to be empty are indeed empty,
$(X,Y)$ will be a solution to the input instance.

\medskip
\paragraph{CSP formulation.}
We phrase the problem resulting from adding the information guessed above as a CSP instance as follows.
For every sought element $x$ of~$X$,
    we construct a variable whose domain is the set of all integers between the (guessed) closest elements of $X_0$.
Similarly, for every sought element $y$ of $Y$,
we construct a variable whose domain
is the set of all integers between the (guessed) closest elements of $Y_0$.
If between the same two elements of $X_0$ there are multiple elements of $X$, 
we add binary constraints between them that force them to be ordered as we planned in the layout, 
and similarly for $Y_0$ and~$Y$. 
Furthermore, for every cell $\cell$ guessed to be empty and for every point $p$
in the \apx-supercell containing $\cell$, we add a constraint that binds the 
borders of $\cell$ from $X \cup Y$ asserting that $p$ is not in $\cell$. 

Clearly, the constructed CSP instance is equivalent to choosing the values of $X \cup Y$
such that the layout is as guessed and every cell that is guessed to be empty is indeed
empty. This ensures that a satisfying assignment of the constructed CSP instance
yields a solution to the input \optdis{} instance and, in the other direction,
if the input instance is a yes-instance and the guesses were correct, the constructed
CSP instance is a yes-instance. It ``only'' remains to study the tractability of the class
of constructed CSP instances.

\begin{myfig}
\begin{center}
\includegraphics{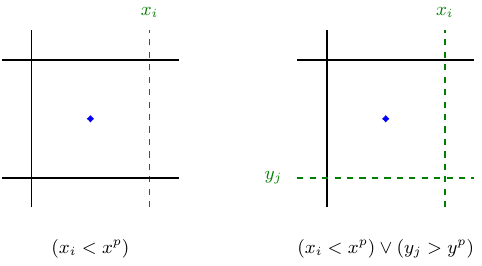}
\caption{Constraints for blocks with one or two borders from $X \cup Y$.}
\label{fig:over:constraints}
\end{center}
\end{myfig}

\medskip
\paragraph{The instructive polynomial-time solvable case.}
As briefly argued in the introduction, if no two elements of $X$ are between two consecutive elements of $X_0$ and no two elements of $Y$ are between two consecutive elements of $Y_0$, then the CSP instance admits the median as a majority polymorphism and therefore is polynomial-time solvable~\cite{BartoKW17}.%
\footnote{We do not define majority polymorphisms here, since they are only instructive in getting an intuition for the difficulties of the problem and will not be needed in the formal proof.
  Intuitively, a CSP with ordered domains admits the median as a majority polymorphism if for any three satisfying assignments, taking for each variable the median of the three assigned values yields another satisfying assignment.
  A polynomial-time algorithm for the special case of CSP considered in this paragraph can also be obtained by a reduction to satisfiability of 2-CNF SAT formulas.}
Let us have a more in-depth look at this argument.

In the above described case, every cell has at most two borders from $X \cup Y$ and thus every introduced
constraint is of arity at most $2$.
As an example, consider
a cell $\cell$ between $x \in X_0$ and $x_i \in X$ and between $y_j \in Y$
and $y \in Y_0$ that is guessed to be empty. For every point $p = (x^p,y^p)$ in the \apx-supercell
containing $\cell$, we add a constraint that $p$ is not in $\cell$.
Observe that this constraint is indeed equivalent to $(x_i < x^p) \vee (y_j > y^p)$, 
        see the right panel of Figure~\ref{fig:over:constraints}.
It is straightforward to verify that a median of three satisfying assignments in this constraint
yields a satisfying assignment as well. 
That is, the median is a majority polymorphism.
Observe also that the constraints yielded for cells $\cell$ with different configurations 
of exactly two borders from $X_0 \cup Y_0$ and exactly two borders from $X \cup Y$ yield similar constraints. 

As a second example, consider a cell $\cell$ with exactly one border from $X \cup Y$,
say between $x \in X$ and $x_i \in X_0$ and between $y,y' \in Y_0$; see the left
panel of Figure~\ref{fig:over:constraints}.
If $\cell$ is guessed to be empty, then for every $p$ in the \apx-supercell containing $\cell$
we add a constraint that $p = (x^p,y^p)$ is not in $\cell$. This constraint
is a unary constraint on $x_i$, in this case $(x_i < x^p)$. 
We can replace this constraint with a filtering step that removes from the domain of $x_i$
the values that do not satisfy it. 

This concludes the sketch why the problem is tractable if there are no two elements
of $X$ between two consecutive elements of $X_0$ and 
no two elements of $Y$ between two consecutive elements of $Y_0$. 

\begin{myfig}
\begin{center}
\includegraphics{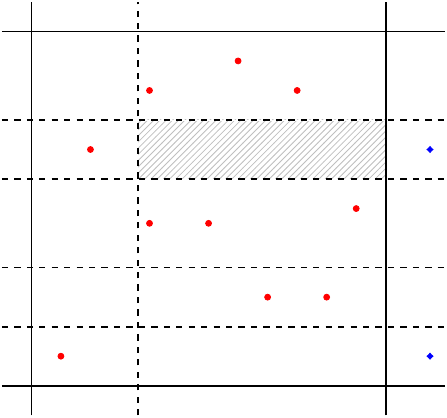}
\caption{Problematic empty cell with three borders from $X \cup Y$.}\label{fig:over:nocorner}
\end{center}
\end{myfig}

\paragraph{Difficult constraints.}
Let us now have a look at what breaks down in the general picture.

First, observe that monotonicity constraints, constraints ensuring that
the lines are in the correct order, are constraints of the form $(x_i < x_{i+1})$
and $(y_i < y_{i+1})$, and thus are simple.
To see this, observe either that the median is again a majority polymorphism for them,
or that $(x_i < x_{i+1})$ can be expressed as a conjunction of constraints
$(x_i \leq a) \vee (x_{i+1} > a)$ over all $a$ in the domain of $x_i$ (thus, we get an instance of a 2-CNF formula).

Second, consider a cell $\cell$ that has all four borders from $X \cup Y$. 
This cell is actually an \opt-supercell as well, contained in a single \apx-supercell. 
Observe that, since $(X_0,Y_0)$ is a solution, this \opt-supercell will never contain 
both a point of $W_1$ and a point of $W_2$, regardless of the choices of the values
of the borders of $\cell$ within their domains. 
Thus, we may ignore the constraints for such cells.

The only remaining cells are cells with three borders from $X \cup Y$. 
As shown on Figure~\ref{fig:over:nocorner}, they can be problematic: In this particular example,
we want the striped cell to be empty of red points, to allow its neighbor to the right
to contain a blue point.
In the construction above, such a cell yields complicated ternary constraints 
on its three borders in~$X \cup Y$. 

\begin{myfig}
\begin{center}
\includegraphics{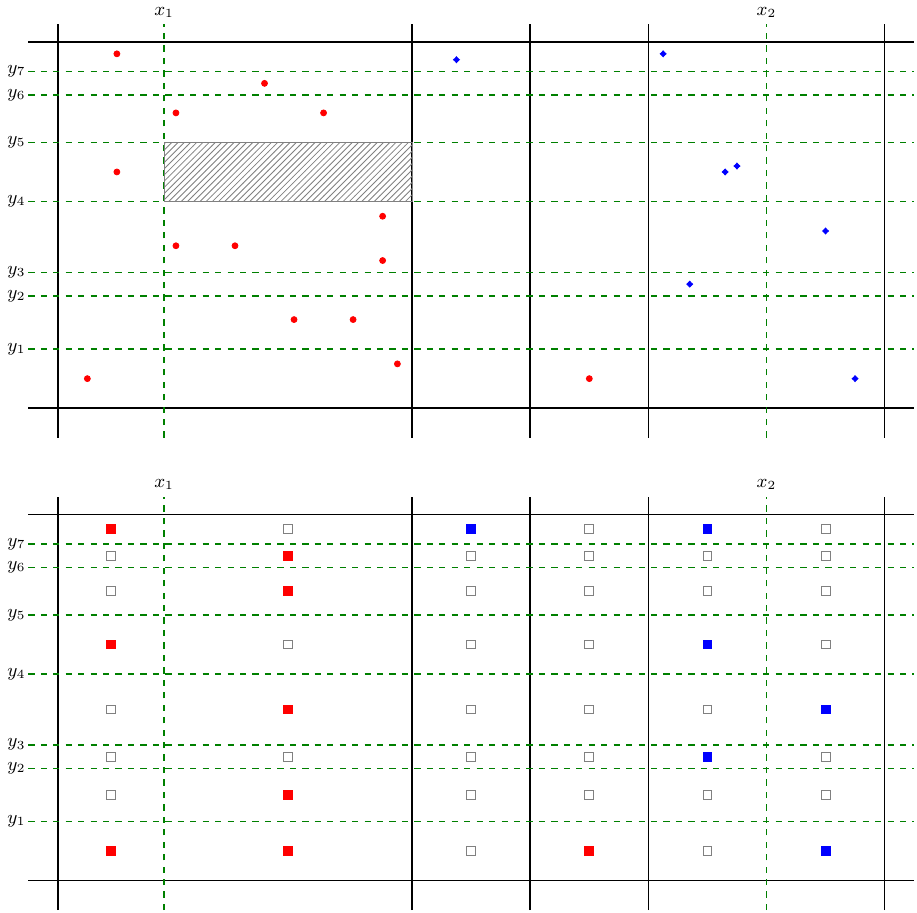}%
\caption{A more complicated scenario where a problematic cell (striped) occurs.
The panel below represents the guessed cell content.}\label{fig:over:binary}%
\end{center}%
\end{myfig}

\medskip
\paragraph{Reduction to binary constraints.}
Our first step to tackle cells with three borders from $X \cup Y$ (henceforth called \emph{ternary cells}) is to break down 
the constraints imposed by them into binary constraints. 

Consider the striped empty cell as in Figure~\ref{fig:over:nocorner}.
It has three
borders from $X \cup Y$: two from $Y$ and one from $X$. 
The two borders from $Y$ are two consecutive elements from $Y$ between the same two elements of $Y_0$ and the border from $X$ is the \emph{last} element of $X$ in the segment\footnote{Throughout, by segment we mean a subset of consecutive elements.} between two elements of $X_0$.
Thus, a constraint imposed by a ternary cell always involves two consecutive elements
of $X$ or $Y$ and first or last element of $Y$ resp. $X$, where first/last refers to a
segment between two consecutive elements of $Y_0$ resp.~$X_0$. 

Consider now an example in Figure~\ref{fig:over:binary}: The top panel represents
the solution, whereas the bottom panel represents the (correctly guessed) layout and cell content.
Assume now that, apart from the layout and cell content, we have somehow learned the value
of $x_1$ (the position of the first vertical green line). 
Consider the area between $x_1$ and $x_2$ in this figure between the top and bottom black line.
Then the set of red points between $x_1$ and $x_2$ is fixed;
moving $x_2$ around only changes the set of blue points. 
Furthermore, the guessed information about layout and cell content determines, if one scans
the area in question from top to bottom, how many \emph{alternations} of \emph{blocks} 
of red and blue points one should
encounter.
For example, scanning the area between $x_1$ and $x_2$ in Figure~\ref{fig:over:binary} from top to bottom
one first encounters a block of blue points, then a block of red points,
then blue, red, blue, and red in the end.

A crucial observation is that, for a fixed value of $x_1$, the set of values of $x_2$
that give the correct alternation (as the guessed cell content has predicted) 
is a segment: Putting $x_2$ too far to the left gives too few alternations due to too few blue points 
and putting $x_2$ too far to the right gives too many alternations due to too many blue points
in the area of interest. 
Furthermore, as the value of $x_1$ moves from left to right, the number of red points decrease,
and the aforementioned ``allowed interval'' of values of $x_2$ moves to the right as well
(one needs more blue points to give the same alternation in the absence of some red points). 

\begin{myfig}
\begin{center}
\includegraphics{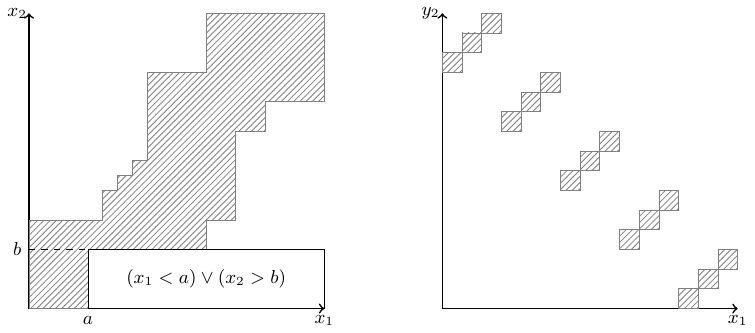}
\caption{Left panel: An example of the space of allowed pairs of values (striped)
  for a constraint binding $x_1$ and $x_2$, asserting that the alternation
    of the area in between is as guessed.
  Such a constraint can be expressed as a conjunction of simple constraints
  as in the figure.
  Right panel: An example of the constraint binding $x_1$ (left vertical green line)
  and $y_2$ (middle horizontal green line) in Figure~\ref{fig:over:reversion},
      which is not that simple.}\label{fig:over:binary2}
\end{center}
\end{myfig}

In the scenario as in Figure~\ref{fig:over:binary}, let us introduce a binary constraint
binding $x_1$ and $x_2$, asserting that the alternation in the discussed area 
is as the cell-content guess predicted. 
From the discussion above we infer that this constraint is of the same type
as the constraints for cells with two borders of $X \cup Y$ and thus simple: It
can be expressed as a conjunction of a number of clauses of the type
$(x_1 < a) \vee (x_2 > b)$ and $(x_1 > a) \vee (x_2 < b)$ for constants $a$ and $b$.
See Figure~\ref{fig:over:binary2}. 

The second crucial observation is that, if one fixes the value of $x_1$, 
then not only does this determine the set of red points in the discussed area, but also how they are partitioned into blocks in the said alternation. Indeed, moving $x_2$ to the right only adds more blue points,
but since the alternation is fixed, more blue points join existing blue blocks instead
of creating new blocks (as this would increase the number of alternations). 
Hence, the value of $x_1$ itself implies the partition of the red points into blocks. 

Now consider the striped cell in Figure~\ref{fig:over:binary}, between lines $y_4$ and $y_5$. 
It is guessed to be empty. 
Instead of handling it with a ternary constraint as before, we handle it as follows:
Apart from the constraint binding $x_1$ and $x_2$ asserting that the alternation is as guessed,
we add
a constraint binding $x_1$ and $y_4$, asserting that, for every fixed value of $x_1$,
   line~$y_4$ is above the second (from the top) red block, and
a constraint binding $x_1$ and $y_5$, asserting that for every fixed value of $x_1$,
   line~$y_5$ is below the first (from the top) red block.
It is not hard to verify that the introduced constraints, together with monotonicity constraints
$(y_i<y_{i+1})$, imply that the striped cell is empty. 

\begin{myfig}
\begin{center}
\includegraphics{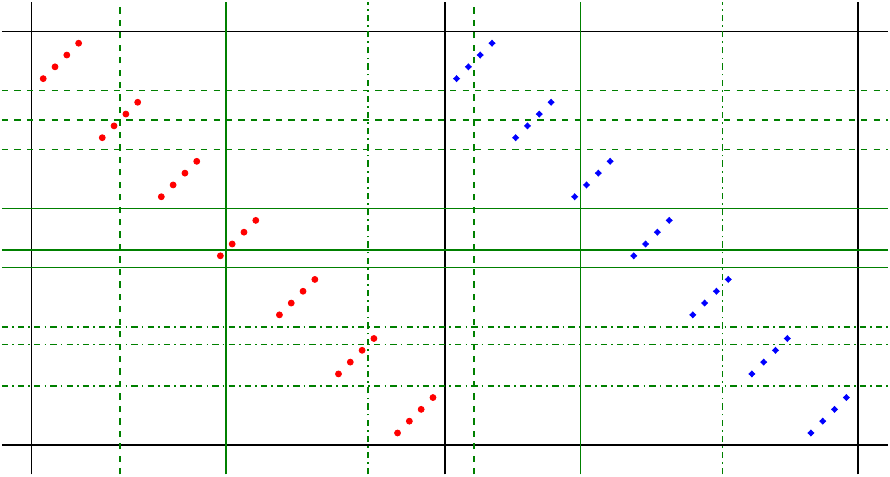}
\caption{An example where the constraint between the left vertical solution line in~$X$ and
  the middle horizontal solution line in~$Y$ is complicated (both green). 
  The three different line styles represent three different possible solutions
  (among many others). 
  Moreover, each of the three depicted solutions has the same alternation (from top to bottom: Blue, Red, Blue, Red).
}\label{fig:over:reversion}
\end{center}
\end{myfig}

Thus, it remains to understand how complicated the constraints binding $x_1$ and $y_4/y_5$ are.
Unfortunately, they may not have the easy ``tractable'' form as the constraints described
so far (e.g., like in Figure~\ref{fig:over:binary2}). 
Consider an example in Figure~\ref{fig:over:reversion}, where a number of possible positions of
the solution lines in $X \cup Y$ (green) have been depicted with various line styles.
The constraint between the left vertical line and the middle horizontal line 
has been depicted in the right panel of Figure~\ref{fig:over:binary2}.
For such a constraint, the median is not necessarily a majority polymorphism.
In particular, such a constraint cannot be expressed in the style in which we expressed all other constraints so far. 

Our approach is now as follows:
\begin{enumerate}
\item Introduce a class of CSP instances that allow constraints both
as in the left and right panel of Figure~\ref{fig:over:binary2} and show that
the problem of finding a satisfying assignment is fixed-parameter tractable
when parameterized by the number of variables.
\item By a series of involved branching and color coding steps, reduce
the \optdis{} instance at hand to a CSP instance from the aforementioned tractable class.
In some sense, our reduction shows that 
the example of Figure~\ref{fig:over:reversion} is the 
most complicated picture one can encode in an \optdis{} instance. 
\end{enumerate}
In the remainder of this overview we focus on the first part above. 
As we shall see in a moment, there is strong resemblance of the introduced class to the constraints
of Figure~\ref{fig:over:binary2}.
The highly technical second part, spanning over most of Section~\ref{sec:redblue}, takes
the analysis of Figure~\ref{fig:over:binary} as its starting point and investigates deeper
how the red/blue blocks change if one moves the lines $x_1$ and $x_2$ around. 

\medskip
\paragraph{Tractable CSP class.}
An instance of \textsc{Forest CSP} consists of a forest $G$, where the vertices $V(G)$
are variables, a domain $[n_T] = \{1,2,\ldots,n_T\}$ for every connected component $T$ of $G$,
shared among all variables of $T$, and a number of constraints, split into two families: segment-reversion and downwards-closed constraints.

A permutation $\pi$ of $[n]$ is a \emph{segment reversion}, if its matrix representation looks for example like this:
$$\begin{matrix}
0 & 0 & 1 & 0 & 0 & 0 & 0 & 0 & 0 & 0  \\
0 & 1 & 0 & 0 & 0 & 0 & 0 & 0 & 0 & 0  \\
1 & 0 & 0 & 0 & 0 & 0 & 0 & 0 & 0 & 0  \\
0 & 0 & 0 & 0 & 0 & 0 & 1 & 0 & 0 & 0  \\
0 & 0 & 0 & 0 & 0 & 1 & 0 & 0 & 0 & 0  \\
0 & 0 & 0 & 0 & 1 & 0 & 0 & 0 & 0 & 0  \\
0 & 0 & 0 & 1 & 0 & 0 & 0 & 0 & 0 & 0  \\
0 & 0 & 0 & 0 & 0 & 0 & 0 & 1 & 0 & 0  \\
0 & 0 & 0 & 0 & 0 & 0 & 0 & 0 & 0 & 1  \\
0 & 0 & 0 & 0 & 0 & 0 & 0 & 0 & 1 & 0  
\end{matrix}$$
Formally, $\pi$ is a segment reversion if there exist integers $1 = a_1 < a_2 < \ldots < a_r = n+1$
such that for every $x \in [n]$, if $i \in [r-1]$ is the unique index such that $a_i \leq x < a_{i+1}$, then 
$\pi(x) = a_{i+1} - 1 - (x - a_i)$. That is, $\pi$ reverses a number of disjoint segments in the domain~$[n]$. 
Note the resemblance of the matrix above and the right panel of Figure~\ref{fig:over:binary2}.

With every edge $e = uv \in E(G)$ in a component $T$, the \textsc{Forest CSP} instance contains a segment reversion~$\pi_e$ of $[n_T]$ and a constraint asserting
that $\pi_e(u) = v$. Note that segment reversions are involutions, so $\pi_e(u) = v$ is equivalent to $\pi_e(v) = u$.
One can think of the whole component $T$ of $G$ as a single super-variable: Setting the value of a single variable in a component $T$
propagates the value over the segment reversions on the edges to the entire tree. Thus, every tree $T$ has $n_T$ different allowed assignments.

\looseness=-1
A relation $R \subseteq [n_1] \times [n_2]$ is \emph{downwards-closed} if $(x,y) \in R$ and $(x' \leq x) \wedge (y' \leq y)$ implies $(x',y') \in R$. 
For every pair of two distinct vertices $u,v \in V(G)$, a \textsc{Forest CSP} instance may contain a downwards-closed relation $R_{u,v}$ and a constraint binding $u$ and $v$ asserting that $(u,v) \in R_{u,v}$. Such a constraint is henceforth called a \emph{downwards-closed constraint}. 
Note that an intersection of two downwards-closed relations is again downwards-closed; thus it would not add more expressive power to the problem to allow multiple downwards-closed constraints
between the same pair of variables.

\looseness=-1
Observe that if one for every $u \in V(G)$ adds a clone $u'$, connected to $u$ with an edge $uu'$ with a segment reversion $\pi_{uu'}$ that reverses the whole domain, 
then with the four downwards closed constraints in $\{u,u'\} \times \{v,v'\}$ one can express any constraint as in the left panel of Figure~\ref{fig:over:binary2}. 

This concludes the description of the \textsc{Forest CSP} problem that asks for a satisfying assignment to the input instance. 

\paragraph{Solving the CSP formulation via twin-width.}
After this work appeared at SODA 2021~\cite{soda-version}, a new point of view on such CSP problems has been developed~\cite{multicut-3-term} which can be used to obtain a fixed-parameter algorithm for \textsc{Forest CSP} as follows.
In this framework, we consider CSPs with domain $[n]$, binary constraints, and parameterized by both the number of variables and constraints.
Every constraint is given as a binary $n \times n$ matrix $M$ and a first-order formula $\phi$ that has two free variables, may reference values in the matrix $M$, and may include comparisons over integers.
A constraint is satisfied by values $(x,y)$ if $\phi(x,y)$ is satisfied.

For example, a constraint as in the left panel of Figure~\ref{fig:over:binary2} can be expressed as follows, that is, a constraint that is a conjuction of an arbitrary number of constraints $\mathcal{C}$ of the form $(x < a) \vee (y > b)$.
First, we select a subset $\mathcal{C}' \subseteq \mathcal{C}$ of \emph{maximal} constraints, that is, constraints
$(x < a) \vee (y > b)$ for which there is no other constraint $(x < a') \vee (y > b')$ with $a' \leq a$ and $b' \geq b$ 
(note that the latter constraint implies the former). 
We initially set all values of $M$ to $0$ and for every constraint $(x < a) \vee (y < b)$ in $\mathcal{C}'$ we set $M[a,b]=1$.
Finally, we express the conjuction of the constraints of $\mathcal{C}$ as the formula
\[ \phi(x,y) = \neg \exists_a \exists_b M[a,b]=1 \wedge x \geq a \wedge y \leq b. \]

We can solve a CSP as above with twin-width related machinery; to explain it we need the following notions: 
A \emph{$k$-partition} of $[n]$ is a partition of $[n]$ into $k$ pairwise disjoint nonempty intervals.
A \emph{$k$-grid minor} of a $0$-$1$ $n \times n$ matrix $M$ is a pair $((I_i)_{i=1}^k, (J_j)_{j=1}^k)$ of $k$-partitions
of $[n]$ such that for every $1 \leq i,j \leq k$ there exists $x \in I_i$ and $y \in J_j$ such that $M[x,y] = 1$. 
The \emph{maximum grid minor size} of $M$ is the maximum $k$ such that $M$ admits a $k$-grid minor. 

An insight of Hatzel et al.~\cite[Theorem 3.1]{multicut-3-term} is that the twin-width machinery of Bonnet et al.~\cite{tww4} can be used
to prove fixed-parameter tractability of the discussed class of CSP instances, when not only the number of variables
and constraints is bounded in parameter, but also the sizes of the used formulae and the maximum grid minor size of the used
matrices~$M$ are bounded.%
\footnote{The article~\cite{multicut-3-term} formally states the result for a more restricted class of CSPs (that still contains our \textsc{Forest CSP})
  but their arguments actually prove tractability of CSPs as described here.}
One can observe that the permutation matrix of a segment reversion has grid minor size at most $2$, 
while the matrix $M$ in the aforementioned example of the encoding of a conjunction of constraints
of the form $(x < a) \vee (y > b)$ has grid minor size at most $1$ (thanks to the subselection
of the set $\mathcal{C}'$).
Combining these observations with Hatzel et al.s' insight we thus immediately obtain fixed-parameter tractability of \textsc{Forest CSP}, parameterized by the number of variables and constraints.
However, the usage of the meta-theorem of~\cite{tww4} results in a very bad dependency on the parameter in the running time bound of the obtained algorithm.

\medskip
\paragraph{Our explicit algorithm for \textsc{Forest CSP}.}
As a preprocessing step, note that we can assume that no downwards-closed constraint binds two variables of the same component $T$.
Indeed, if $u$ and $v$ is in the same component $T$ and a constraint with a downwards-closed relation $R_{u,v}$ is present, then we can iterate
over all $n_T$ assignments to the variables of $T$ and delete those that do not satisfy $R_{u,v}$. 
(Deleting a value from a domain requires some tedious renumbering of the domains, but does not lead us out of the \textsc{Forest CSP} class of instances.)

Similarly, we can assume that for every downwards-closed constraint binding $u$ and $v$ with relation $R_{u,v}$, for every possible value $x$ of $u$,
there is at least one satisfying value of $v$ (and vice versa), as otherwise one can delete $x$ from the domain of $u$ and propagate.

First, guess whether there is a variable $v$ such that setting $v=1$ extends to a satisfying assignment. If yes, guess such $v$ and simplify the instance, deleting the whole component
of $v$ and restricting the domains of other variables accordingly.

Second, guess whether there is an edge $uv \in E(G)$ such that there is a satisfying assignment where the value of $u$ or the value of $v$
is at the endpoint of a segment of the segment reversion $\pi_{uv}$. If this is the case, guess the edge $uv$, guess whether the value of $u$ or $v$ is at the endpoint,
and guess whether it is the left or right endpoint of the segment. Restrict the domains according to the guess: If, say, we have guessed that the value of $u$
is at the right endpoint of a segment of $\pi_{uv}$, restrict the domain of $u$ to only the right endpoints of segments of $\pi_{uv}$ and propagate the restriction 
through the whole component of $u$. 
The crucial observation now is that, due to this step, $\pi_{uv}$ becomes an identity permutation. Thus, we can contract the edge $uv$, reducing the number of variables by one.

In the remaining case, we assume that for every satisfying assignment, no variable is assigned~$1$ and no variable is assigned a value that is an endpoint
of a segment of an incident segment reversion constraint. 
Pick a variable $a$ and look at a satisfying assignment~$\phi$ that minimizes $\phi(a)$. 
Try changing the value of $a$ to $\phi(a)-1$ (which belongs to the domain, as $\phi(a) \neq 1$) and propagate it through the component $T$ containing~$a$. 
Observe that the assumption that no value is at the endpoint of a segment of an incident segment reversion implies that
for every $b \in T$, the value of $b$ changes from~$\phi(b)$ to either $\phi(b)+1$ or $\phi(b)-1$. 

By the minimality of $\phi$, some constraint is not satisfied if we change the value of $a$ to $\phi(a)-1$ and propagate it through $T$. 
This violated constraint has to be a downwards-closed constraint binding $u$ and $v$ with relation $R_{u,v}$ where $|\{u,v\} \cap T| =1$.
Without loss of generality, assume $v \in T$ and $u \notin T$. Furthermore, to violate a downwards-closed constraint, the change to the value of $v$
has to be from $\phi(v)$ to $\phi(v)+1$. 

Let $S$ be the component of $u$.
Define a function $f:[n_S] \to [n_T]$ as $f(x) = \max \{y \in [n_T]~|~(x,y) \in R_{u,v}\}$. 
Note that $\phi(v) = f(\phi(u))$, as with $u$ set to $\phi(u)$, the value $\phi(v)$ for $v$ satisfies $R_{u,v}$
while the value $\phi(v)+1$ violates $R_{u,v}$.
Thus, the value of $v$ (and, by propagation, the values in the entire component $T$) are a function of the value of $u$ (i.e., the value of the component $S$). 
Hence, in some sense, by guessing the violated constraint $R_{u,v}$ we have reduced the number of components of $G$ (i.e., the number of super-variables).

However, adding a constraint ``$\phi(v) = f(\phi(u))$'' leaves us outside of the class of \textsc{Forest CSP}s. 
Luckily, this can be easily but tediously fixed. 
Thanks to the fact that $f$ is a non-increasing function, one can bind $u$ and~$v$ with a segment-reversion constraint that reverses the whole $[n_S]$, 
replace the domains of all nodes of $T$ with $[n_S]$, and re-engineer all constraints binding the nodes of $T$ to the new domain. 

Hence, in the last branch, after guessing the violated downwards-closed constraint $R_{u,v}$, we have reduced the number of components of $G$ by one. 
This finishes the sketch of the fixed-parameter algorithm for \textsc{Forest CSP}.

\medskip
\paragraph{Reduction to \textsc{Forest CSP}.}
Let us now go back to the problematic constraints and sketch how to cast them into
the \textsc{Forest CSP} setting.
Recall Figure~\ref{fig:over:binary}. We discussed that a fixed position of the left vertical
line at $x_1$ fixes a partition of the red points to the right of this line into blocks. 
If we want to keep the striped area between $x_1$, $y_4$, and $y_5$ empty, we can add
two constraints: one between $x_1$ and $y_4$, keeping, for a fixed position of $x_1$, the
line $y_4$ above the red block below it,
and one between $x_1$ and $y_5$, keeping, again for a fixed position of $x_1$, the line $y_5$
below the red block above it.

To simplify these constraints we perform an additional guessing step.
In Figure~\ref{fig:over:binary}, the rightmost red point in every block is the \emph{leader}
of the block and, similarly, the leftmost blue point in every block is the \emph{leader}
of the block. 
Since the position of $x_1$ fixes the red blocks via the guessed alternation, it also fixes the red leaders.
Furthermore, since, for a fixed position of $x_1$, varying $x_2$ only increases or decreases the
blue blocks without merging or splitting them, the position of $x_1$ also determines
the leaders of the blue blocks.
In the branching step, we guess the left-to-right order of the red leaders
and the left-to-right order of the blue leaders, deleting from the domains of $x_1$
and $x_2$ positions, where the order is different than guessed.

To understand what this branching step gives us, let us move to the larger
example in Figure~\ref{fig:over:sliding2}. 
We think of the left vertical line --- denoted in the figure as $p_1$ ---
as sliding continuously right-to-left
from position $x_1'$ to position $x_1$. 
As we slide, the set of red points to the right of the line grows.
To keep the alternation as guessed, the next vertical line (denoted $p_2$) slides as well, 
on the way shrinking the set of blue points between $p_1$ and $p_2$. 

During this slide, a blue block may disappear, like the third or fourth block with yellow background
in Figure~\ref{fig:over:sliding2}. A disappearing blue block merges the two neighboring red blocks
into one. To keep the alternation as guessed, a new red block needs to appear (e.g., 
the second and third block with green background in Figure~\ref{fig:over:sliding2})
at the same time splitting a blue block into two smaller blocks.

The important observation is as follows: Since the order of the leaders is as guessed,
the first blue block to disappear is the block with the rightmost blue leader,
and, in general, the order of disappearance of blue blocks
is exactly the right-to-left order of their leaders.
Naturally, not all blue blocks need to disappear, but only those that have
their leaders between the two considered positions of $p_2$. 

Similarly, if during the move $s$ blue blocks disappear, then the number of red blocks
also decreased by $s$ and exactly $s$ new red blocks need to appear. These newly appearing
blocks will be exactly the $s$ blocks with the leftmost leaders.

Consequently, for every possible scenario as in Figure~\ref{fig:over:sliding2} with lines
$p_1$ and $p_2$ (but different choices of $x_1$ and $x_1'$), the same blocks will start to merge
and appear, only the number of the merged/appearing blocks $s$ can differ. 
In Figure~\ref{fig:over:sliding2}, the third block with brown background
gets first merged into the fourth block with brown background
and then the second block with brown background gets merged into the resulting block, ending up with the fourth block with green background. 
The above description is general: Whenever we start with the line $p_1$, if the alternation
and order of leaders is as we guessed, then sliding $p_1$ to the right will first merge
the third red block into the fourth (and, also merge the sixth into the fifth) 
and then the second into the resulting block.

This merging order allows us to define a rooted auxiliary tree on the red blocks; in this tree, a child block is supposed to merge to the parent block. Figure~\ref{fig:over:sliding2} depicts an exemplary such tree. 

The root $B$ of the auxiliary tree is the block with the rightmost leader; it never gets merged into another block, its leader stays constant, and the block $B$ only absorbs other blocks. Thus, the $y$-coordinate of its top border only increases and the $y$-coordinate of its bottom border only decreases as one moves $p_1$ from right to left.

Consider now a child $B'$ of the root block $B$ and assume $B'$ is above $B$.
As one moves $p_1$ from right to left, 
once in a while $B'$ gets merged into $B$ and a new block $B'$ appears. 
As $B$ only grows, the new appearance of $B'$ is always above the previous one. 
Meanwhile, between the moments when $B'$ is merged into $B$, $B'$ grows, but its leader stays
constant. Hence, between the merges the $y$-coordinate of the bottom border of $B'$ decreases,
only to jump and increase during each merge. 

As one goes deeper in the auxiliary tree, the above behavior can nest. 
Consider for example a child $B''$ of $B'$ that is below $B'$, that is, is between $B'$ and the root $B$. When $B'$ is merged with $B$, $B''$ is merged as well and $B''$ jumps upwards to a new position
together with $B'$. 
However, between the merges of $B$ and $B'$, $B''$ can merge multiple times with $B'$;
every such merge results in $B''$ jumping --- this time downwards --- to a new position.
If one looks at the $y$-coordinate of the top border of $B''$, then:
\begin{itemize}
\item between the merges of $B'$ and $B''$, it increases;
\item during every merge of $B'$ and $B''$, it decreases;
\item during every merge of $B$ and $B'$, it increases.
\end{itemize}
The above reasoning can be made formal into the following: 
If a block is at depth $d$ in the auxiliary tree, then 
the $y$-coordinate of its top or bottom border, as a function of the position of $p_1$,
can be expressed as a composition of a nondecreasing function and $d+\Oh(1)$ segment reversions.
This is the way we cast the leftover difficult constraints into the \textsc{Forest CSP} world.

We remark that, despite substantial effort, we were not able to significantly simplify the arguments used to model our CSP even using the new machinery of the twin-width arguments~\cite{tww4,multicut-3-term}.

\begin{sidewaysfigure*}
  \begin{center}
    \includegraphics{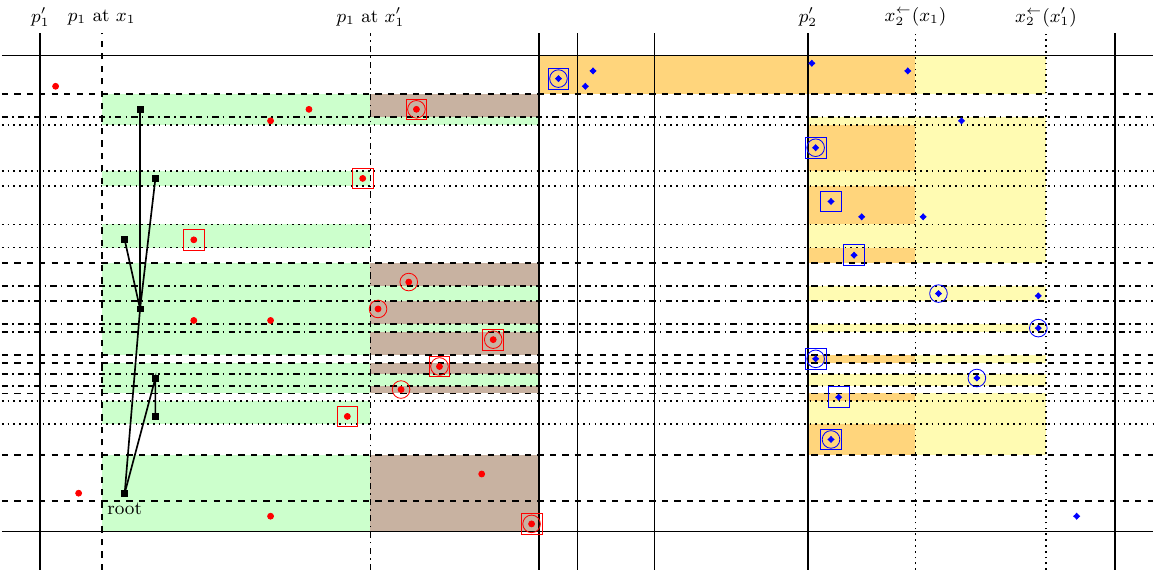}
    \caption{
      More complex example of, where vertical $p_1$ is either at position $x_1$ or at position $x_1'$ for $x_1<x_1'$.
      The horizontal lines for $x_1$ are denoted with dashed and dotted lines.
      The horizontal lines for $x_1'$ are denoted with dashed and dash-dotted lines.
      Blocks given by positioning $p_1$ at $x_1'$ are depicted by brown color with circled leaders and blocks given by positioning $p_1$ at $x_1$ are depicted by the union of green and brown color with squared leaders.
      Let $x_2^\leftarrow(x_1)$ and $x_2^\leftarrow(x_1')$ denote the position of the next vertical line
      $p_2$  that gives the correct alternation if $p_1$ is placed at $x_1$ and $x_1'$, respectively.
      Blocks given by positioning $p_2$ at $x_2^\leftarrow(x_1)$ are depicted by orange color with squared leaders and blocks given by positioning $p_1$ at $x_2^\leftarrow(x_1')$ are depicted by the union of yellow and orange color with circled leaders.
      An auxiliary rooted tree $T$ for red blocks is also visualized.
      There, blocks $B$, $B'$, and $B''$ are marked according to Section~\ref{sec:overview}---Reduction to Forest CSP part.
    }\label{fig:over:sliding2}
  \end{center}
\end{sidewaysfigure*}

\medskip
\paragraph{Final remark.}
We conclude this overview with one remark.
We describe in Section~\ref{sec:redblue} how to reduce \optdis{} to an instance of \textsc{Forest CSP} by a series of color-coding and branching steps.
It may be tempting to backwards-engineer the algorithm for \textsc{Forest CSP}
back to the setting of \optdis{}. However, we think that this is a dead end; in particular, the second branching step, when one contracts an edge, merging two variables,
seems to have no good analog in the \optdis{} setting. 
Furthermore, we think that an important conceptual contribution of this work is the isolation of the \textsc{Forest CSP} problem as an island of tractability
behind the tractability of \optdis{}. 

\section{Segments, segment reversions, and segment representations}\label{sec:segments}
\subsection{Basic definitions and observations}

\begin{definition}
For a finite totally ordered set $(D, \leq)$ and two elements $x, y \in D$, $x \leq y$,
the \emph{segment between $x$ and $y$} is $\segment{D}{x}{y} = \{z \in D~|~x \leq z \leq y\}$.
Elements $x$ and $y$ are the \emph{endpoints} of the segment $\segment{D}{x}{y}$.
\end{definition}
We often write just $[x,y]$ for the segment $\segment{D}{x}{y}$ if the set $(D,\leq)$ is clear
from the context. 

\begin{definition}
Let $(D, \leq)$ be a finite totally ordered set and let $D = \{a(1),a(2),\ldots,a(|D|)\}$ with $a(i) < a(j)$ if and only if $i < j$.

A permutation $\pi : D \to D$ is a \emph{segment reversion} of $D$ 
if there exist integers $1 = i_1 < i_2 < \ldots < i_\ell = |D| + 1$ such that
for every $j \in [\ell]$ and every integer $x$ with $i_j \leq x < i_{j+1}$ we have 
$\pi(a(x)) = a(i_{j+1} - 1 - (x - i_j))$.
In other words, a segment reversion is a permutation that partitions the domain $D$
into segments $[a(i_1), a(i_2-1)], [a(i_2), a(i_3-1)], \ldots, [a(i_\ell), a(i_\ell-1)]$ and reverses
every segment independently. 

A \emph{segment representation of depth $k$} of a permutation $\pi$ of $D$ is a sequence of $k$ segment reversions $\pi_1, \pi_2, \ldots, \pi_k$ of $D$ such that their composition satisfies $\pi = \pi_k \circ \pi_{k - 1} \circ \ldots \circ \pi_1$. 
A permutation $\pi : D \to D$ is of \emph{depth} at most $k$ if $\pi$
admits a segment representation of depth at most $k$.

A \emph{segment representation of depth $k$} of a function $\phi : D \to \mathbb{N}$ is a tuple of $k$ segment reversions $\pi_1, \pi_2, \ldots, \pi_k$ of $D$ and a nondecreasing function $\phi'$ such that their composition satisfies $\phi = \phi' \circ \pi_1 \circ \pi_2 \circ \ldots \circ \pi_k$.
\end{definition}

\begin{definition}
Let $(D,\leq)$ be a finite totally ordered set. A \emph{segment partition} 
is a family $\mathcal{P}$ of segments of $(D,\leq)$ which is a partition of~$D$.
If for two segment partitions $\mathcal{P}_1$ and $\mathcal{P}_2$ we have that
for every $P_1 \in \mathcal{P}_1$ there exists $P_2 \in \mathcal{P}_2$ with $P_1 \subseteq P_2$
then we say that \emph{$\mathcal{P}_1$ is more refined than $\mathcal{P}_2$}
or \emph{$\mathcal{P}_2$ is coarser than $\mathcal{P}_1$}.
The notion of a coarser partition turns the family of all segment partitions into a partially
ordered set with two extremal values, the \emph{most coarse} partition with one segment
and the \emph{most refined} partition with all segments being singletons.
\end{definition}
Note that every segment partition $\mathcal{P}$ induces a segment reversion
 that reverses the segments of $\mathcal{P}$.
 We will denote this segment reversion as $g_\mathcal{P}$.

\begin{definition}
Let $(D_i,\leq_i)$ for $i=1,2$ be two finite totally ordered sets.

A relation $R \subseteq D_1 \times D_2$ is \emph{downwards-closed} if
for every $(a,b) \in R$ and $a' \leq_1 a$, $b' \leq_2 b$ it holds that $(a',b') \in R$.

A relation $R \subseteq D_1 \times D_2$ is of \emph{depth at most $k$} if
there exists a permutation $\pi_1$ of $D_1$ of depth at most~$k_1$,
a permutation $\pi_2$ of $D_2$ of depth at most $k_2$, and a downwards-closed relation $R' \subseteq D_1 \times D_2$
such that $k_1 + k_2 \leq k$ and $(a,b) \in R$ if and only if $(f_1(a),f_2(b)) \in R'$. 
A \emph{segment representation} of $R$ consists of $R'$, a segment representation of $\pi_1$ of depth at most $k_1$
and a segment representation of $\pi_2$ of depth at most $k_2$.
\end{definition}

We make two straightforward observations regarding some relations that are of small depth.

\begin{observation}\label{obs:2sat}
Let $(D_1,\leq_1)$ and $(D_2,\leq_2)$ be two finite totally ordered sets.
For $i=1,2$, let $(a_i^j)_{j=1}^\ell$ be a sequence of elements of $D_i$.
Then a relation $R \subseteq D_1 \times D_2$ defined as $(x_1,x_2) \in R$ if and only if:
\begin{itemize}
\item 
$\bigwedge_{j=1}^\ell (x_1 \leq_1 a_1^j) \vee (x_2 \leq_2 a_2^j)$ is downwards-closed and thus of depth $0$;
\item
$\bigwedge_{j=1}^\ell (x_1 \leq_1 a_1^j) \vee (x_2 \geq_2 a_2^j)$ is of depth $1$, using $k_1=0$
and $k_2=1$ and a segment reversion with one segment reversing the whole $D_2$;
\item
$\bigwedge_{j=1}^\ell (x_1 \geq_1 a_1^j) \vee (x_2 \leq_2 a_2^j)$ is of depth $1$, using $k_1=1$
and $k_2=0$ and a segment reversion with one segment reversing the whole $D_1$;
\item
$\bigwedge_{j=1}^\ell (x_1 \geq_1 a_1^j) \vee (x_2 \geq_2 a_2^j)$ is of depth $2$, using $k_1=1$
and $k_2=1$ and segment reversions each with one segment reversing the whole $D_1$ and the whole $D_2$, respectively.
\end{itemize}
Thus, a conjunction of an arbitrary finite number of the above relations
can be expressed as a conjunction of at most four relations, each of depth at most $2$.
\end{observation}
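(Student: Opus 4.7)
The plan is to verify each of the four bullets separately by exhibiting explicit segment representations, and then to handle the aggregate ``conjunction'' claim by grouping clauses according to their type.

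For the first bullet, I would directly unfold the definition of downwards-closed: if $(x_1,x_2)$ satisfies every clause $(x_1 \leq_1 a_1^j) \vee (x_2 \leq_2 a_2^j)$ and $(x_1',x_2')$ is coordinate-wise below $(x_1,x_2)$, then for each $j$ either $x_1 \leq_1 a_1^j$ (whence $x_1' \leq_1 a_1^j$) or $x_2 \leq_2 a_2^j$ (whence $x_2' \leq_2 a_2^j$), so the $j$-th clause still holds. Hence the relation is downwards-closed, which by definition means depth~$0$.

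For the second bullet, let $\pi_2$ be the unique segment reversion of $D_2$ consisting of the single segment equal to all of $D_2$; this is an order-reversing involution, so $x_2 \geq_2 a_2^j$ if and only if $\pi_2(x_2) \leq_2 \pi_2(a_2^j)$. Setting $b_j := \pi_2(a_2^j)$, the relation becomes $\bigwedge_j (x_1 \leq_1 a_1^j) \vee (\pi_2(x_2) \leq_2 b_j)$, which by the first bullet is downwards-closed in $(x_1,\pi_2(x_2))$. This exhibits a segment representation with $k_1=0$, $k_2=1$, as required. The third bullet is handled by the symmetric argument on the first coordinate, and the fourth by performing both single-segment reversions and then invoking the first bullet, yielding $k_1=k_2=1$ and total depth~$2$.

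For the final claim, the plan is simply to partition the input conjunction according to the four possible orientation patterns $(\leq{}/{\leq}, \leq{}/{\geq}, \geq{}/{\leq}, \geq{}/{\geq})$. Each subfamily is a conjunction falling into one of the four bullets, hence is a relation of depth at most~$2$, and the original conjunction is the intersection of these four relations. The only real obstacle is bookkeeping: one must check that the same single-segment reversion $\pi_1$ (respectively $\pi_2$) simultaneously converts every $\geq$-clause in that coordinate into a $\leq$-clause, which follows from $\pi_i$ being a globally order-reversing involution and independent of the particular constants $a_i^j$. No additional combinatorial content is needed beyond what is already stated in the definitions.
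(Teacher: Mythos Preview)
Your proposal is correct and matches the approach implicitly indicated by the statement itself; the paper does not give a separate proof for this observation, treating the claims as immediate from the explicit choices of $k_1$, $k_2$, and the full-domain reversions already spelled out in the bullets. Your verification that each clause type becomes downwards-closed after applying the indicated reversions, and your grouping argument for the final sentence, are exactly the intended reasoning.
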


\begin{observation}\label{obs:ineq}
Let $D_1,D_2 \subseteq D$ for a totally ordered set $(D, \leq)$.
We treat $D_i$ as a totally ordered set with the order inherited from $(D,\leq)$. 
Then a relation $R \subseteq D_1 \times D_2$ defined as $R = \{(x_1,x_2) \in D_1 \times D_2 \mid x_1 < x_2\}$ is of depth at most $1$ and a segment representation of this depth can be computed in polynomial time.\footnote{Throughout, for some relation $\leq$ we use $x < y$ to denote $x \leq y$ and not $x = y$.}  
\end{observation}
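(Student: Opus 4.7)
The plan is to use a single full reversal of $D_2$ in order to convert the fact that $R$ is downwards-closed in $x_1$ and upwards-closed in $x_2$ into a relation that is downwards-closed in both coordinates. Concretely, I would take $\pi_1$ to be the identity on $D_1$ (depth $0$, as the empty composition of segment reversions) and $\pi_2$ to be the segment reversion of $D_2$ whose unique segment is the whole of $D_2$ (depth $1$), so that $k_1 + k_2 = 0 + 1 \leq 1$, as required.

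As the witnessing downwards-closed relation I would define
\[
  R' := \{(y_1, y_2) \in D_1 \times D_2 \mid y_1 < \pi_2(y_2)\}.
\]
Two checks then remain. First, since $\pi_2$ is an involution, $(\pi_1(x_1), \pi_2(x_2)) \in R'$ iff $x_1 < \pi_2(\pi_2(x_2)) = x_2$, which is exactly the defining condition of $R$. Second, for downwards-closedness of $R'$: the full reversal $\pi_2$ is order-reversing on $D_2$, so $y_2' \leq_2 y_2$ gives $\pi_2(y_2') \geq_2 \pi_2(y_2)$; hence from $y_1' \leq_1 y_1$ and $y_1 < \pi_2(y_2)$ one chains $y_1' \leq y_1 < \pi_2(y_2) \leq \pi_2(y_2')$, so $(y_1', y_2') \in R'$ follows.

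There is no substantial obstacle. The only subtlety to keep in mind is that $\leq_1$ and $\leq_2$ are both inherited from the same ambient order on $D$, so the strict inequality $y_1 < \pi_2(y_2)$ between an element of $D_1$ and an element of $D_2$ is well-defined and compatible with both $\leq_i$. The output segment representation (the empty representation of $\pi_1$, the one-segment representation of $\pi_2$, and the relation $R'$ encoded implicitly via $\pi_2$) is produced in time linear in $|D_1| + |D_2|$.
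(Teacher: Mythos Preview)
Your proof is correct and follows essentially the same approach as the paper: both take $\pi_1$ to be the identity, $\pi_2$ to be the full reversal of $D_2$, and check that the resulting relation $R' = \{(y_1,y_2) \mid y_1 < \pi_2(y_2)\}$ is downwards-closed. You spell out the downwards-closedness check and the involution argument in more detail than the paper does, but the construction is the same.
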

\begin{proof}
Let $\pi_2$ be a segment reversion of $D_2$ with one segment, that is, $\pi_2$ reverses the domain~$D_2$.
Observe that $\{(a,\pi_2(b)) \mid a \in D_1 \wedge b \in D_2 \wedge a < b\}$ is a downwards-closed subrelation of $D_1 \times D_2$.
\end{proof}

\subsection{Operating on segment representations}

We will need the following two technical lemmas.
\begin{lemma}\label{lem:seg-swap}
Let $(D_1,\leq_1)$ and $(D_2,\leq_2)$ be two finite totally ordered sets,
    $f: D_1 \to D_2$ be a nondecreasing function\footnote{A function~$f$ on a domain and codomain that are totally ordered by $\leq_1$ and $\leq_2$, respectively, is called \emph{nondecreasing} if for every $x, x'$ in the domain we have that $x \leq_1 x'$ implies $f(x) \leq_2 f(x')$.},
and $g : D_2 \to D_2$ be a segment reversion.
Then there exists a nondecreasing function $f' : D_1 \to D_2$
and a segment reversion $g' : D_1 \to D_1$ such that $g \circ f = f' \circ g'$.
Furthermore, such $f'$ and $g'$ can be computed in polynomial time, given $(D_1,\leq_1)$, $(D_2,\leq_2)$,
  $f$, and $g$.
\end{lemma}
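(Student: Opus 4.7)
The plan is to let the segment partition $\mathcal{S}=\{S_1,\dots,S_\ell\}$ underlying $g$ drive a matching segment partition of $D_1$ via preimages under $f$. Concretely, for each $j \in [\ell]$ set $T_j := f^{-1}(S_j)$. Because $f$ is nondecreasing and each $S_j$ is an interval in $(D_2,\leq_2)$, each $T_j$ is an interval in $(D_1,\leq_1)$; moreover the $T_j$ are pairwise disjoint and their union is $D_1$. Discarding the empty ones yields a segment partition $\mathcal{T}$ of $D_1$, and we take $g'$ to be the segment reversion $g_\mathcal{T}$ associated with it. Since $g$ and $g'$ are involutions, we define $f' := g \circ f \circ g'$, and then automatically $f' \circ g' = g \circ f \circ g' \circ g' = g \circ f$, giving the required identity.

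The actual work is to verify that $f'$ is nondecreasing. Fix $x \leq_1 x'$ in $D_1$; I split into two cases depending on whether they lie in the same $T_j$. If $x, x' \in T_j$, then $g'(x'), g'(x) \in T_j$ with $g'(x') \leq_1 g'(x)$, so by monotonicity of $f$ we have $f(g'(x')) \leq_2 f(g'(x))$, and both lie in $S_j$; applying $g$, which reverses $S_j$, flips this inequality and yields $f'(x) \leq_2 f'(x')$. If $x \in T_j$ and $x' \in T_{j'}$ with $j < j'$, then $g'(x) \in T_j$, $g'(x') \in T_{j'}$, so $f(g'(x)) \in S_j$ and $f(g'(x')) \in S_{j'}$, and $g$ maps $S_j$ to $S_j$ and $S_{j'}$ to $S_{j'}$ without mixing them; since every element of $S_j$ is $\leq_2$ every element of $S_{j'}$, we again get $f'(x) \leq_2 f'(x')$.

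Finally, the whole construction is polynomial-time: given the segment partition of $g$ (which is part of the input or can be read off in linear time), computing $T_j = f^{-1}(S_j)$ requires a single scan through $D_1$ with $f$ evaluated at each point, and $f'$ is then defined pointwise by two table lookups. I expect no conceptual obstacle; the only subtlety is the bookkeeping around possibly empty $T_j$, which is immaterial since empty segments contribute nothing to $g_\mathcal{T}$ and nothing to the verification above.
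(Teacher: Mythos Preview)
Your proposal is correct and takes essentially the same approach as the paper: the paper defines the segments of $g'$ via $c_i=\min\{c:f(c)\ge a_i\}$ and $d_i=\max\{d:f(d)\le b_i\}$, which is exactly your $T_j=f^{-1}(S_j)$ when nonempty, then sets $f'=g\circ f\circ g'$ and verifies monotonicity by the same two-case split (same segment versus different segments). Your preimage formulation is arguably a bit cleaner notationally, but the construction and the verification are identical in substance.
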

\begin{proof}
Let $(\segment{D_2}{a_i}{b_i})_{i=1}^r$ be the segments of the segment reversion $g$
in increasing order. 
For every $i \in \{1, 2, \ldots, r\}$, let
\begin{align*}
c_i &= \min \{c \in D_1~|~f(c) \geq a_i\},\\
d_i &= \max \{d \in D_1~|~f(d) \leq b_i\}.
\end{align*}
Let $\mathcal{Q}$ be the family of those segments $\segment{D_1}{c_i}{d_i}$ for which 
both $c_i$ and $d_i$ are defined and $c_i \leq_1 d_i$
(which is equivalent to the existence of $x \in D_1$ with $f(x) \in \segment{D_2}{a_i}{b_i}$). 
From the definition of $c_i$s and $d_i$s we obtain that 
$\mathcal{Q}$ is a segment partition of $(D_1,\leq_1)$.
We put $g' = g_{\mathcal{Q}}$
and
$$f' = g \circ f \circ g'.$$
The desired equation $g \circ f = f' \circ g'$ follows directly from the definition of $f'$
and the fact that the segment reversion $g'$ is an involution.\footnote{An \emph{involution} is a function~$\phi$ which is its own inverse, that is, $\phi \circ \phi$ is the identity.}
Clearly, $f'$ and $g'$ are computable in polynomial time.
It remains to check that $f'$ is nondecreasing.

Let $x <_1 y$ be two elements of $D_1$. 
We consider two cases.
In the first case, we assume that $x$ and $y$ belong to the same segment $\segment{D_1}{c_i}{d_i}$
of $\mathcal{Q}$.
Then, $g'(x)$ and $g'(y)$ also lie in $\segment{D_1}{c_i}{d_i}$ and $g'(x) >_1 g'(y)$ by the definition 
of the segment reversion $g' = g_\mathcal{Q}$. 
Since $f$ is nondecreasing, $f(g'(x)) \geq_2 f(g'(y))$.
By the definition of $c_i$ and $d_i$, we have that both $f(g'(x))$ and $f(g'(y))$ lie
in the segment $\segment{D_2}{a_i}{b_i}$. 
Hence, since $\segment{D_2}{a_i}{b_i}$ is a segment of the segment reversion $g$, we have
$g(f(g'(x))) \leq_2 g(f(g'(y)))$, as desired.

In the second case, let $x \in \segment{D_1}{c_i}{d_i}$ and $y \in \segment{D_1}{c_j}{d_j}$
for some $i \neq j$. From the definition of the $c_i$s and $d_i$s we infer
that $x <_1 y$ implies $i < j$.
By the definition of $g' = g_\mathcal{Q}$, we have $g'(x) \in \segment{D_1}{c_i}{d_i}$
and $g'(y) \in \segment{D_1}{c_j}{d_j}$. 
Since $f$ is nondecreasing, $f(g'(x)) \leq_2 f(g'(y))$.
By the definition of the $c_i$s and $d_i$s, we have that
$f(g'(x)) \in \segment{D_2}{a_i}{b_i}$ and
$f(g'(y)) \in \segment{D_2}{a_j}{b_j}$.
Since $\segment{D_2}{a_i}{b_i}$ and $\segment{D_2}{a_j}{b_j}$ are segments
of $g$, we have $g(f(g'(x))) \leq_2 g(f(g'(y)))$, as desired.

This finishes the proof that $f'$ is nondecreasing and concludes the proof of the claim.
\end{proof}

\begin{lemma}\label{lem:dc-comp}
Let $(D_i,\leq_i)$ for $i=1,2,3$ be three finite totally ordered sets,
$f : D_1 \to D_2$ be a nondecreasing function, and $R \subseteq D_2 \times D_3$
be a downwards-closed relation.
Then the relation
$$R' = \{ (x,y) \in D_1 \times D_3~|~(f(x),y) \in R\}$$
is also downwards-closed.
\end{lemma}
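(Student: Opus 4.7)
The plan is to unfold the definition of downwards-closedness and verify it directly, using that $f$ preserves the relevant order. Concretely, I would fix an arbitrary $(x,y) \in R'$ together with $x' \leq_1 x$ and $y' \leq_3 y$, and show that $(x',y') \in R'$.

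By definition of $R'$, having $(x,y) \in R'$ means $(f(x),y) \in R$. Since $f$ is nondecreasing and $x' \leq_1 x$, we get $f(x') \leq_2 f(x)$. Now $R$ is downwards-closed, so from $(f(x),y) \in R$ together with $f(x') \leq_2 f(x)$ and $y' \leq_3 y$ we conclude $(f(x'),y') \in R$, which is exactly the statement that $(x',y') \in R'$. As $(x,y)$, $x'$, $y'$ were arbitrary, $R'$ is downwards-closed.

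There is no genuine obstacle here: the lemma is essentially the observation that preimages of downwards-closed sets under order-preserving maps (applied coordinate-wise, with the identity on $D_3$) are downwards-closed. The only thing to be careful about is to apply the nondecreasing property of $f$ only in the first coordinate and leave the second coordinate untouched, which is automatic from how $R'$ is defined.
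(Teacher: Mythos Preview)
Your proof is correct and essentially identical to the paper's own argument: both fix $(x,y)\in R'$ with $x'\leq_1 x$, $y'\leq_3 y$, use that $f$ is nondecreasing to get $f(x')\leq_2 f(x)$, and then apply downwards-closedness of $R$ to conclude $(f(x'),y')\in R$, hence $(x',y')\in R'$.
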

\begin{proof}
If $(x,y) \in R'$, $x' \leq_1 x$, and $y' \leq_2 y$, then $f(x') \leq_2 f(x)$ as $f$ is nondecreasing,
$(f(x'),y') \in R$ as $(f(x),y) \in R$ and $R$ is downwards closed, and thus $(x',y') \in R'$ by the definition of $R'$.
\end{proof}

\subsection{Tree of segment partitions}\label{ss:tree}

For a rooted tree $T$, we use the following notation:
\begin{itemize}
\item $\leaves(T)$ is the set of leaves of $T$;
\item $\treeroot(T)$ is the root of $T$;
\item for a non-root node $v$, $\parent(v)$ is the parent of $v$.
\end{itemize}
In this subsection we are interested in the following setting.
A \emph{tree of segment partitions} consists of:
\begin{itemize}
\item a finite totally ordered set $(D,\leq)$;
\item a rooted tree $T$;
\item a segment partition $\mathcal{P}_v$ of $(D,\leq)$ for every $v \in V(T)$ such that:
\begin{itemize}
\item the partition $\mathcal{P}_{\parent(v)}$ is coarser than the partition $\mathcal{P}_v$ for every non-root node $v$;
\item the partition $\mathcal{P}_{\treeroot(v)}$ is the most coarse partition (with one segment);
\item for every leaf $v \in \leaves(T)$ the partition $\mathcal{P}_v$ is the most refined partition (with only singletons);
\end{itemize}
\item an assignment $\flf : V(T) \setminus \{\treeroot(T)\} \to \{\typeinc, \typedec\}$.
\end{itemize}
We say that a non-root node~$w$ is of \emph{increasing type} if $\flf(w) = \typeinc$
and of \emph{decreasing type} if $\flf(w) = \typedec$.

Given a tree of segment partitions $\mathbb{T} = ((D,\leq), T, (\mathcal{P}_v)_{v \in V(T)}, \flf)$, a
\emph{family of leaf functions} is
a family $(f_v)_{v \in \leaves(T)}$ such that
for every $v \in \leaves(T)$ the function $f_v : D \to \mathbb{Z}$ satisfies the following property:
for every non-root element $w$ on the path in $T$ from $v$ to $\treeroot(T)$, 
for every $Q \in \mathcal{P}_{\parent(w)}$, if $Q_1,Q_2,\ldots,Q_a$ are the segments of $\mathcal{P}_w$
contained in $Q$ in increasing order, then 
for every $x_1 \in Q_1$, $x_2 \in Q_2$, \ldots, $x_a \in Q_a$ we have 
\begin{align*}
&f_v(x_1) < f_v(x_2) < \ldots < f_v(x_a)&\mathrm{\ if\ }\flf(w) = \typeinc,\\
&f_v(x_1) > f_v(x_2) > \ldots > f_v(x_a)&\mathrm{\ if\ }\flf(w) = \typedec.
\end{align*}
\begin{lemma}\label{lem:make-seg-rep}
Let $\mathbb{T} = ((D,\leq), T, (\mathcal{P}_v)_{v \in V(T)}, \flf)$ be a tree of segment partitions
and $\mathcal{F} = (f_v)_{v \in \leaves(T)}$ be a family of leaf functions
in $\mathbb{T}$. 
Then there exists a family $\mathcal{G} = (g_v)_{v \in V(T) \setminus \{\treeroot(T)\}}$ of
segment reversions of $D$ and a family $\widehat{\mathcal{F}} = (\hat{f}_v)_{v \in \leaves(T)}$ of stricly increasing functions
with domain $D$ and range $\mathbb{Z}$ such that,
for every $v \in \leaves(T)$,
if $v = v_1, v_2, \ldots, v_b = \treeroot(T)$ are the nodes on the path from $v$ to $\treeroot(T)$ in $T$,
then 
\begin{equation}\label{eq:make-seg-rep}
f_v = \hat{f}_v \circ g_{v_{b-1}} \circ g_{v_{b-2}} \circ \ldots \circ g_{v_{1}}.
\end{equation}
Furthermore, given $\mathbb{T}$ and $\mathcal{F}$, the families
$\mathcal{G}$ and $\widehat{\mathcal{F}}$ can be computed in polynomial time.
\end{lemma}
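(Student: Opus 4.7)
The plan is to define each $g_w$ by a purely local rule depending only on $\flf(w)$, $\flf(\parent(w))$, and $\mathcal{P}_{\parent(w)}$, and then to define each $\hat{f}_v$ as the unique function making Equation~\eqref{eq:make-seg-rep} hold. Adopting the convention $\flf(\treeroot(T)) := \typeinc$, I would set
\[
g_w := \begin{cases} g_{\mathcal{P}_{\parent(w)}} & \text{if } \flf(w) \neq \flf(\parent(w)), \\ \mathrm{id}_D & \text{if } \flf(w) = \flf(\parent(w)), \end{cases}
\]
for every non-root $w$, where $\mathrm{id}_D$ is viewed as the trivial segment reversion whose segments are the singletons. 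For a leaf $v$ with path $v = v_1, v_2, \ldots, v_b = \treeroot(T)$ I would define $\hat{f}_v := f_v \circ g_{v_1} \circ g_{v_2} \circ \cdots \circ g_{v_{b-1}}$. Since each $g_{v_j}$ is an involution, right-composing with $g_{v_{b-1}} \circ \cdots \circ g_{v_1}$ telescopes down to $f_v$, giving Equation~\eqref{eq:make-seg-rep} for free; polynomial-time computability of $\mathcal{G}$ and $\widehat{\mathcal{F}}$ is immediate from this closed form. The only real task is to verify that every $\hat{f}_v$ is strictly increasing.

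I would prove this by induction on the path length $b$, per leaf. The bases $b = 1$ (where $|D|=1$) and $b = 2$ (where $f_v$ is $\flf(v)$-monotonic on all of $D$ by the leaf-function property, and $g_v$ reverses the whole of $D$ precisely when $\flf(v) = \typedec$) are immediate. The inductive step hinges on the function $f_v' := f_v \circ g_v$ and on a reduction to an auxiliary tree on the truncated path.

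The workhorse observation is that $g_v$ preserves every segment of $\mathcal{P}_{v_j}$ setwise for every $j \ge 2$: either $g_v$ is the identity, or $g_v = g_{\mathcal{P}_{v_2}}$, and the segments of $\mathcal{P}_{v_2}$ are contained in those of $\mathcal{P}_{v_j}$ for $j \ge 2$. Consequently, all cross-segment leaf-function conditions for $f_v$ at $v_2, v_3, \ldots, v_{b-1}$ transfer verbatim to $f_v'$. A four-case analysis on $\flf(v)$ versus $\flf(v_2)$ then shows that within every segment of $\mathcal{P}_{v_2}$ the function $f_v'$ is strictly monotonic with direction $\flf(v_2)$: if $\flf(v) = \flf(v_2)$ the rule gives $g_v = \mathrm{id}_D$ and this is the leaf condition at $v$, and if $\flf(v) \neq \flf(v_2)$ then $g_v = g_{\mathcal{P}_{v_2}}$ flips the $\flf(v)$-direction to $\flf(v_2)$. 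Combining this with the strict block separation dictated by the leaf condition at $v_2$ (i.e., $\max f_v(Q_i) < \min f_v(Q_{i+1})$ or the reverse, between consecutive $\mathcal{P}_{v_2}$-blocks $Q_i$ inside a $\mathcal{P}_{v_3}$-segment) upgrades $f_v'$ to being strictly $\flf(v_2)$-monotonic on every entire $\mathcal{P}_{v_3}$-segment.

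To close the induction, I would introduce the auxiliary tree of segment partitions $T^*$ on the truncated path $v_2, v_3, \ldots, v_b$, keeping $\mathcal{P}^*_{v_j} = \mathcal{P}_{v_j}$ and $\flf^*(v_j) = \flf(v_j)$ for $j \ge 3$, setting $\flf^*(v_2) = \flf(v_2)$, and letting $\mathcal{P}^*_{v_2}$ be the most refined partition of $D$. The previous paragraph shows exactly that $f_v'$ is a leaf function of $T^*$ at the leaf $v_2$. Since our rule depends only on node types and on the parent's partition---all unchanged in the relevant part of $T^*$---the reversions $g^*_{v_j}$ produced by the rule in $T^*$ coincide with the $g_{v_j}$ from $T$ for $j \ge 2$. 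Applying the inductive hypothesis to $T^*$ (which has path length $b - 1$) yields a strictly increasing $\hat{f}^*_{v_2}$ equal to $f_v' \circ g_{v_2} \circ \cdots \circ g_{v_{b-1}} = f_v \circ g_{v_1} \circ \cdots \circ g_{v_{b-1}} = \hat{f}_v$, completing the induction. The step I expect to require the most care is the bookkeeping needed to verify the leaf condition of $f_v'$ on entire $\mathcal{P}_{v_3}$-segments in $T^*$: the within-$\mathcal{P}_{v_2}$ monotonicity and the between-block separation must dovetail in the correct direction in all four sign combinations of $(\flf(v), \flf(v_2))$.
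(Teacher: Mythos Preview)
Your construction is identical to the paper's: with the convention $\flf(\treeroot(T)) = \typeinc$, your rule ``$g_w = g_{\mathcal{P}_{\parent(w)}}$ iff $\flf(w)\neq\flf(\parent(w))$'' is exactly the paper's notion of a \emph{pivotal} node, and your $\hat f_v$ is defined the same way. The only difference is in how strict monotonicity of $\hat f_v$ is verified. The paper gives a direct elementwise argument: for $x<y$ it tracks the images $x_i,y_i$ through the reversions, locates the smallest index $\ell$ at which $x_\ell,y_\ell$ lie in the same $\mathcal P_{v_\ell}$-segment, and observes that the parity of pivotal nodes above $\ell$ matches $\flf(v_{\ell-1})$, so the leaf-function condition at $v_{\ell-1}$ gives $f_v(x_1)<f_v(y_1)$. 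Your inductive reduction to a truncated path-tree $T^\ast$ is a valid alternative and the bookkeeping you outline (preservation of $\mathcal P_{v_j}$-segments by $g_v$ for $j\ge 2$, the four-case within-block analysis, and block separation from the condition at $v_2$) is correct; the paper's direct argument is simply shorter and avoids constructing the auxiliary tree.
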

\begin{proof}
Fix a non-root node $w$.
We say that $w$ is \emph{pivotal} if either
\begin{itemize}
\item $\parent(w) = \treeroot(T)$ and $\flf(w) = \typedec$, or
\item $\parent(w) \neq \treeroot(T)$ and $\flf(w) \neq \flf(\parent(w))$.
\end{itemize}
Let $\mathcal{Q}_w = \mathcal{P}_{\parent(w)}$ if $w$ is pivotal and 
let $\mathcal{Q}_w$ be the most refined partition of $D$ otherwise. 
Let $g_w = g_{\mathcal{Q}_w}$.
That is, $g_w$ is the segment reversion that reverses the segments of $\mathcal{P}_{\parent(w)}$ for
pivotal $w$ and is an identity otherwise.

Fix a leaf $v \in \leaves(T)$ and let $v = v_1, v_2, \ldots, v_b = \treeroot(T)$
be the nodes on the path in $T$ from $v$ to the root $\treeroot(T)$.
Define
$$\hat{f}_v = f_v \circ g_{v_{1}} \circ g_{v_{2}} \circ \ldots \circ g_{v_{b-1}}.$$
Clearly, as a segment reversion is an involution,~\eqref{eq:make-seg-rep} follows.
Hence, to finish the proof of the lemma it suffices to show that $\hat{f}_v$ is strictly increasing. 

Take $x,y \in D$ with $x < y$.
For each $i \in \{1, 2, \ldots, b\}$, let 
\begin{align*}
x_i &= g_{v_{i}} \circ g_{v_{i+1}} \circ \ldots \circ g_{v_{b-1}}(x) \text{, and} \\
y_i &= g_{v_{i}} \circ g_{v_{i+1}} \circ \ldots \circ g_{v_{b-1}}(y) \text{,}
\end{align*}
and let $x_b = x$ and $y_b = y$.
Recall that $\mathcal{P}_{v_b} = \mathcal{P}_{\treeroot(P)}$ is the most coarse partition with only one segment so $x_b,y_b$ lie in the same segment of $\mathcal{P}_{v_b}$.
Let $\ell \leq b$ be the minimum index such that $x_{\ell}$ and $y_{\ell}$ lie in the same segment of $\mathcal{P}_{v_\ell}$.
Note that $\ell > 1$ as $\mathcal{P}_{v_1} = \mathcal{P}_v$ is the most refined partition with 
singletons only.
For each $i \in \{\ell, \ell + 1, \ldots, b\}$, let $Q_i \in \mathcal{P}_{v_i}$ be the segment containing $x_i$ and $y_i$. Observe that, since $\mathcal{Q}_{v_i}$ is a more refined partition than $\mathcal{P}_{v_i}$, for every $i \in \{\ell, \ell + 1, \ldots, b\}$, elements $x_i$ and $y_i$ lie in the same segment of the partition $\mathcal{P}_{v_i}$.

From the definition of being pivotal it follows that the number of indices $j \in \{\ell, \ell + 1, \ldots, b\}$ for which
$v_{j-1}$ is pivotal 
is odd if $\flf(v_{\ell-1}) = \typedec$ 
and even if $\flf(v_{\ell-1}) = \typeinc$.
Recall that $g_{j - 1}$ reverses the segment containing $x_{j - 1}$ and $y_{j - 1}$ if and only if $v_{j - 1}$ is pivotal.
Hence $x_{\ell-1} < y_{\ell-1}$ if $\flf(v_{\ell-1}) = \typeinc$ and
$x_{\ell - 1} > y_{\ell-1}$ if $\flf(v_{\ell-1}) = \typedec$.

Since for every $i \in \{1, 2, \ldots, \ell\}$, we have that $x_i$ and $y_i$ lie in different segments of $\mathcal{P}_{v_i}$,
we have $x_1 < y_1$ if $\flf(v_{\ell-1}) = \typeinc$
 and $x_1 > y_1$ if $\flf(v_{\ell-1}) = \typedec$.
For the same reason, $x_1$ and $y_1$ lie in different segments of 
$\mathcal{P}_{v_{\ell-1}}$.
From the definitions of increasing and decreasing types, we infer that 
if $\flf(v_{\ell-1}) = \typeinc$, then $f_v(x_1) < f_v(y_1)$ as $x_1 < y_1$
and if $\flf(v_{\ell-1}) = \typedec$, then $f_v(x_1) < f_v(y_1)$ as $x_1 > y_1$.
Observe that $\hat{f}_v(x) = f_v(x_1)$ and $\hat{f}_v(y) = f_v(y_1)$.
Thus, in both cases, we obtain that $\hat{f}_v(x) < \hat{f}_v(y)$, as desired.
\end{proof}

\section{Auxiliary CSP}\label{sec:csp}
In this section we will be interested in checking the satisfiability of the following
constraint satisfaction problem (CSP).
\begin{definition}
  An \emph{auxiliary CSP instance} is a tuple $(\mathcal{X}, \mathcal{D}, \mathcal{C})$ consisting of a set $\mathcal{X} = \{x_1,x_2,\ldots,x_k\}$ of $k$ \emph{variables}, a totally ordered finite \emph{domain} $(D_i,\leq_i) \in \mathcal{D}$ for every variable $x_i$, and a set $\mathcal{C}$ of binary \emph{constraints}.
  Each constraint $C \in \mathcal{C}$ is a tuple $(x_{i(C,1)}, x_{i(C,2)}, R_C)$ consisting of two variables $x_{i(C,1)}$ and $x_{i(C,2)}$, and a relation $R_C \subseteq D_{i(C,1)} \times D_{i(C,2)}$ given as a segment representation of some depth.
  We say that constraint $C$ \emph{binds} $x_{i(C,1)}$ and $x_{i(C,2)}$.
  An \emph{assignment} is a function $\phi \colon \mathcal{X} \to \mathcal{D}$ such that for each $x_i \in \mathcal{X}$ we have $\phi(x_i) \in D_i$.
  An assignment $\phi$ is \emph{satisfying} if for each constraint $C = (x_{i(C,1)}, x_{i(C,2)}, R_C) \in \mathcal{C}$ we have $(\phi(x_{i(C, 1)}), \phi(x_{i(C, 2)})) \in R_C$.
\end{definition}

Qualitatively, the main result of this section is the following.
\begin{theorem}\label{thm:csp}
  Checking satisfiability of an auxiliary CSP instance is fixed-parameter tractable when parameterized by the sum of the number of variables, the number of constraints, and the depths of all segment representations of constraints.
\end{theorem}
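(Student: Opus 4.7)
The plan is to reduce an auxiliary CSP instance to a structurally simpler CSP whose segment-reversion-style constraints form a forest, and then to solve the latter by a recursive branching strategy analogous to the \textsc{Forest CSP} algorithm sketched in Section~\ref{sec:overview}. Throughout, set $K = |\mathcal{X}| + |\mathcal{C}| + \sum_{C \in \mathcal{C}} \mathrm{depth}(R_C)$.

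First I would \emph{linearize} each constraint $C = (x_i,x_j,R_C)$ by unpacking its segment representation into the reversion sequences $(\pi_l^{(C,1)})_{l=1}^{k_1}$ of a permutation of $D_i$, $(\pi_l^{(C,2)})_{l=1}^{k_2}$ of a permutation of $D_j$, and the downwards-closed ``tail'' $R'_C$. Replace $C$ by two paths of fresh auxiliary variables, one of length $k_1$ starting at $x_i$ and one of length $k_2$ starting at $x_j$; each path-edge carries a binary ``equality modulo one segment reversion'' constraint of the form $y=\pi_l(z)$, and the two path endpoints are bound by $R'_C$. Since segment reversions are involutions this is satisfiability-preserving, and the new instance has size polynomial in $K$; the segment-reversion edges form a forest $G$ with each tree rooted at an original variable, and all other constraints are pure downwards-closed constraints. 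I would then preprocess: iteratively delete values with no downwards-closed counterpart (propagating deletions through a tree using the reversions on its edges), and absorb any downwards-closed constraint $R_{u,v}$ with $u,v$ in the same tree $T$ into $T$'s shared admissible set by enumerating the $|D_T|$ candidate assignments of $T$ (one variable determines all others through the reversions). After this, every surviving downwards-closed constraint spans two distinct trees.

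The recursive core mirrors the three-branch strategy of the overview, with measure $\mu$ equal to the number of trees in $G$ plus the number of segment-reversion edges. \emph{Branch 1:} guess whether some variable takes the value $1$ in a satisfying extension; if yes, guess it, fix its tree, remove the tree, strictly decreasing $\mu$. \emph{Branch 2:} guess whether there is an edge $uv$ and a satisfying assignment $\phi$ with $\phi(u)$ or $\phi(v)$ at an endpoint of a segment of $\pi_{uv}$; if yes, guess the edge, side, and left/right, restrict the domain and propagate through the tree, so that $\pi_{uv}$ becomes the identity on the surviving values and the edge may be contracted. \emph{Otherwise:} pick a variable $a$ and consider a satisfying $\phi$ minimising $\phi(a)$. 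Decrementing $\phi(a)$ and propagating through $a$'s tree shifts every variable in that tree by $\pm 1$ (no collisions, as no value sits at a segment endpoint) and must violate some constraint, which by the preprocessing is a downwards-closed constraint $R_{u,v}$ connecting two different trees; guess this constraint. The nonincreasing function $f(x) = \max\{y : (x,y) \in R_{u,v}\}$ satisfies $\phi(v) = f(\phi(u))$; composing with a full-domain reversion and invoking Lemma~\ref{lem:seg-swap} recasts this as a single segment-reversion edge that merges the two trees, while Lemma~\ref{lem:dc-comp} guarantees that the remaining downwards-closed constraints stay in that form after re-indexing the merged tree.

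Each branch strictly reduces $\mu$, the branching factor at every step is polynomial in $K$, and $\mu = O(K)$ initially, so the recursion tree has $K^{O(K)} = 2^{O(K \log K)}$ leaves, each processed in polynomial time in $n$ and $K$, yielding the desired fixed-parameter bound. The main obstacle I foresee is the third branch: one must verify carefully that under the ``no segment endpoint'' hypothesis the $\pm 1$ propagation through a tree is uniquely defined and can only fail on a cross-tree downwards-closed constraint, and that the functional constraint discovered by this failure can be rewritten as a segment-reversion edge of bounded additional depth while keeping the rest a valid auxiliary CSP instance. The algebraic machinery of Section~\ref{sec:segments}, especially Lemmas~\ref{lem:seg-swap}, \ref{lem:dc-comp}, and \ref{lem:make-seg-rep}, is engineered precisely for this bookkeeping.
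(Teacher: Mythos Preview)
Your proposal follows essentially the same route as the paper: translate the auxiliary CSP into a forest-CSP by hanging paths of segment-reversion edges off the original variables, preprocess to eliminate same-tree downwards-closed constraints and unreachable values, and then branch on (1) some variable being~$1$, (2) some variable sitting at a segment endpoint of an incident reversion, or (3) the identity of the cross-tree downwards-closed constraint violated by decrementing a minimal assignment. The use of Lemmas~\ref{lem:seg-swap} and~\ref{lem:dc-comp} to turn the discovered functional dependence into a new segment-reversion edge (after composing with a full reversion) is exactly what the paper does.

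One concrete issue: your measure $\mu = (\text{trees}) + (\text{segment-reversion edges})$ does \emph{not} strictly decrease in Branch~3. Merging two trees via a new segment-reversion edge drops the tree count by one but raises the edge count by one, so $\mu$ is unchanged. The paper's measure is instead the ``apparent size'' $|V(F)| + |\trees(F)| + |\mathcal{C}|$; in Branch~3 the vertex and constraint counts do not increase while the tree count strictly decreases, so this works. Any measure that weights trees at least as heavily as edges (e.g.\ $2\cdot|\trees(F)| + |E(F)|$, or simply $|V(F)| + |\trees(F)|$) would also do. Also, Lemma~\ref{lem:make-seg-rep} is not needed here; it is used only later, in Section~\ref{sec:redblue}, to bound the depth of the alternating-lines constraints.
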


To prove \cref{thm:csp} we show a more general result stated in \cref{lem:csp} below.
For this and to state precisely the running time bounds of the obtained algorithm, we need a few extra definitions.
For a forest $F$, $\trees(F)$ is the family of trees (connected components) of $F$.
For $y \in V(F)$, $\tree_F(y)$ is the tree of $F$ that contains $y$.
We omit the subscript if it is clear from the context.
\begin{definition}
A \emph{forest-CSP instance} is a tuple consisting of a forest $F$ with its vertex set $V(F)$ being the set of \emph{variables} of the instance,
an ordered finite \emph{domain} $(D_T, \leq_T)$ for every $T \in \trees(F)$ (that is, one domain shared between all vertices of $T$),
for every $e \in E(T)$ and $T \in \trees(F)$ a segment reversion $g_e$ 
that is a segment reversion of~$D_T$,
and a family of \emph{constraints} $\mathcal{C}$.
Each constraint $C \in \mathcal{C}$ is a tuple $(y_1, y_2, R_C)$ where
$y_1,y_2 \in V(F)$ are variables and $R_C \subseteq D_{\tree(y_1)} \times D_{\tree(y_2)}$ is a downwards-closed relation.
We say that $C$ \emph{binds} $y_1$ and~$y_2$.

An \emph{assignment} is a function $\phi \colon V(F) \to \mathcal{D}$ such that for each $y \in V(F)$ we have $\phi(y) \in D_{\tree(y)}$.
An assignment $\phi$ \emph{satisfies} the forest-CSP instance 
if for every edge $yy' \in E(F)$ we have $g_e(\phi(y)) = \phi(y')$ 
and for every constraint $C = (y_1,y_2,R_C)$ we have $(\phi(y_1),\phi(y_2)) \in R_C$.

The \emph{apparent size} of a forest-CSP instance is the sum of the number of variables,
number of trees of $F$, and the number of constraints.
\end{definition}

We will show the following result.

\begin{lemma}\label{lem:csp}
There exists an algorithm that, given a forest-CSP instance $\mathcal{I}$
of apparent size $s$,
in $2^{\Oh(s \log s)} |\mathcal{I}|^{\Oh(1)}$ time
computes a satisfying assignment of $\mathcal{I}$ or correctly concludes that $\mathcal{I}$
is unsatisfiable.
\end{lemma}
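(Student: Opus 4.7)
The plan is to follow the recursive branching strategy sketched in \cref{sec:overview}. After two preprocessing reductions, the algorithm branches in three prioritized ways; every non-trivial branch drops the apparent size $s$ by at least one and branches into $s^{\Oh(1)}$ subcases, giving a search tree of size $s^{\Oh(s)} = 2^{\Oh(s\log s)}$.

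\emph{Preprocessing.} (a)~If a downwards-closed constraint $C=(y_1,y_2,R_C)$ binds two variables of the same tree $T$, I enumerate all $|D_T|$ possible values at a representative vertex of $T$ (each one uniquely extending across $T$ via the edge segment reversions), discard those inconsistent with $R_C$, and rewrite the instance on the restricted domain of $T$, using \cref{lem:seg-swap} to re-express edge segment reversions inside $T$ and \cref{lem:dc-comp} to re-express external downwards-closed constraints touching $T$; this eliminates $C$. (b)~Iteratively propagate: whenever a constraint $(y_1,y_2,R_C)$ leaves some $a\in D_{\tree(y_1)}$ with no matching $b\in D_{\tree(y_2)}$, delete $a$ and repeat.

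\emph{Branching.} In Branch~A I guess whether some satisfying assignment $\phi$ has $\phi(y)=1$ for some variable $y$; if so I branch over $|V(F)|$ choices of $y$, fix $\phi$ on $\tree(y)$ via the edge segment reversions, delete $\tree(y)$ from $F$, and use external downwards-closed constraints touching $\tree(y)$ to filter the remaining domains, so the number of trees drops. In Branch~B I guess whether some edge $uv\in E(F)$ admits a satisfying $\phi$ with $\phi(u)$ lying at a left or right segment endpoint of $g_{uv}$; branching over $\Oh(|E(F)|)$ edges and the two endpoint types, I restrict $D_{\tree(u)}$ to such endpoints and propagate through $\tree(u)$. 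On the restricted domain $g_{uv}$ is the identity, so the edge may be contracted and the number of variables drops. If neither applies we enter Branch~C: pick any $a$, consider a satisfying $\phi$ minimizing $\phi(a)$, decrement $\phi(a)$ to $\phi(a)-1$ (valid as $\phi(a)\neq 1$) and propagate through $T:=\tree(a)$. Because Branch~B failed, every $\phi(b)$ with $b\in T$ shifts by exactly $\pm 1$ and all edge constraints inside $T$ stay satisfied; preprocessing~(a) rules out violated within-tree downwards-closed constraints; hence some constraint $C=(u,v,R_C)$ with $u\in S\neq T$, $v\in T$ must be violated, and the violation forces $\phi(v)$ to increase.

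Branching over the at most $|\mathcal{C}|$ choices of $C$, I define $f\colon D_S\to D_T$ by $f(x)=\max\{y\in D_T \mid (x,y)\in R_C\}$. Minimality of $\phi$ yields $\phi(v)=f(\phi(u))$, so the values across all of $T$ become a deterministic function of the value at $S$. Because $R_C$ is downwards-closed, $f$ is non-increasing; composing $f$ with a full-domain segment reversion of $D_S$ produces a non-decreasing map, under which \cref{lem:seg-swap} rewrites every edge segment reversion inside $T$ on the new shared domain $D_S$, and \cref{lem:dc-comp} rewrites every external downwards-closed constraint touching $T$ so that it remains downwards-closed. The result is a forest-CSP instance in which $S$ and $T$ are merged into a single tree, joined by an edge $uv$ carrying a full-domain segment reversion, and the number of trees drops. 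In every branch $s$ strictly decreases, the branching factor is $\Oh(s)$, and the recursion depth is $\Oh(s)$, yielding the promised $2^{\Oh(s\log s)}|\mathcal{I}|^{\Oh(1)}$ bound. The main obstacle will be the correctness of the Branch~C merge: verifying that satisfying assignments of the merged instance are in exact correspondence with satisfying assignments of the original that select $C$ as their violation witness, and that the rewrites via \cref{lem:seg-swap} and \cref{lem:dc-comp} keep the segment-representation depths and overall instance size polynomially bounded across the whole recursion.
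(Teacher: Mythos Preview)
Your proposal is essentially correct and follows the same approach as the paper's proof (which fleshes out the overview you cite). A couple of minor points: in preprocessing~(a) and domain restrictions you do not need \cref{lem:seg-swap}---deleting a value and its propagated images from a tree's shared domain keeps each edge map a segment reversion by a direct check, which is how the paper handles it via its ``forbidding a value'' operation; \cref{lem:seg-swap} and \cref{lem:dc-comp} are reserved for the Branch~C merge. Also, your closing worry about ``segment-representation depths'' growing is moot: in a forest-CSP every constraint is a single downwards-closed relation and every edge carries a single segment reversion, and the merge step preserves exactly this form, so no depth accumulates.
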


To see that \cref{lem:csp} implies Theorem~\ref{thm:csp}, we translate an auxiliary CSP instance~$(\mathcal{X}, \mathcal{D}, \mathcal{C})$ with $k$~variables into an equivalent forest-CSP instance $(F, \mathcal{D}', \mathcal{C}')$.
Start with $\mathcal{D} = \emptyset$, $\mathcal{C}' = \emptyset$, and a forest $F$ consisting of $k$ components $T_1,T_2,\ldots,T_k$ where
$T_i$ is an isolated vertex $x_i \in \mathcal{X}$.
Define the domain \mbox{$(D_{T_i}, \leq_{T_i}) \in \mathcal{D}'$} %
of tree $T_i$ as $(D_{T_i}, \leq_{T_i}) := (D_i, \leq_i) \in \mathcal{D}$.
Recall that for every constraint $C = (x_{i(C,1)}, x_{i(C,2)}, R_C) \in \mathcal{C}$ there is a segment representation, that is, there are $\ell_1, \ell_2 \in \mathbb{N}$, segment reversions $g_1^{1}, g_1^{2}, \ldots, g_1^{\ell_1}$ and $g_2^{1}, g_2^{2}, \ldots, g_2^{\ell_2}$, and a downwards-closed relation $R_C'$ such that
\ifdefined\twocolumnflag%
\begin{align*}
(a_1,a_2) \in R_C \Leftrightarrow 
 (&g_1^{k_1} \circ g_1^{k_1-1} \circ \ldots \circ g_1^{1}(a_1), \\
  &g_2^{k_2} \circ g_2^{k_2-1} \circ \ldots \circ g_2^{1}(a_2)) \in R_C')\text{.}
\end{align*}
\else
$$(a_1,a_2) \in R_C \Leftrightarrow
  (g_1^{k_1} \circ g_1^{k_1-1} \circ \ldots \circ g_1^{1}(a_1), 
  g_2^{k_2} \circ g_2^{k_2-1} \circ \ldots \circ g_2^{1}(a_2)) \in R_C')\text{.}$$
\fi
For each constraint $C \in \mathcal{C}$ as above, proceed as follows:
\begin{enumerate}
\item For both $j=1,2$ attach to $x_{i(C,j)}$ in the tree $T_{i(C,j)}$
a path of length $k_j$ with vertices $x_{i(C,j)} = y_j^0, y_j^1, \ldots, y_j^{k_j}$, wherein $y_j^1, \ldots, y_j^{k_j}$ are new variables,
and label the each edge $y_j^{i-1} y_j^i$ with the segment reversion $g_j^i$.
\item Add a constraint $C' = (y_1^{k_1}, y_2^{k_2}, R_C')$ to $\mathcal{C}'$.
\end{enumerate}
A direct check shows that a natural extension of a satisfying assignment to the input auxiliary CSP instance~$(\mathcal{X}, \mathcal{D}, \mathcal{C})$ satisfies the resulting forest-CSP instance~$(F, \mathcal{D}', \mathcal{C}')$ and, in the other direction, a restriction
to $\{x_1,x_2,\ldots,x_k\}$ of any satisfying assignment to $(F, \mathcal{D}', \mathcal{C}')$ 
is a satisfying assignment to $(\mathcal{X}, \mathcal{D}, \mathcal{C})$.
Furthermore, if the input auxiliary CSP instance has $k$ variables, $c$ constraints, and $p$ is the sum of the depths of all segment representations, then the apparent size
of the resulting forest-CSP instance is $p+2k+c$.
Thus, \cref{thm:csp} follows from \cref{lem:csp}. 

The rest of this section is devoted to the proof of Lemma~\ref{lem:csp}.

\subsection{Fixed-parameter algorithm for forest CSPs}

In what follows, \emph{to solve} a forest-CSP instance means to check its
satisfiability and, in case of a satisfiable instance, produce one satisfying assignment.
The algorithm for Lemma~\ref{lem:csp} is a branching algorithm that at every recursive call
performs a number of preprocessing steps and then branches into a number of subcases.
Every recursive call will be performed in polynomial time and will lead to a number of subcalls that is polynomial
in~$s$. 
Every recursive call will be given a forest-CSP instance $\mathcal{I}$
and will either solve $\mathcal{I}$ directly or produce forest-CSP instances and pass them to recursive subcalls while
ensuring that
(i) the input instance~$\mathcal{I}$ is satisfiable if and only if one of the instances passed to the recursive subcalls
is satisfiable, and (ii) given a satisfying assignment of an instance passed to a recursive subcall,
one can produce a satisfying assignment to $\mathcal{I}$ in polynomial time.
In that case, we say that the recursive call is \emph{correct}.
In every recursive subcall the apparent size $s$ will decrease by at least one, 
bounding the depth of the recursion by $s$.
In that case, we say that the recursive call is \emph{diminishing}.
Observe that these two properties guarantee the correctness of the algorithm and the running time bound
of Lemma~\ref{lem:csp}.

We will often phrase a branching step of a recursive algorithm as \emph{guessing} a property
of a hypothetical satisfying assignment. Formally, at each such step, the algorithm checks
all possibilities iteratively. 

It will be convenient to assume that every domain $(D_T,\leq_T)$ equals $\{1, 2, \ldots, |D_T|\}$ with the order $\leq_T$ inherited from the integers.
(This assumption can be reached by a simple remapping argument and we will maintain it throughout the algorithm.)
Thus, henceforth we always use the integer order $<$ for the domains.

Let us now focus on a single recursive call.
Assume that we are given a forest-CSP instance
$$\mathcal{I} = (F,(D_T)_{T \in \trees(F)}, (g_e)_{e \in E(F)}, \mathcal{C})$$ of size $s$.
For two nodes $y,y' \in V(F)$ in the same tree $T$ of $F$, we denote
$$g_{y \to y'} = g_{e_r} \circ g_{e_{r-1}} \circ \ldots \circ g_{e_1},$$
where $e_1,e_2,\ldots,e_r$ is the unique path from $y$ to $y'$ in $T$.
Thus, if $\phi$ is a satisfying assignment, then $\phi(y') = g_{y \to y'}(\phi(y))$.
(And, moreover, $\phi(y) = g_{y' \to y}(\phi(y'))$ since each segment reversion $g_e$ satisfies $g_e = g_e^{-1}$.)
In other words, a fixed value of one variable in a tree $T$ fixes the values of all variables in that
tree. Thus, there are $|D_T|$ possible assignments of all variables of a tree $T$ and we can
enumerate them in time $\Oh(|T| \cdot |D_T|)$.
We need the following auxiliary operations.

\medskip
\paragraph{Forbidding a value.}
We define the operation of \emph{forbidding value $a \in D_{\tree(y)}$ for variable $y \in V(F)$} as follows. 
Let $T = \tree(y)$.
Intuitively, we would like to delete $a$ from the domain
of $y$ and propagate this deletion to all $y' \in V(T)$ and constraints binding
variables of $T$. 
Formally, we let $D_T' = \{1, 2, \ldots, |D_T|-1\}$. 
For every $y' \in V(T)$, we define $\alpha_{y'} : D_T' \to D_T$
as $\alpha_{y'}(b) = b$ if $b < g_{y \to y'}(a)$ and
$\alpha_{y'}(b) = b+1$ if $b \geq g_{y \to y'}(a)$. 
In every constraint $C = (y_1,y_2,R_C)$ and $j \in \{1,2\}$, if $y_j \in V(T)$, 
   then we replace $R_C$ with $R_C'$ defined as follows,
\ifdefined\twocolumnflag%
\begin{align*}
R_C' &= \{(x_1,x_2) \in D_T' \times D_{\tree(y_2)} \mid (\alpha_{y_1}(x_1),x_2) \in R_C\} \\
    &\qquad\mathrm{if\ }j=1,\\
R_C' &= \{(x_1,x_2) \in D_{\tree(y_1)} \times D_T' \mid (x_1,\alpha_{y_1}(x_2)) \in R_C\} \\
    &\qquad\mathrm{if\ }j=2.
\end{align*}%
\else
\begin{align*}
R_C' &= \{(x_1,x_2) \in D_T' \times D_{\tree(y_2)} \mid (\alpha_{y_1}(x_1),x_2) \in R_C\} &\mathrm{if\ }j=1,\\
R_C' &= \{(x_1,x_2) \in D_{\tree(y_1)} \times D_T' \mid (x_1,\alpha_{y_1}(x_2)) \in R_C\} &\mathrm{if\ }j=2.
\end{align*}%
\fi
(Note that $y_1$ and $y_2$ are not necessarily in different trees.)
Observe that each domain remains of the form $\{0, 1, \ldots, \ell\}$ for some $\ell \in \mathbb{N}$.
It is straightforward to verify that $R_C'$ is downwards-closed as $R_C$ is downwards-closed. 
Furthermore, a direct check shows that:
  \begin{enumerate}
  \item If $\phi$ is a satisfying assignment to the original instance such that $\phi(y) \neq a$,
  then $\phi(y') \neq g_{y \to y'}(a)$ for every $y' \in V(T)$. Moreover, the assignment 
  $\phi'$ defined as $\phi'(y') = \alpha_{y'}^{-1}(\phi(y'))$ for every $y' \in V(T)$
  and $\phi'(y') = \phi(y')$ for every $y' \in V(F) \setminus V(T)$ is a satisfying assignment
  to the resulting instance.
  \item If $\phi'$ is a satisfying assignment to the resulting instance, then
  $\phi$ defined as $\phi(y') = \alpha_{y'}(\phi'(y'))$ for every $y' \in V(T)$, and
  $\phi(y') = \phi'(y')$ for every $y' \in V(F) \setminus V(T)$ is a satisfying assignment
  to the original instance.
\end{enumerate}
\emph{Restricting the domain $D_T$ of a variable $y \in V(T)$ to $A\subseteq D_T$} means forbidding all values
of $D_T \setminus A$ for~$y$.

\bigskip We now describe the steps performed in the recursive call and argue in parallel that the recursive call is correct and diminishing.

\medskip
\paragraph{Preprocessing steps.}
We perform the following preprocessing steps exhaustively. 
\begin{enumerate}
\item If there are either no variables (hence a trivial empty
  satisfying assignment) or a variable with an empty domain (hence an obvious negative answer), solve the instance directly.

  Thus, henceforth we assume $V(F) \neq \emptyset$ and that every domain is nonempty.
  \medskip
  
\item For every constraint $C$ that binds two variables from the same
  tree $T$, we iterate over all $|D_T|$ possible assignments of all variables in $T$ and forbid those that do not satisfy $C$.
  (Recall that fixing the value of one variable of a tree fixes the values of all other variables of that tree.)
  Finally, we delete $C$.
  
  Thus, henceforth we assume that every constraint binds variables from two distinct trees of $F$.
  \medskip
  
\item For every constraint $C$, for both variables $y_j$, $j = 1, 2$, that are bound by~$C$, and for every $a \in D_{\tree(y_j)}$, if there is no $b \in D_{\tree(y_{3-j})}$ such that
  $(a,b)$ satisfies $C$, we forbid $a$ for the variable $y_j$.
  
  Thus, henceforth we assume that for every constraint $C$, every variable it binds, and every
  possible value $a$ of this variable, there is at least one value of the other variable bound
  by $C$ that together with $a$ satisfies $C$.
\end{enumerate}
Clearly, the above preprocessing steps can be performed exhaustively in polynomial time and they do not increase the apparent size of the instance.

\bigskip

We next perform three branching steps.
Ultimately, in each of the subcases we consider we will make a recursive call.
However, the branching steps~1 and~2 both hand one subcase down for treatment in the later branching steps.

For every $T \in \trees(F)$, pick arbitrarily some node $x_T \in V(T)$.
Assume that $\mathcal{I}$ is satisfiable and let $\phi$ be a satisfying assignment that is minimal
in the following sense.
For every $T \in \trees(F)$, we require that either
$\phi(x_T) = 1$
or if we replace the value $\phi(x_T)$ with $\phi(x_T)-1$
and the value $\phi(y)$ with $g_{x_T \to y}(\phi(x_T)-1)$ for every $y \in V(T)$,
we violate some constraint.
Note that if $\mathcal{I}$ is satisfiable then such an assignment exists, because each domain~$D_T$ has the form $\{1, 2, \ldots, |D_T|\}$ and thus $\phi(x_T)-1, g_{x_T \to y}(\phi(x_T)-1) \in D_T$.

\medskip
\paragraph{First branching step.}
We branch into $1+|\trees(T)| \leq s+1$ subcases, guessing whether there exists a tree~$T$ such that the variable $x_T$ satisfies $\phi(x_T) = 1$ and which tree it is precisely.
If we have guessed that no such tree exists, we proceed to the next steps of the algorithm with the assumption that $\phi(x_T) > 1$ for every $T \in \trees(F)$.
The other subcases are labeled by the trees of $F$.
In the subcase for $T \in \trees(F)$, we guess that $\phi(x_T) = 1$.
For every constraint $C = (y_1,y_2,R_C)$ that binds $y_j \in V(T)$ with another variable $y_{3-j} \notin V(T)$, we restrict the domain $D_{\tree(y_{3-j})}$ of $y_{3-j}$ to only values $b$ such that $(g_{x_T \to y_j}(1),b) \in R_C$.
Finally, we delete the tree $T$ and all constraints binding variables of $V(T)$, and invoke a recursive call on the resulting instance.

To see that this step is diminishing, note that, due to the deletion of $T$, the apparent size in the recursive
call is reduced by at least one. 
For correctness, clearly, if $\phi(x_T)=1$, then the resulting instance is satisfiable
and any satisfying assignment to the resulting instance can be extended
to a satisfying assignment of $\mathcal{I}$ by assigning $g_{x_T \to y}(1)$ to $y$
for every~$y \in V(T)$.

\medskip
\paragraph{Second branching step.}
We guess whether there exists an edge $yy' \in E(F)$ such that $\phi(y)$ is an endpoint of a segment of~$g_{yy'}$.
If we have guessed that no such edge $yy'$ exists, we proceed to the next steps of the algorithm.
Otherwise, we guess $yy' \in E(F)$, one endpoint~$y$, and whether $\phi(y)$ is the left or the right endpoint of a segment of $g_{yy'}$, leading to at most $4|E(F)| \leq 4s$ subcases.
(Note that $|E(F)| \leq |V(F)| \leq s$.)
We restrict the domain $D_{\tree(y)}$ of $y$ to only those values~$a$ such that $a$~is the left/right (according to the guess) endpoint of a segment of~$g_{yy'}$.
Observe that now $g_{yy'}$ is an identity, as each of its segment has been reduced to a singleton.
Consequently, we do not change the set of satisfying assignments if we contract the edge $yy'$ in the tree $\tree(y)$ and, for every constraint binding $y$ or $y'$, modify $C$ to bind instead the image of the contraction of the edge~$yy'$.
This decreases $s$ by one and we pass the resulting instance to a recursive subcall.

\medskip
\paragraph{Third branching step.}
We now proceed to the last branching step
with the case where no edge~$yy'$ as in branching step~2 exists.
Recall that also from the first branching step we can assume that $\phi(x_T) > 1$ for every $T \in \trees(F)$.
Pick an arbitrary tree $T \in \trees(F)$. 
Using the minimality of $\phi$, we now guess which constraint $\Gamma = (y_1, y_2, R_\Gamma)$
is violated if we replace $\phi(x_T)$ with $\phi(x_T)-1$
and $\phi(y)$ with $g_{x_T \to y}(\phi(x_T)-1)$ for every $y \in V(T)$.
By symmetry, assume $y_1 \in V(T)$.
Since, due to preprocessing, every constraint binds variables of two distinct trees, $y_2 \notin V(T)$.
Let $S = \tree(y_2)$.
Note that we have at most $s$ subcases in this branching step.

We now aim to show that assigning a value to $y_1$ fixes the value of $y_2$ via constraint~$\Gamma$.
Consequently, we will be able to remove $\Gamma$ and merge the trees $S$ and $T$, resulting in a smaller forest-CSP instance, which we can solve recursively. 

Recall that for every $a \in D_S$
there exists at least one $b \in D_T$ with $(b,a) \in R_\Gamma$, by preprocessing step~3.
Since $R_\Gamma$ is a downwards-closed relation, there exists a nonincreasing
function $f' : D_S \to D_T$ such that
$$R_\Gamma = \{(b,a) \in D_T \times D_S \mid b \leq f'(a)\}.$$
The crucial observation is the following.
\begin{claim}\label{cl:csp}
  Assume that $\phi$ exists and all guesses in the current recursive call have been made correctly. Then, $\phi(y_1) = f'(\phi(y_2))$.
\end{claim}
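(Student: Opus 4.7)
The plan is to consider the modified assignment $\phi'$ obtained from $\phi$ by decrementing the value at $x_T$ by one and propagating the change through $T$. Formally, set $\phi'(x_T) = \phi(x_T) - 1$ (well-defined since $\phi(x_T) > 1$ by branching step~1), $\phi'(y) = g_{x_T \to y}(\phi(x_T) - 1)$ for all $y \in V(T)$, and $\phi'(y) = \phi(y)$ for all $y \in V(F) \setminus V(T)$. By the minimality of $\phi$ and the correctness of the current guess, $\phi'$ fails exactly the constraint $\Gamma$, so $(\phi'(y_1), \phi'(y_2)) \notin R_\Gamma$. Since $y_2 \notin V(T)$, we have $\phi'(y_2) = \phi(y_2)$, so the violation concerns only how $\phi(y_1)$ changes.

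The first key step is to show by induction along the path $x_T = z_0, z_1, \ldots, z_r = y_1$ in $T$ that $\phi'(z_i) \in \{\phi(z_i) - 1, \phi(z_i) + 1\}$. The base case is immediate. For the inductive step at $z_i$ reached via edge $e_i = z_{i-1}z_i$, note that branching step~2 guarantees that $\phi(z_{i-1})$ is not at any endpoint of a segment of $g_{e_i}$; hence $\phi(z_{i-1})$ and both $\phi(z_{i-1}) \pm 1$ lie in the same segment of $g_{e_i}$, and within a single segment a segment reversion sends consecutive values to consecutive values. Therefore shifting $\phi(z_{i-1})$ by $\pm 1$ shifts $g_{e_i}(\phi(z_{i-1})) = \phi(z_i)$ by $\mp 1$. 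In particular, $\phi'(y_1) \in \{\phi(y_1) - 1, \phi(y_1) + 1\}$.

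Next, I would rule out $\phi'(y_1) = \phi(y_1) - 1$: since $R_\Gamma$ is downwards-closed and $(\phi(y_1), \phi(y_2)) \in R_\Gamma$, we would get $(\phi(y_1) - 1, \phi(y_2)) \in R_\Gamma$, contradicting that $\Gamma$ is violated by $\phi'$. So $\phi'(y_1) = \phi(y_1) + 1$ and $(\phi(y_1) + 1, \phi(y_2)) \notin R_\Gamma$. Using the characterization $R_\Gamma = \{(b,a) \in D_T \times D_S \mid b \leq f'(a)\}$, the inclusion $(\phi(y_1), \phi(y_2)) \in R_\Gamma$ gives $\phi(y_1) \leq f'(\phi(y_2))$, while the non-inclusion $(\phi(y_1) + 1, \phi(y_2)) \notin R_\Gamma$ gives $\phi(y_1) + 1 > f'(\phi(y_2))$, i.e., $\phi(y_1) \geq f'(\phi(y_2))$. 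Combining these yields $\phi(y_1) = f'(\phi(y_2))$, as required. The main obstacle is the propagation lemma in step two; once one sees that the ``no endpoint'' assumption exactly guarantees that a single-step decrement propagates as a $\pm 1$ shift through every segment reversion along the path, the rest is a direct unpacking of downwards-closure and the definition of $f'$.
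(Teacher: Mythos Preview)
Your proof is correct and follows essentially the same approach as the paper: both use the ``no endpoint'' guarantee from the second branching step to propagate the $\pm 1$ shift inductively along the path from $x_T$ to $y_1$, then invoke downwards-closedness of $R_\Gamma$ to force $\phi'(y_1) = \phi(y_1) + 1$, and finally read off the equality $\phi(y_1) = f'(\phi(y_2))$ from the definition of $f'$. Your write-up is slightly more explicit about the inductive propagation and the final inequality chain, but the argument is the same.
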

\begin{proof}
Since we made a correct guess at the second branching step, for every
edge $yy'$ on the path in~$T$ from $x_T$ to $y_1$ (with $y'$ closer than $y$ to $x_T$),
the value $\phi(y) = g_{x_T \to y}(\phi(x_T))$ is not an endpoint of $g_{yy'}$.
Inductively from $x_T$ to $y_1$, we infer that 
for every $y$ on the path from $y_1$ to $x_T$ 
we have that $\phi(y) = g_{x_T \to y}(\phi(x_T))$
and $g_{x_T \to y}(\phi(x_T)-1)$ are two consecutive integers.
In particular, $\phi(y_1)$ and $g_{x_T \to y_1}(\phi(x_T)-1)$ are two consecutive integers.

By choice of $\Gamma$, we have $(g_{x_T \to y_1}(\phi(x_T)-1), \phi(y_2)) \notin R_\Gamma$ but 
$(\phi(y_1), \phi(y_2)) \in R_\Gamma$.
Since $R_\Gamma$ is downwards-closed, 
this is only possible if $g_{x_T \to y_1}(\phi(x_T)-1) = \phi(y_1)+1$
and hence $\phi(y_1) = f'(\phi(y_2))$. 
This concludes the proof of the claim.
\end{proof}

Claim~\ref{cl:csp} implies that by fixing an assignment 
of the tree $S$, we induce an assignment of $T$ via the function $f'$.
We would like to merge the two trees $S$ and $T$ via an edge $y_1y_2$, labeled with $f'$.
However, $f'$ is not a segment reversion, but a nonincreasing function.
Thus, we need to perform some work to get back to a forest-CSP instance representation.
For this, we will leverage Lemma~\ref{lem:seg-swap}.

Let $g^\circ$ be a segment reversion with one segment, reversing the whole $D_S$.
Let $f'' = f' \circ g^\circ$, that is, $f'' : D_S \to D_T$
and $f'' \circ g^\circ = f'$. Observe that since $f'$ is nonincreasing, $f''$ is nondecreasing.

We perform the following operation on $T$ that will result in defining
segment reversions $g_e'$ of $D_T$ for every $e \in E(T)$ 
and nondecreasing functions $f_y: D_S \to D_T$ for every $y \in V(T)$ as follows.
We temporarily root $T$ at $y_1$. 
We initiate $f_{y_1} = f''$.
Then, in a top-to-bottom manner, 
for every edge $yy'$ between a node~$y$ and its parent $y'$ such that $f_{y'}$ is already
defined, we invoke Lemma~\ref{lem:seg-swap}
to $f_{y'}$ and the segment reversion~$g_{yy'}$, obtaining 
a segment reversion $g'_{yy'}$ of $D_S$ and a nondecreasing function $f_y : D_S \to D_T$
such that 
\begin{equation}\label{eq:propagate}
g_{yy'} \circ f_{y'} = f_y \circ g'_{yy'}. 
\end{equation}

We merge the trees $S$ and $T$ into one tree $T'$ by adding an edge $y_1y_2$
and define $g_{y_1y_2}' = g^\circ$. We set $D_{T'} = D_S$; observe
that all $g'_e$ for $e \in E(T)$ as well as $g_{y_1y_2}'$ are segment reversions of $D_S$.
Let $F'$ be the resulting forest.
For every $e \in E(F) \setminus E(T)$, we define $g'_e = g_e$.
Similarly as we defined $g_{y \to y'}$, we define $g'_{y \to y'}$ for every two
vertices $y,y'$ of the same tree of $F'$
as $g'_{e_r} \circ g'_{e_{r-1}} \circ \ldots \circ g'_{e_1}$ where
$e_1,e_2,\ldots,e_r$ are the edges on the path from $y$ to $y'$ in $F'$. 
Note that $g'_{y \to y'} = g_{y \to y'}$ when $y,y' \notin V(T')$
or $y,y' \in V(S)$. 

We now define a modified set of constraints $\mathcal{C}'$ as follows.
Every constraint $C \in \mathcal{C}$ that does not bind any variable of $T$
we insert into $\mathcal{C}'$ without modifications.
For every constraint $C \in \mathcal{C}$ that binds a variable of $T$, we proceed as follows.
By symmetry, assume that $C = (z_1,z_2,R_C)$ with $z_1 \in V(T)$ and $z_2 \notin V(T)$. 
Recall that $R_C \subseteq D_T \times D_{\tree(z_2)}$ and $f_{z_1} : D_S \to D_T$.
We apply Lemma~\ref{lem:dc-comp} to $R_C$ and $f_{z_1}$, obtaining
a downwards-closed relation $R_C' \subseteq D_S \times D_{\tree(z_2)}$ such that
$$(a,b) \in R_C' \Leftrightarrow (f_{z_1}(a), b) \in R_C.$$
We insert $C' := (z_1,z_2,R_C')$ into $\mathcal{C}'$. 

Let $\mathcal{I}' = (F',(D_T)_{T \in \trees(F)}, (g_e')_{e \in E(F')}, \mathcal{C}')$
be the resulting forest-CSP instance. 
Note that $|V(F')| \leq |V(F)|$, $|\mathcal{C}'| \leq |\mathcal{C}|$, while
$|\trees(F')| < |\trees(F)|$. Thus, the apparent size of $\mathcal{I}'$ is smaller
than the apparent size of $\mathcal{I}$.
We pass $\mathcal{I}'$ to a recursive subcall.

To complete the proof of Lemma~\ref{lem:csp}, it remains to show correctness of branching step~3.
This is done in the next two claims.

\begin{claim}\label{cl:csp1}
Let $\zeta'$ be a satisfying assignment to $\mathcal{I}'$.
Define an assignment $\zeta$ to $\mathcal{I}$ as follows.
For every $y \in V(F) \setminus V(T)$, set $\zeta(y) = \zeta'(y)$.
For every $y \in V(T)$, set $\zeta(y) = f_y(\zeta'(y))$.
Then $\zeta$ is a satisfying assignment to $\mathcal{I}$.
\end{claim}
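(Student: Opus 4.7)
The plan is to verify, separately, that $\zeta$ satisfies (a) every edge constraint $g_e$ for $e \in E(F)$, and (b) every relational constraint in $\mathcal{C}$. On the set $V(F) \setminus V(T)$ the two assignments $\zeta$ and $\zeta'$ literally coincide, so I only have to worry about what happens on $V(T)$, where $\zeta(y) = f_y(\zeta'(y))$ is applied on top of $\zeta'$. Everything will boil down to the two equations that defined the construction: the propagation identity~\eqref{eq:propagate} and the relation identity $(a,b) \in R_C' \Leftrightarrow (f_{z_1}(a),b) \in R_C$ produced by Lemma~\ref{lem:dc-comp}.

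For the edge constraints, first consider $e = yy' \in E(F) \setminus E(T)$. Such an edge lies entirely outside $T$, we kept $g_e' = g_e$, and $\zeta, \zeta'$ agree on $\{y, y'\}$, so $g_e(\zeta(y)) = g_e'(\zeta'(y)) = \zeta'(y') = \zeta(y')$ because $\zeta'$ satisfies $\mathcal{I}'$. Now take $e = yy' \in E(T)$, where (after rooting $T$ at $y_1$) I may assume $y'$ is the parent of $y$. Applying~\eqref{eq:propagate} to $\zeta'(y')$ and using that $\zeta'$ satisfies $g'_{yy'}(\zeta'(y)) = \zeta'(y')$ (equivalently, $g'_{yy'}(\zeta'(y')) = \zeta'(y)$ by involution), I get
\[
g_{yy'}(\zeta(y')) = g_{yy'}(f_{y'}(\zeta'(y'))) = f_y(g'_{yy'}(\zeta'(y'))) = f_y(\zeta'(y)) = \zeta(y),
\]
hence $\zeta(y') = g_{yy'}(\zeta(y))$, as required. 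To make this watertight I would iterate this reasoning as a straightforward induction on the depth in $T$ so that \eqref{eq:propagate} is applied only to edges where both $f_y$ and $f_{y'}$ have already been defined by the top-down construction.

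For the constraints in $\mathcal{C}$, a constraint $C = (z_1, z_2, R_C)$ that binds no variable of $T$ is copied verbatim into $\mathcal{C}'$, and both $\zeta(z_j) = \zeta'(z_j)$, so satisfaction transfers directly. Otherwise, by the symmetry fixed in the construction, $z_1 \in V(T)$ and $z_2 \notin V(T)$, and $\mathcal{C}'$ contains $C' = (z_1, z_2, R_C')$ with $R_C'$ defined via Lemma~\ref{lem:dc-comp} so that $(a, b) \in R_C'$ iff $(f_{z_1}(a), b) \in R_C$. Since $\zeta'$ satisfies $C'$, we have $(\zeta'(z_1), \zeta'(z_2)) \in R_C'$, which by the definition of $R_C'$ gives $(f_{z_1}(\zeta'(z_1)), \zeta'(z_2)) \in R_C$, i.e., $(\zeta(z_1), \zeta(z_2)) \in R_C$. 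This case also covers the distinguished constraint $\Gamma$: it binds $y_1 \in V(T)$ and $y_2 \in V(S) \not\subseteq V(T)$, so the same argument applies (with $f_{y_1} = f'' = f' \circ g^\circ$).

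The main obstacle is keeping the indices and the direction of the propagation clean in step~(a): one must resist the temptation to apply~\eqref{eq:propagate} to $\zeta'(y)$ instead of $\zeta'(y')$, since only the latter produces the identity $g_{yy'}(\zeta(y')) = \zeta(y)$ needed to conclude. The rest is bookkeeping, since the nontrivial analytical work (producing $f_y$ and $g'_e$, and producing $R_C'$) has already been delegated to Lemmas~\ref{lem:seg-swap} and~\ref{lem:dc-comp}.
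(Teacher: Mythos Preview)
Your proof is correct and follows essentially the same approach as the paper: you verify the edge constraints by splitting into $e \notin E(T)$ (trivial) versus $e \in E(T)$ (apply~\eqref{eq:propagate}), and the relational constraints by splitting into those not touching $T$ (trivial) versus those with $z_1 \in V(T)$ (apply the defining equivalence of $R_C'$). The only minor omission is that the paper first observes $\zeta$ is a well-defined assignment because each $f_y$ has codomain $D_T$, and your remark about needing an induction on depth is unnecessary since~\eqref{eq:propagate} already holds edge-by-edge.
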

\begin{proof}
To see that $\zeta$ is an assignment, that is, maps each variable into its domain, since every function $f_y$ for $y\in V(T)$ has domain $D_S = D_{T'}$ and codomain $D_T$,
every $y \in V(T)$ satisfies $\zeta(y) \in D_T$. 

To see that $\zeta$ is a satisfying assignment, consider first the condition on the forest edges.
Pick $e = yy' \in E(F)$. If $e \notin E(T)$, then $\zeta(y) = \zeta'(y)$, $\zeta(y') = \zeta'(y')$,
     $g_e = g'_e$,
and obviously $\zeta(y') = g'_e(\zeta(y))$.
Otherwise, assume without loss of generality that $y'$ is closer than $y$ to $y_1$ in $T$.
Then~\eqref{eq:propagate} ensures that
\ifdefined\twocolumnflag%
\begin{align*}
g_{yy'} (\zeta(y')) &=  g_{yy'}(f_{y'}(\zeta'(y'))) = f_y(g'_{yy'}(\zeta'(y')) \\
                    &= f_y(\zeta'(y)) = \zeta(y)
\end{align*}
\else
$$g_{yy'} (\zeta(y')) =  g_{yy'}(f_{y'}(\zeta'(y'))) = f_y(g'_{yy'}(\zeta'(y')) = f_y(\zeta'(y)) = \zeta(y)$$
\fi
as desired.

Now pick a constraint $C \in \mathcal{C}$ and let us show that $\zeta$ satisfies~$C$.
If $C$ does not bind a variable of $T$, then $C \in \mathcal{C}'$ and 
$\zeta$ and $\zeta'$ agree on the variables bound by $C$, hence $\zeta$ satisfies $C$.
Otherwise, without loss of generality, $C = (z_1,z_2,R_C)$ with $z_1 \in V(T)$
and there is the corresponding constraint $C' = (z_1,z_2,R_C')$ in $\mathcal{C}'$ as defined above.
Since $\zeta'$ satisfies $C'$, we have
$(\zeta'(z_1),\zeta'(z_2)) \in R_C'$. 
By the definition of $R_C'$, this is equivalent to
$(f_{z_1}(\zeta'(z_1)),\zeta'(z_1)) \in R_C$. 
Since $\zeta(z_1) = f_{z_1}(\zeta'(z_1))$ (as $z_1 \in V(T)$) and $\zeta(z_2) = \zeta'(z_2)$, this 
is equivalent to $(\zeta(z_1), \zeta(z_2)) \in R_C$. 
Hence, $\zeta$ satisfies the constraint $C$. This finishes the proof of the claim.
\end{proof}

\begin{claim}\label{cl:csp2}
Let $\zeta$ be a satisfying assignment to $\mathcal{I}$
that additionally satisfies $\zeta(y_1) = f'(\zeta(y_2))$. 
Define an assignment $\zeta'$ to $\mathcal{I}'$ as follows.
For every $y \in V(F) \setminus V(T)$, set $\zeta'(y) = \zeta(y)$.
For every $y \in V(T)$, set $\zeta'(y) = g'_{y_2 \to y}(\zeta(y_2))$.
Then $\zeta'$ is a satisfying assignment to $\mathcal{I}'$.
\end{claim}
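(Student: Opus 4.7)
The plan is to first establish the key identity $\zeta(y) = f_y(\zeta'(y))$ for every $y \in V(T)$; once this identity is in place, verifying that $\zeta'$ respects the edge reversions of $F'$ and the downwards-closed constraints of $\mathcal{C}'$ becomes a direct unfolding of definitions, closely mirroring the structure of the proof of Claim~\ref{cl:csp1}.

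To prove the key identity I would argue by induction on the distance from $y_1$ in $T$. At the root $y_1$, the definition of $\zeta'$ gives $\zeta'(y_1) = g'_{y_2 \to y_1}(\zeta(y_2)) = g^\circ(\zeta(y_2))$, because the unique path from $y_2$ to $y_1$ in $F'$ consists of the single new edge $y_1 y_2$ whose reversion is $g^\circ$. Using $f_{y_1} = f'' = f' \circ g^\circ$ and the involution property $g^\circ \circ g^\circ = \mathrm{id}$, one gets $f_{y_1}(\zeta'(y_1)) = f'(g^\circ(g^\circ(\zeta(y_2)))) = f'(\zeta(y_2)) = \zeta(y_1)$ by the extra hypothesis on $\zeta$. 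For the inductive step, consider an edge $e = yy' \in E(T)$ with $y'$ closer to $y_1$ and assume the identity at $y'$. Since $\zeta$ satisfies $\mathcal{I}$ we have $\zeta(y) = g_{yy'}(\zeta(y'))$, while decomposing the $F'$-path $y_2 \to y$ through $y'$ yields $\zeta'(y) = g'_{yy'}(\zeta'(y'))$. The propagation equation~\eqref{eq:propagate}, $g_{yy'} \circ f_{y'} = f_y \circ g'_{yy'}$, then gives
\[
f_y(\zeta'(y)) = f_y(g'_{yy'}(\zeta'(y'))) = g_{yy'}(f_{y'}(\zeta'(y'))) = g_{yy'}(\zeta(y')) = \zeta(y),
\]
completing the induction.

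With the identity in hand I would finish by checking the edges and the constraints of $\mathcal{I}'$ separately. Edges of $F'$ that lie outside $V(T)$ (in particular all edges of $S$) have their reversions unchanged and both endpoints evaluated identically under $\zeta$ and $\zeta'$, so they are satisfied. The new edge $y_1 y_2$ with reversion $g^\circ$ is satisfied by the base case of the induction. Edges within $T$ are satisfied by the computation $\zeta'(y) = g'_{yy'}(\zeta'(y'))$ derived in the inductive step. For the relational constraints, any $C \in \mathcal{C}$ that does not touch $V(T)$ is copied into $\mathcal{C}'$ unchanged and is satisfied because $\zeta' \equiv \zeta$ on its variables. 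For any $C = (z_1,z_2,R_C)$ with, say, $z_1 \in V(T)$ and $z_2 \notin V(T)$, the associated $C' = (z_1,z_2,R_C')$ in $\mathcal{C}'$ has $R_C' = \{(a,b) \mid (f_{z_1}(a),b) \in R_C\}$; the identity gives $(f_{z_1}(\zeta'(z_1)), \zeta'(z_2)) = (\zeta(z_1), \zeta(z_2)) \in R_C$, hence $(\zeta'(z_1), \zeta'(z_2)) \in R_C'$ as required.

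The main obstacle I anticipate is purely bookkeeping: correctly decomposing the path compositions $g'_{y_2 \to y}$ through the new edge $y_1 y_2$ and keeping track of the translation between the original reversions $g_e$ and the modified ones $g'_e$ via~\eqref{eq:propagate}. Once the key identity is set up, everything else is formal and the claim follows.
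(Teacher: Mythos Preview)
Your proposal is correct and essentially matches the paper's argument. The paper establishes the same identity by composing~\eqref{eq:propagate} along the path from $z_1$ to $y_1$ to obtain the functional equation $g_{y_1 \to z_1} \circ f' = f_{z_1} \circ g'_{y_2 \to z_1}$ and then evaluating at $\zeta(y_2)$; your inductive derivation of the pointwise identity $\zeta(y) = f_y(\zeta'(y))$ is just the same composition carried out one edge at a time, and the remaining checks (edges of $F'$, unchanged constraints, and constraints touching $V(T)$) are handled identically.
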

\begin{proof}
To see that $\zeta'$ is indeed an assignment, it is immediate from the definition of $\mathcal{I}'$ that 
for every tree~$A$ of $F'$ and $y \in V(A)$ we have $\zeta'(y) \in D_A$.
To see that $\zeta'$ is a satisfying assignment, by definition, for every $e = yy' \in E(F')$ we have $\zeta'(y') = g'_e(\zeta'(y))$.
Also, obviously $\zeta'$ satisfies all constraints of $\mathcal{C}'$ that come
unmodified from a constraint of $\mathcal{C}$ that does not bind a variable of $V(T)$.
It remains to show that the remaining constraints are satisfied.

Consider a constraint $C' = (z_1,z_2,R_C') \in \mathcal{C}'$ that comes
from a constraint $C = (z_1,z_2,R_C) \in \mathcal{C}$ binding a variable of $V(T)$.
Without loss of generality, $z_1 \in V(T)$ and $z_2 \notin V(T)$.
By composing~\eqref{eq:propagate} over all edges on the path from $z_1$ to $y_1$ in $T$
we obtain that
$$g_{y_1 \to z_1} \circ f'' = f_{z_1} \circ g'_{y_1 \to z_1}.$$
By composing the above with $g^\circ$ on the right and using $f'' = f' \circ g^\circ$ (hence $f'' \circ g^\circ = f'$) and $g^\circ = g'_{y_1y_2}$, we obtain that
\begin{equation}\label{eq:csp2}
g_{y_1 \to z_1} \circ f' = f_{z_1} \circ g'_{y_2 \to z_1}.
\end{equation}
By the definition of $R_C'$, we have that 
$(\zeta'(z_1), \zeta'(z_2)) \in R_C'$ is equivalent to
$$(f_{z_1}(\zeta'(z_1)), \zeta'(z_2)) \in R_C.$$
By the definition of $\zeta'$, this is equivalent to
$$(f_{z_1} \circ g'_{y_2 \to z_1}(\zeta(y_2)), \zeta(z_2)) \in R_C.$$
By~\eqref{eq:csp2}, this is equivalent to 
$$(g_{y_1 \to z_1} \circ f'(\zeta(y_2)), \zeta(z_2)) \in R_C.$$
Since $f'(\zeta(y_2) = \zeta(y_1)$, this is equivalent to
$$(g_{y_1 \to z_1} (\zeta(y_1)), \zeta(z_2)) \in R_C.$$
By the definition of $g_{y_1 \to z_1}$, this is in turn equivalent to
$$(\zeta(z_1), \zeta(z_2)) \in R_C,$$
which follows as $\zeta$ satisfies $C$. This finishes the proof of the claim.
\end{proof}

Claims~\ref{cl:csp1} and~\ref{cl:csp2} show the correctness of the third branching step,
concluding the proof of Lemma~\ref{lem:csp} and of Theorem~\ref{thm:csp}.

\section{From Optimal Discretization to the auxiliary CSP}\label{sec:redblue}
To prove \cref{thm:main} we give an algorithm that constructs a branching tree.
At each branch, the algorithm tries a limited number of options for some property of the solution.
At the leaves it will then assume that the chosen options are correct and reduce the resulting restricted instance of \optdis{} to the auxiliary CSP from \cref{sec:csp}.
We first give basic notation for the building blocks of the solution in \cref{sec:cells}.
The branching tree is described in \cref{sec:init-branch-steps}.
The reduction to the auxiliary CSP is given in \cref{sec:csp-formulation,sec:simple-steps,sec:alternations,sec:consist-DE,sec:alternating-lines}.
Throughout the description of the algorithm, we directly argue that it satisfies the running time bound and that it is sound, meaning that, if there is a solution, then a solution will be found in some branch of the branching tree.
We argue in the end, in \cref{sec:completeness,sec:wrap-up}, that the algorithm is complete, that is, if it does not return that the input is a no-instance, then the returned object is a solution.

\subsection{Approximate solution and cells}\label[section]{sec:cells}

Let $(W_1,W_2,k)$ be an input to the decision version of \optdis{}.
We assume that $W_1 \cap W_2 = \emptyset$, as otherwise there is no solution.

Using a known factor-$2$ approximation algorithm~\cite{CalinescuDKW05}, we compute in polynomial time a separation
$(\Xapx, \Yapx)$. If $|\Xapx| + |\Yapx| > 2k$, we report that the input instance is a no-instance.
Otherwise, we proceed further as follows.

\paragraph{Discretization.}
Let $n = |W_1| + |W_2|$.
By simple discretization and rescaling, we can assume that 
\begin{itemize}
\item every point in $W_1 \cup W_2$ has both coordinates
being positive integers from $[3n]$ and divisible by $3$, 
\item the sought solution $(X,Y)$ consists of integers from $[3n]$ that are equal to $2$ modulo $3$. 
\item every element of $\Xapx \cup \Yapx$ is an integer from $[3n]$ that is equal to $1$ modulo $3$.
\end{itemize}
Furthermore, we add $1$ and $3n+1$ to both $\Xapx$ and $\Yapx$ (if not already present). Thus, $|\Xapx| + |\Yapx| \leq 2k+4$,
$\Xapx, \Yapx \subseteq \{3i+1 \mid i \in \{0,1,\ldots,n\}\}$ and for every $(x,y) \in W_1 \cup W_2$
we have that $x$ is between the minimum and maximum element of $\Xapx$ and $y$ is between the minimum and maximum element of $\Yapx$.
We henceforth refer to the properties obtained in this paragraph as the \emph{discretization properties}.

\paragraph{Total orders \boldmath $\leqx, \leqy$.}
We will use two total orders on points of $W_1 \cup W_2$:
\begin{itemize}
\item $(x,y) \leqx (x',y')$ if $x < x'$ or both $x = x'$ and $y \leq y'$;
\item $(x,y) \leqy (x',y')$ if $y < y'$ or both $y = y'$ and $x \leq x'$.
\end{itemize}
For a set $W \subseteq W_1 \cup W_2$, the \emph{topmost} point is the $\leqy$-maximum one,
    the \emph{bottommost} is the $\leqy$-minimum,
    the \emph{leftmost} is the $\leqx$-minimum, and
    the \emph{rightmost} is the $\leqx$-maximum one.
    Finally, an \emph{extremal} point in $W$ is the topmost, bottommost, leftmost, or the rightmost point in $W$;
    there are at most four extremal points in a set $W$.

Assume that the input instance is a yes-instance and let $(X,Y)$ be a sought solution: 
a separation for $(W_1,W_2)$ with $|X|+|Y| \leq k$ and $X,Y \subseteq \{3i-1 \mid i \in [n]\}$. 

\paragraph{Cells.}
For two consecutive elements $x_1,x_2$ of $\Xapx \cup X$
and two consecutive elements $y_1,y_2$ of $\Yapx \cup Y$, 
define the set $\cell(x_1,y_1) := \{x_1+1, x_1 + 2, \ldots,x_2-1\} \times \{y_1+1, y_1 + 2, \ldots,y_2-1\}$.
Each such set is called a \emph{cell}.
Note that since we require $x_1,x_2$ to be consecutive elements of $\Xapx \cup X$ and similarly
$y_1,y_2$ to be consecutive elements of $\Yapx \cup Y$, the pair $(x_1,y_1)$ determines the corresponding cell uniquely.
The \emph{points in the cell $\cell(x_1,y_1)$} are the points in the set $\cell(x_1,y_1) \cap (W_1 \cup W_2)$.

Similarly, for two consecutive elements $x_1,x_2$ of $\Xapx$
and two consecutive elements $y_1,y_2$ of $\Yapx$, 
an \emph{\apx-supercell} is the set $\apxcell(x_1,y_1) := \{x_1+1, x_1 + 2,\ldots,x_2-1\} \times \{y_1+1, y_1 + 2, \ldots,y_2-1\}$
and the points in this cell are $\apxcell(x_1,y_1) \cap (W_1 \cup W_2)$. 
Also, for two consecutive elements $x_1,x_2$ of $X \cup \{1,3n+1\}$
and two consecutive elements $y_1,y_2$ of $Y \cup \{1,3n+1\}$, 
an \emph{\opt-supercell} is the set $\optcell(x_1,y_1) := \{x_1+1, x_1 + 2, \ldots,x_2-1\} \times \{y_1+1, y_1 + 2,\ldots,y_2-1\}$
and the points in this cell are $\optcell(x_1,y_1) \cap (W_1 \cup W_2)$. 

Clearly, every \apx-supercell or \opt-supercell contains a number of cells and each cell is contained in exactly one
\apx-supercell and exactly one \opt-supercell.
Note that, since $(\Xapx,\Yapx)$ and $(X,Y)$ are separations, all points in one cell, in one \apx-supercell, and in one \opt-supercell
are either from $W_1$ or from $W_2$, or the (super)cell contains no points.

Furthermore, observe that there are $\Oh(k^2)$ cells, \apx-supercells, and \opt-supercells.

We will also need the following general notation. For 
two elements $x_1, x_2 \in \Xapx \cup X$ with $x_1 < x_2$
and two elements $y_1, y_2 \in \Yapx \cup Y$ with $y_1 < y_2$
by $\area(x_1,x_2,y_1,y_2)$ we denote the union of all cells $\cell(x, y)$
that are between $x_1$ and $x_2$ and between $y_1$ and $y_2$, that is,
that satisfy $x_1 \leq x < x_2$ and $y_1 \leq y < y_2$.

\subsection{Branching steps}\label[section]{sec:init-branch-steps}

\begin{figure}[tb]
\begin{center}
\includegraphics{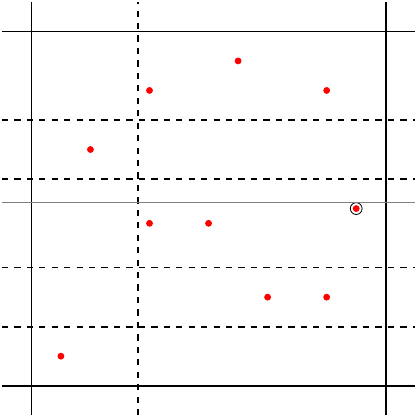}
\caption{Branching Step A. Adding to $\Xapxlines$ (solid) an extra horizontal line (gray) just above the rightmost red point (circled)
  separates some horizontal lines from the solution (dashed) that were not separated before.}\label{fig:branchA}
\end{center}
\end{figure}

In the algorithm we first perform a number of branching steps.
Every step is described in the ``intuitive'' language of \emph{guessing} a property of the solution.
Formally, at every step we are interested in some property of the solution with some (bounded as a function of $k$) number of options and we consider all possible options iteratively.
While considering one of these options, we are interested in finding some solution to the input instance in case $(X,Y)$ satisfies the considered option.
If the solution satisfies the currently considered option, we also say that the corresponding guess is \emph{correct}.

\paragraph{Branching step A: separating elements of the solution.}
We guess whether there exists an \apx-supercell and an extremal point $(x,y)$ in this cell
such that (see Figure~\ref{fig:branchA})
\begin{enumerate}
\item for some $x_1, x_2 \in X$, $x$ is between $x_1$ and $x_2$ while no element of $\Xapx$
is between $x_1$ and $x_2$, or
\item for some $y_1, y_2 \in Y$, $y$ is between $y_1$ and $y_2$ while no element of $\Yapx$
is between $y_1$ and $y_2$.
\end{enumerate}
If we have guessed that this is the case, then we guess $(x, y)$ and, in the first case, we add $x+1$ to $\Xapx$, and in the second case we add $y+1$ to $\Yapx$, and recursively invoke the same branching step.
If we have guessed that no such \apx-cell and an extremal point exist, then we proceed to the next steps of the algorithm.

As $|X|+|Y| \leq k$, the above branching step can be correctly guessed and executed at most $k-1$ times. 
Hence, we limit the depth of the branching tree by $k-1$: at a recursive call at depth $k-1$ we only consider the case where
no such extremal point $(x,y)$ exists.

At every step of the branching process, there are $\Oh(k^2)$ \apx-supercells to choose,
at most four extremal points in every cell, and two options whether the $x$-coordinate of the extremal point separates two elements of $X$
or the $y$-coordinate of the extremal point separates two elements of $Y$.
Thus, the whole branching process
in this step generates $2^{\Oh(k \log k)}$ cases to consider in the remainder of the algorithm, 
where we can assume that no such extremal point $(x,y)$ in any \apx-supercell exists.
Note that the branching does not violate the discretization properties of the elements of $W_1$, $W_2$, $X$, $Y$, $\Xapx$,
and $\Yapx$ and keeps $|\Xapx| + |\Yapx| \leq (2k+4) + (k-1) = 3k+3$.

\paragraph{Branching step B: layout of the solution with regard to the approximate one.}
For every two consecutive elements $x_1,x_2 \in \Xapx$, we guess the number of elements $x \in X$ that are between $x_1$ and $x_2$,
and similarly for every two consecutive elements $y_1,y_2 \in \Yapx$, we guess the number of elements $y \in Y$ that are between $y_1$ and $y_2$.
Recall that $\Xapx \cap X = \emptyset$, $\Yapx \cap Y = \emptyset$, and that $1,3n+1 \in \Xapx \cap \Yapx$, so every element of $X$ and $Y$ is
between two consecutive elements of $\Xapx$ or $\Yapx$, respectively. 
Furthermore, since $|X| + |Y| \leq k$ and $|\Xapx| + |\Yapx| \leq 3k+3$, the above branching leads to $2^{\Oh(k)}$ subcases.

\paragraph{The notions of abstract lines, cells, and their corresponding mappings~\boldmath$\zeta$.} Observe that if we have guessed correctly in Branching Step~B, we know $|X \cup \Xapx|$ and, if we order $X \cup \Xapx$ in the increasing order,
we know which elements of $X \cup \Xapx$ belong to $X$ and which to $\Xapx$; a similar claim holds for $Y \cup \Yapx$. 
We ``only'' do not know the exact values of the elements of $X$ and $Y$, but we have a rough picture of the 
layout of the cells. 
We abstract this information as follows.

We create a totally ordered set $(\Xlines, <)$ of $|X \cup \Xapx|$ elements which we will later refer to as 
\emph{vertical lines}.
Let $\Xsol_X : \Xlines \to \Xapx \cup X$ be a bijection that respects the orders on $\Xlines$ and $\Xapx \cup X \subseteq \mathbb{N}$.\footnote{Respecting the orders means that for each $x, y \in \Xlines$ we have that, if $x \leq y$, then $\Xsol_X(x) \leq \Xsol_X(y)$.}
Let $\Xapxlines = (\Xsol_X)^{-1}(\Xapx)$ be the lines corresponding to the elements of $\Xapx$
and let $\Xoptlines = \Xlines \setminus \Xapxlines$. 
Denote $\Xapxsol = \Xsol_X|_{\Xapxlines}$ and $\Xoptsol_X = \Xsol_X|_{\Xoptlines}$.\footnote{Let $f \colon A \to B$ and $C \subseteq A$. Then $f|_{C}$ is the function resulting from $f$ when removing $A \setminus C$ from the domain of~$f$.}   
Similarly, we define a totally ordered set $(\Ylines, <)$ of $|Y \cup \Yapx|$ \emph{horizontal lines},
  sets $\Yapxlines, \Yoptlines \subseteq \Ylines$ and functions
  $\Ysol_Y$, $\Yapxsol$, and $\Yoptsol_Y$.
Finally, we define $\apxsol = \Xapxsol \cup \Yapxsol$.

Observe that while $\Xsol_X$, $\Xoptsol_X$, $\Ysol_Y$, and $\Yoptsol_Y$ depend on the (unknown to the algorithm) solution
$(X,Y)$, the sets $\Xapxlines$, $\Yapxlines$, $\Xoptlines$, $\Yoptlines$, and functions $\Xapxsol$, $\Yapxsol$, and $\apxsol$
do not depend on $(X,Y)$ and can be computed by the algorithm. This is why we avoid the subscript $X$ or~$Y$ in $\Xapxsol$, $\Yapxsol$, and $\apxsol$.

Our goal can be stated as follows: we want to extend $\Xapxsol$ and $\Yapxsol$ to increasing functions 
$\Xsol : \Xlines \to \mathbb{N}$ and $\Ysol:\Ylines \to \mathbb{N}$ such that
$\{\Xsol(\ell) \mid \ell \in \Xoptlines\}$
and $\{\Ysol(\ell) \mid \ell \in \Yoptlines\}$ is a separation. 

Recall that the notions of cells, \apx-supercells, and \opt-supercells, as well
as the notion $\area()$, have been defined with regard to the
solution $(X,Y)$, but we can also define them with regard to lines $\Xlines$ and $\Ylines$.
That is, for a cell $\cell(x_1,y_1)$, its corresponding 
\emph{abstract cell} is $\cell((\Xsol_X)^{-1}(x_1), (\Ysol_Y)^{-1}(y_1))$.
Let $\Xlines^-$ be the set $\Xlines$ without the maximum element and $\Ylines^-$ be the set $\Ylines$ without the maximum element.
Then we denote the set of abstract cells by $\Cells = \{\cell(\ell_x,\ell_y) \mid \ell_x \in \Xlines^- \wedge \ell_y \in \Ylines^-\}$.
Let $\cell(\ell_x,\ell_y) \in \Cells$ where $\ell_x'$ is the successor of $\ell_x$ in $(\Xlines,<)$
and $\ell_y'$ is the successor of $\ell_y$ in $(\Ylines,<)$. Then we say that $\ell_x$ is the \emph{left} side, $\ell_y$ is the \emph{bottom} side,
$\ell_x'$ is the \emph{right} side, and $\ell_y'$ is the \emph{top} side of~$\cell(\ell_x,\ell_y)$. 

Similarly we define abstract \apx-supercells and abstract \opt-supercells, and
the notion $\area(p_1, p_2, \ell_1, \ell_2)$ for $p_1, p_2 \in \Xlines$, $p_1 < p_2$, $\ell_1, \ell_2 \in \Ylines$,
  $\ell_1 < \ell_2$.
If it does not cause confusion, in what follows we implicitly identify the abstract cell $\cell(\ell_x,\ell_y)$ with its corresponding
cell $\cell(\Xsol_X(\ell_x), \Ysol_Y(\ell_y))$ and similarly for \apx-supercells and \opt-supercells.
Note that for \apx-supercells the distinction between \apx-supercells and abstract \apx-supercells is only in notation
as the functions $\Xapxsol$ and $\Yapxsol$ are known to the algorithm.

\paragraph{Branching step C: contents of the cells and associated mapping~\boldmath$\cont$.}
For every abstract cell $\cell(\ell_x,\ell_y)$, we guess whether the cell $\cell(\Xsol_X(\ell_x), \Ysol_Y(\ell_y))$ contains at least one point of $W_1 \cup W_2$. Since there are $\Oh(k^2)$ cells and two options for each cell, this leads to $2^{\Oh(k^2)}$ subcases. 
Note that if $\cell(\Xsol_X(\ell_x), \Ysol_Y(\ell_y))$ is guessed to contain some points of $W_1 \cup W_2$, we know whether these points are from $W_1$ or from $W_2$: They are from the same set as the points contained in the \apx-supercell containing $\cell(\ell_x,\ell_y)$.
(If the corresponding \apx-supercell does not contain any points of $W_1 \cup W_2$, we discard the cases when $\cell(\Xsol_X(\ell_x),\Ysol_Y(\ell_y))$ is guessed to contain points of $W_1 \cup W_2$.)
Thus, in fact every cell $\cell(\ell_x,\ell_y)$ can be of one of three types: either containing some points of $W_1$ (\emph{type 1}), containing some points of $W_2$ (\emph{type 2}), or not containing any points of $W_1 \cup W_2$ at all (\emph{type 0}).
Let $\cont : \Cells \to \{0,1,2\}$ be the guessed function assigning to every cell its type.

Upon this step, we discard a guess if there are two cells $\cell(\ell_x,\ell_y)$ and $\cell(\ell_x',\ell_y')$ such that we have guessed one to contain some points of $W_1$ and the other to contain some points of $W_2$ that are contained in the same \opt-supercell, as such a situation would contradict the fact that $(X,Y)$ is a separation.
Consequently, we can extend the function $\cont$ to the set of \opt-supercells, indicating for every \opt-supercell whether at least one cell contains a point of $W_1$, a point of $W_2$, or whether the entire \opt-supercell is empty.

For notational convenience, we also extend the function $\cont$ to \apx-supercells in the natural manner.
Here, we also discard the current guess if there is an \apx-supercell that contains some points of $W_1 \cup W_2$, but all abstract cells inside this \apx-supercell are of type $0$.

\paragraph{Branching step D: cells of the extremal points and associated mapping~\boldmath$\phi$.}
We would like now to guess a function $\lead : W_1 \cup W_2 \to \Cells$ that, for every point $(x,y) \in W_1 \cup W_2$ that is extremal in its cell, assigns to $(x,y)$ the abstract cell $\cell(\ell_x,\ell_y)$ such that $\cell(\Xsol_X(\ell_x), \Ysol_Y(\ell_y))$ contains $(x,y)$.
(And we have no requirement on $\lead$ for points that are not extremal in their cell.)

Consider first a random procedure that for every $(x,y) \in W_1 \cup W_2$ samples $\lead(x,y) \in \Cells$ uniformly at random.
Since there are $\Oh(k^2)$ cells and at most four extremal points in one cell, the success probability of this procedure is
$2^{-\Oh(k^2 \log k)}$. 

This random process can be derandomized in a standard manner using the notion of 
\emph{splitters}~\cite{AlonYZ95} (see e.g.\ Cygan et al.~\cite{CyganFKLMPPS15} for an exposition).
For integers $n$, $a$, and $b$, a $(n,a,b)$-splitter is a family $\mathcal{F}$ of functions from $[n]$ to $[b]$
such that for every $A \subseteq [n]$ of size at most $a$ there exists $f \in \mathcal{F}$ that is injective on $A$.
Given integers~$n$ and $r$, one can construct in time polynomial in $n$ and $r$ an $(n,r,r^2)$-splitter of size
$r^{\Oh(1)} \log n$~\cite{AlonYZ95}.
We set $n = |W_1 \cup W_2|$ and $r = 4|\Cells| = \Oh(k^2)$ and construct an $(n,r,r^2)$-splitter $\mathcal{F}_1$
where we treat every function $f_1 \in \mathcal{F}_1$ as a function with domain $W_1 \cup W_2$.
We construct a set $\mathcal{F}_2$ of functions from $r^2$ to $\Cells$ as follows: for every set
$A \subseteq [r^2]$ of size at most $r = 4|\Cells|$ and every function $f_2'$ from $A$
to $\Cells$, we extend $f_2'$ to a function $f_2 : [r^2] \to \Cells$ arbitrarily (e.g., by assigning to every element of
    $[r^2] \setminus A$ one fixed element of $\Cells$) and insert $f_2$ into $\mathcal{F}_2$.
Finally, we define $\mathcal{F} = \{f_2 \circ f_1 \mid (f_1,f_2) \in \mathcal{F}_1 \times \mathcal{F}_2\}$.

Note that $|\mathcal{F}| = 2^{\Oh(k^2 \log k)} \log n$ as $\mathcal{F}_1$ is of size $k^{\Oh(1)} \log n$
while $\mathcal{F}_2$ is of size $2^{\Oh(k^2 \log k)}$ as there are $2^{\Oh(k^2 \log k)}$ choices of the set $A$
and $2^{\Oh(k^2 \log k)}$ choices for the function $f_2'$ from $A$ to $\Cells$. 

We claim that there exists a desired element $\lead \in \mathcal{F}$ as defined above.
By the definition of a splitter and our choice of $r$, there exists $f_1 \in \mathcal{F}_1$ that is injective
on the extremal points. When defining $\mathcal{F}_1$, the algorithm considers at some point the image of the extremal points under $f_1$ as the set~$A$
and hence constructs a function $f_2'$ that, for every extremal point $(x,y)$, assigns to $f_1(x,y)$ the cell that contains~$(x,y)$.
Consequently, $\lead := f_2 \circ f_1$ and hence $\lead$ belongs to $\mathcal{F}$ and satisfies the desired properties.

Our algorithm constructs the family $\mathcal{F}$ as above and tries every $\lead \in \mathcal{F}$ separately.
As discussed, this leads to $2^{\Oh(k^2 \log k)} \log n$ subcases.

Recall that we want to  extend $\Xapxsol$ and $\Yapxsol$ to increasing functions 
$\Xsol : \Xlines \to \mathbb{N}$ and $\Ysol:\Ylines \to \mathbb{N}$ such that
$\{\Xsol(\ell) \mid \ell \in \Xoptlines\}$
and $\{\Ysol(\ell) \mid \ell \in \Yoptlines\}$ is a separation. 
For fixed $\lead \in \mathcal{F}$, we want to ensure that we succeed if for every $(x,y) \in W_1 \cup W_2$ that is extremal in its cell we have that, if $\lead(x,y) = \cell(\ell_x,\ell_y)$, then $(x,y) \in \cell(\Xsol_X(\ell_x), \Ysol_Y(\ell_y))$.

\paragraph{Branching Step E: order of the extremal points.}
For every two abstract cells $\cell, \cell' \in \Cells$
and every two directions $\Delta,\Delta' \in \{\mathsf{top}, \mathsf{bottom}, \mathsf{left}, \mathsf{right}\}$,
we guess how 
the $\Delta$-most point in $\cell$ (the extremal point in $\cell$ in the direction $\Delta$) and
the $\Delta'$-most point in $\cell'$ 
relate in the orders $\leqx$ and $\leqy$.
Since $\leqx$ and $\leqy$ are total orders and there are $\Oh(k^2)$ extremal points in total, 
this branching step leads to $2^{\Oh(k^2 \log k)}$ subcases.

Two remarks are in order. First, in this branching step we in particular guess whenever for some cell one point is the extremal
point in more than one directions, as then the extremal points corresponding to these directions will be guessed to be equal
both in $\leqx$ and in $\leqy$.
Second, if $\cell$ and $\cell'$ are not between the same two consecutive vertical lines, then
the relation of the extremal points in $\cell$ and $\cell'$ in the order $\leqx$ can be inferred and does not need to be guessed;
similarly if $\cell$ and $\cell'$ are not between the same two consecutive horizontal lines, their relation in the $\leqy$ order can be inferred.

In what follows we will use the information guessed in this step in the following specific scenario (see Figure~\ref{fig:branchE}):
for every \apx-supercell $\apxcell$ and direction 
$\Delta \in \{\mathsf{top}, \mathsf{bottom}, \mathsf{left}, \mathsf{right}\}$,
we will be interested in the relative order in $\leqx$ (if $\Delta \in \{\mathsf{left},\mathsf{right}\}$)
or $\leqy$ (if $\Delta \in \{\mathsf{top},\mathsf{bottom}\}$)
of the $\Delta$-most extremal points of the cells in $\apxcell$ that share the $\Delta$ border with $\apxcell$.

\begin{figure}[tb]
\begin{center}
\includegraphics{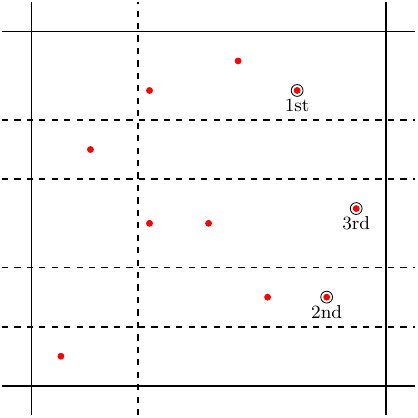}
\caption{Branching Step E and its typical later usage. 
  The step guesses the $\leqx$-order of the rightmost elements of the cells
    in one column.}\label{fig:branchE}
\end{center}
\end{figure}

\bigskip 

All the above branching steps lead to $2^{\Oh(k^2 \log k)} \log n$ subcases in total.
With each subcase, we proceed to the next steps of the algorithm.

\subsection{CSP formulation}\label[section]{sec:csp-formulation}

Recall that $\Xapxsol$ and $\Yapxsol$ map the abstract vertical line set $\Xapxlines$ and horizontal line set $\Yapxlines$, respectively, to the concrete integer coordinates and that we want to extend these functions to increasing functions $\Xsol \colon \Xlines \to \mathbb{N}$ and $\Ysol \colon \Ylines \to \mathbb{N}$ such that $\{\Xsol(\ell) \mid \ell \in \Xoptlines\}$ and $\{\Ysol(\ell) \mid \ell \in \Yoptlines\}$ is a separation.
We phrase this task as a CSP instance with binary constraints and variable set $\Xoptlines \cup \Yoptlines$, where we shall assign to each variable $\ell \in \Xoptlines$ value of $\Xsol(\ell)$ and analogous for~$\Yoptlines$.
The domains are initially defined as follows.
Let $\ell \in \Xoptlines$.
Let $\ell_1$ be the maximum element of $\Xapxlines$
with $\ell_1 < \ell$ and let $\ell_2$ be the minimum element of $\Xapxlines$
with $\ell < \ell_2$.
(Recall that here $<$ is the order of lines determined and defined after Branching Step~B.)
We define the domain $D_\ell$ of $\ell$ to be 
$$D_\ell := \{a \in \mathbb{N} \mid \apxsol(\ell_1) < a < \apxsol(\ell_2) \wedge a \equiv 2 \pmod 3\}.$$
We define the domain $D_\ell$ for each $\ell \in \Yoptlines$ analogously.
Note that, by the discretization properties, such domains can be computed in polynomial time.

To define the final CSP instance, we will in the following do two operations: introduce constraints and do filtering steps.
We will introduce constraints in five different categories: monotonicity, corner, alternations, correct order of extremal points, and alternating lines.
The filtering steps remove values from variable's domains that represent situations that we know or have guessed to be impossible.
To show correctness of the so-constructed reduction to CSP, observe that it suffices to define the constraints and conduct the filtering steps so to ensure the following two properties:
\begin{description}
\item[Soundness]--- if in the current branch we have guessed all the information about $(X,Y)$ correctly, then the pair $(\Xoptsol_X, \Yoptsol_Y)$ is a satisfying assignment to the constructed CSP instance (that is, the values of $(\Xoptsol_X, \Yoptsol_Y)$ are never removed from the corresponding domains
    in the filtering steps and $(\Xoptsol_X, \Yoptsol_Y)$ satisfies all introduced constraints).
\item[Completeness] --- for a satisfying assignment $(\Xoptsol,\Yoptsol)$ to the final CSP instance, the pair $(\{\Xoptsol(\ell) \mid \ell \in \Xoptlines\}, \{\Yoptsol(\ell) \mid \ell \in \Yoptlines\})$ is a separation.
\end{description}

We now proceed to define the five categories of constraints and a number of filtering steps.
For every introduced constraint and conducted filtering step,
    the soundess property will be straightforward. 
A tedious but relatively natural check will ensure that all introduced  constraints of the five categories
together with the filtering steps ensure the completeness property.
While introducing constraints, we will be careful to limit their number to a polynomial in~$k$ and to ensure that every introduced constraint has a segment representation of constant or $\Oh(k)$ depth.
This, together with the results of Section~\ref{sec:csp}, prove Theorem~\ref{thm:main}.

\subsection{Simple filtering steps and constraints}\label[section]{sec:simple-steps}
We start with two simple categories of constraints.

\paragraph{Monotonicity constraints.}
For every two consecutive $\ell_1,\ell_2 \in \Xoptlines$
or two consecutive $\ell_1,\ell_2 \in \Yoptlines$, we add a constraint
that the value of $\ell_1$ is smaller than the value of $\ell_2$. 

It is clear that the above constraints maintain soundness.
By Observation~\ref{obs:ineq}, every such constraint is of depth $1$ and its segment representation can be computed in polynomial time.
Furthermore, there are $\Oh(k)$ monotonicity constraints.

\paragraph{Corner filtering and corner constraints.}
Recall that $\cont \colon \Cells \to \{0, 1, 2\}$ is the function guessed in Branching Step~C that assigns to each cell the type in $\{0, 1, 2\}$ according to whether it contains points of $W_1$ (type~1), points of $W_2$ (type~1), or no points at all (type~0).
We inspect every tuple of two vertical lines $p_1, p_2 \in \Xlines$ with $p_1 < p_2$ and two horizontal lines $\ell_1, \ell_2 \in \Ylines$ with $\ell_1 < \ell_2$ such that
\begin{itemize}
\item there is no line of $\Xapxlines$ between $p_1$ and $p_2$ and
there is no line of $\Yapxlines$ between $\ell_1$ and $\ell_2$;
\item at most two lines of $\{p_1,p_2,\ell_1,\ell_2\}$ belong to $\Xoptlines \cup \Yoptlines$; and
\item according to $\cont$, every cell 
  that lies between $p_1$ and $p_2$ and between $\ell_1$ and $\ell_2$
is of type~$0$, that is, does not contain any point of $W_1 \cup W_2$.
\end{itemize}
A tuple $(p_1,p_2,\ell_1,\ell_2)$ satisfying the conditions above is called an \emph{empty corner}. 

We would like to ensure that in the space
  $\area(p_1, p_2, \ell_1, \ell_2)$ between $p_1$ and $p_2$ and between $\ell_1$ and $\ell_2$
(henceforth called the \emph{area of interest of the tuple $(p_1,p_2,\ell_1,\ell_2)$})
there are no points of $W_1 \cup W_2$. Since at most two lines of $\{p_1,p_2,\ell_1,\ell_2\}$
are from $\Xoptlines \cup \Yoptlines$, we can do it with either restricting domains of some variables
or with a relatively simple binary constraint as described below.
Herein, we distinguish the three cases of how many lines from $\Xapxlines \cup \Yapxlines$ there are in the tuple:

\subparagraph{Corner filtering.}
Observe that the area of interest of the tuple $(p_1,p_2,\ell_1,\ell_2)$ is always contained in a single \apx-supercell.
If all lines of $\{p_1,p_2,\ell_1,\ell_2\}$
are from $\Xapxlines \cup \Yapxlines$,
then the area of interest of $(p_1,p_2,\ell_1,\ell_2)$ is the \apx-supercell $\apxcell(p_1,\ell_1)$.
If this \apx-supercell contains at least one point of $W_1 \cup W_2$, we reject the current branch.

If exactly one line of $\{p_1,p_2,\ell_1,\ell_2\}$ is not from $\Xapxlines \cup \Yapxlines$,
say $\ell$, then we inspect all the values
of $D_\ell$ and delete those values for which there is some point of $W_1 \cup W_2$
in the area of interest of $(p_1,p_2,\ell_1,\ell_2)$.

It is straightforward to see that, if all guesses were correct, then the above filtering steps do not remove any value of $(\Xoptsol_X,\Yoptsol_Y)$ from the corresponding domains, that is, they preserve soundness.

\subparagraph{Corner constraints.}
If exactly two lines of $\{p_1,p_2,\ell_1,\ell_2\}$ are not from $\Xapxlines \cup \Yapxlines$, say $\ell$ and $\ell'$, then we add a constraint binding $\ell$ and $\ell'$ that allows only values $x \in D_{\ell}$ and $x' \in D_{\ell'}$ that leave the area of interest of $(p_1,p_2,\ell_1,\ell_2)$ empty.

It is straightforward to verify that, if all guesses were correct, the pair $(\Xoptsol_X,\Yoptsol_Y)$ satisfies all introduced corner constraints, that is, soundness is preserved.
We now consider the number of constraints and the running time of adding them. Indeed, as we will see below, some of the constraints above are superfluous and we can omit them.

\begin{lemma}\label[lemma]{obs:corner-2sat}
A corner constraint added for a tuple $(p_1,p_2,\ell_1,\ell_2)$ with exactly two lines from $\Xapxlines \cup \Yapxlines$
is of the form treated in Observation~\ref{obs:2sat}
and, consequently, 
is a conjunction of at most four constraints, each of depth at most $2$, and the segment representations of these constraints can be computed in polynomial time.
\end{lemma}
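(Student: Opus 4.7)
My plan is to reduce the lemma to a case analysis followed by a direct invocation of Observation~\ref{obs:2sat}. Let $\{\ell,\ell'\}$ be the two variable lines among $\{p_1,p_2,\ell_1,\ell_2\}$. Up to the symmetry between the two axes, there are three structural cases to consider: (i) $\{\ell,\ell'\}=\{p_1,p_2\}$, (ii) $\{\ell,\ell'\}=\{\ell_1,\ell_2\}$, and (iii) one of $\ell,\ell'$ is vertical and the other horizontal, which itself splits into four subcases depending on whether the vertical variable is $p_1$ or $p_2$ and whether the horizontal one is $\ell_1$ or $\ell_2$.

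Next I would look at a single point $w=(x^p,y^p) \in W_1 \cup W_2$ lying in the \apx-supercell that contains the area of interest. The requirement $w \notin \area(p_1,p_2,\ell_1,\ell_2)$ is equivalent to
\[
(\Xsol(p_1) \geq x^p) \vee (\Xsol(p_2) \leq x^p) \vee (\Ysol(\ell_1) \geq y^p) \vee (\Ysol(\ell_2) \leq y^p).
\]
Two of the four disjuncts involve approximate lines and therefore have fixed truth values: if at least one of them is true, the whole disjunction holds independently of $\ell,\ell'$ and no constraint on the variables is required for $w$; otherwise, both can be dropped and the constraint for $w$ reduces to a binary disjunction on $\ell,\ell'$ whose ``shape''---which of $\leq$ or $\geq$ is used on each side---is determined by the case and is identical across all contributing points, only the right-hand-side constants $x^p$ or $y^p$ varying.

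The third step is to match each case to one of the four forms of Observation~\ref{obs:2sat}. In case~(i) every contributing point yields $(p_1 \geq x^p) \vee (p_2 \leq x^p)$, matching $(x_1 \geq a) \vee (x_2 \leq b)$; case~(ii) is symmetric. Each of the four subcases of~(iii) yields one of the four forms $(x_1 \sim_1 a) \vee (x_2 \sim_2 b)$ with $\sim_1,\sim_2 \in \{\leq,\geq\}$ listed in Observation~\ref{obs:2sat}. Since the corner constraint is the conjunction over all contributing points and all such point constraints have the same shape within a case, an application of Observation~\ref{obs:2sat} immediately produces a conjunction of at most four relations of depth at most~$2$ together with their segment representations, in time polynomial in the number of points in the relevant \apx-supercell (and hence in the input).

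I do not expect a real obstacle: the lemma is essentially a packaging of Observation~\ref{obs:2sat} for the corner-constraint setting. The only point that needs explicit verification is that the two approximate-line disjuncts can be dropped uniformly across all contributing points, which is immediate because $\Xapxsol$ and $\Yapxsol$ are fixed constants known to the algorithm, so each such disjunct is globally true (in which case no point-constraint is generated) or globally false (in which case it contributes nothing to the remaining disjunction).
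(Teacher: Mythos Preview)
Your proposal is correct and follows essentially the same approach as the paper: express the corner constraint as a conjunction, over all points in the containing \apx-supercell, of binary disjunctions on the two variable lines, then invoke Observation~\ref{obs:2sat}. The paper's proof is more terse (it writes out one subcase, $(x < \Xsol(\ell)) \vee (y < \Ysol(\ell'))$ for $\ell=p_1,\ell'=\ell_1$, and says ``similarly'' for the rest), while you spell out the case split more explicitly; one minor remark is that your justification for dropping the approximate-line disjuncts (``globally true or globally false'') is better stated as ``always false for points in the \apx-supercell'', since those approximate lines are precisely the boundaries of that supercell.
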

\begin{proof}
Let $\ell$ and $\ell'$ be the two lines of $\{p_1,p_2,\ell_1,\ell_2\}$ that are not from $\Xapxlines \cup \Yapxlines$ and let 
$\apxcell$ be the (abstract) \apx-supercell containing the area of interest of $(p_1,p_2,\ell_1,\ell_2)$.
The constraint asserting that no point of $(W_1 \cup W_2) \cap \apxcell$ is 
in the area of interest of $(p_1,p_2,\ell_1,\ell_2)$
can be expressed
as a conjunction over all $(x,y) \in (W_1 \cup W_2) \cap \apxcell$ of the constraints~$C_{x, y}$ stating that $(x,y)$ is not in the area of interest.
Constraint~$C_{x, y}$, in turn, can be expressed as $(x < \Xsol(\ell)) \vee (y < \Ysol(\ell'))$ if $\ell = p_1$ and $\ell' = \ell_1$ and similarly if $\ell$ and $\ell'$ represent other lines from $(p_1,p_2,\ell_1,\ell_2)$.
By Observation~\ref{obs:2sat}, a conjunction of such constraints~$C_{x, y}$ is a conjunction of at most four constraints, each of depth at most~$2$ and it follows from the simple form of these constraints that their segment representations can be computed in polynomial time.
\end{proof}
Let us now bound the number of corner constraints that we need to add.

There are $\Oh(k^2)$ tuples $(p_1,p_2,\ell_1,\ell_2)$ for which $p_1 \in \Xoptlines$ and $\ell_1 \in \Yoptlines$, as the choice of $p_1$ and $\ell_1$ already determines $p_2$ and $\ell_2$.
Hence, by symmetry, there are $\Oh(k^2)$ tuples $(p_1,p_2,\ell_1,\ell_2)$ that contain one line of $\Xoptlines$ and one line of $\Yoptlines$. 

Consider now a tuple $(p_1,p_2,\ell_1,\ell_2)$ where $p_1,p_2 \in \Xoptlines$ and $\ell_1,\ell_2 \in \Yapxlines$.
Then $\ell_1$ and $\ell_2$ are two consecutive elements of $\Yapxlines$; there are $\Oh(k)$ choices for them.
If there is also an empty corner $(p_1,p_2',\ell_1,\ell_2)$ with $p_2' \in \Xoptlines$ and $p_2 < p_2'$, then
the corner constaint for $(p_1,p_2',\ell_1,\ell_2)$, together with monotonicity constraints,
implies the corner constraint for $(p_1,p_2,\ell_1,\ell_2)$. Hence, we can add only corner constraints
for empty corners $(p_1,p_2,\ell_1,\ell_2)$ with maximal $p_2$. 
In this manner, we add only $\Oh(k^2)$ corner constraints for empty corners $(p_1,p_2,\ell_1,\ell_2)$ with $p_1,p_2 \in \Xoptlines$.
Similarly, we add only $\Oh(k^2)$ corner constraints for tuples $(p_1,p_2,\ell_1,\ell_2)$ with $\ell_1,\ell_2 \in \Yoptlines$.

To sum up, we add $\Oh(k^2)$ corner constraints, each of depth at most $2$.

\subsection{Alternation of a situation}\label[section]{sec:alternations}

\begin{figure}[tb]
\begin{center}
\includegraphics{fig-nocorner.pdf}
\caption{Corner constraints are not enough: no empty corner controls the striped area in the figure. Red points are elements of $W_1$ and blue points elements of~$W_2$.
  Solid lines are from $\Xapxlines \cup \Yapxlines$, and dashed ones are from $X \cup Y$.}\label{fig:nocorner}
\end{center}
\end{figure}

\paragraph{Outline.}
Unfortunately, monotonicity and corner constraints are not sufficient to ensure completeness. 
To see this, consider an \apx-supercell $\apxcell(p_1,\ell_1)$ with $p_2$ and $\ell_2$ being the successors of $p_1$ and $\ell_1$ in $\Xapxlines$ and $\Yapxlines$, respectively. 
If there is exactly one line $p \in \Xoptlines$ between $p_1$ and $p_2$ and exactly one line $\ell \in \Yoptlines$ between $\ell_1$ and $\ell_2$,
then any of the cells $\cell(p_1,\ell_1)$, $\cell(p, \ell_1)$, $\cell(p_1,\ell)$, or $\cell(p, \ell)$ that is guessed to be empty by~$\cont$
 is taken care of by the corner constraint for the empty corner $(p_1,p,\ell_1,\ell)$, $(p, p_2, \ell_1,\ell)$, $(p_1,p,\ell,\ell_2)$, and $(p, p_2, \ell, \ell_2)$, respectively. 
 More generally, the corner constraints and other filtering performed above takes care of 
 empty cells contained in $\area(p_1, p_2, \ell_1, \ell_2)$ if there is at most one line of $\Xoptlines$ between $p_1$ and $p_2$
 and at most one line of $\Yoptlines$ between $\ell_1$ and $\ell_2$.
 However, consider a situation in which there are, say, three lines $\ell^1, \ell^2, \ell^3 \in \Yoptlines$ between $\ell_1$ and $\ell_2$ and one line $p \in \Xoptlines$ between $p_1$ and $p_2$ (see Figure~\ref{fig:nocorner}).
 If $\cont(\cell(p, \ell^2)) = 0$ but $\cont(\cell(p, \ell^1)) \neq 0$ and $\cont(\cell(p, \ell^3)) \neq 0$, then the 
 cell $\cell(p, \ell^2)$ is not contained in the area of interest of any of the empty corners and the corner constraints are not sufficient to ensure that $\cell(p, \ell^2)$ is left empty.

The problem in formulating the constraints for such sandwiched cells is that the possible values for the enclosing optimal lines depend not only on the points inside the current \apx-supercell, but also on the way points are to be separated possibly outside of the current \apx-supercell. We begin to disentangle this intricate and non-local relationship by first focusing on lines that ensure correct separation of points within the current \apx-supercell. We will call \opt-lines ensuring such separation \emph{alternating}, and their positions give rise to \emph{alternation constraints}
and \emph{alternating lines constraints}.

\medskip

We perform what follows in both dimensions, left/right and top/bottom. 
For the sake of clarity of description, we present description in the direction ``left/right'' 
(we found introducing an abstract notation of directions too cumbersome, given the complexity of the arguments). 
However, the same steps and arguments apply to the 
and to the ``top/bottom'' directions, when we swap the roles of $x$- and $y$-axes.

\begin{figure}[tb]
\begin{center}
\includegraphics{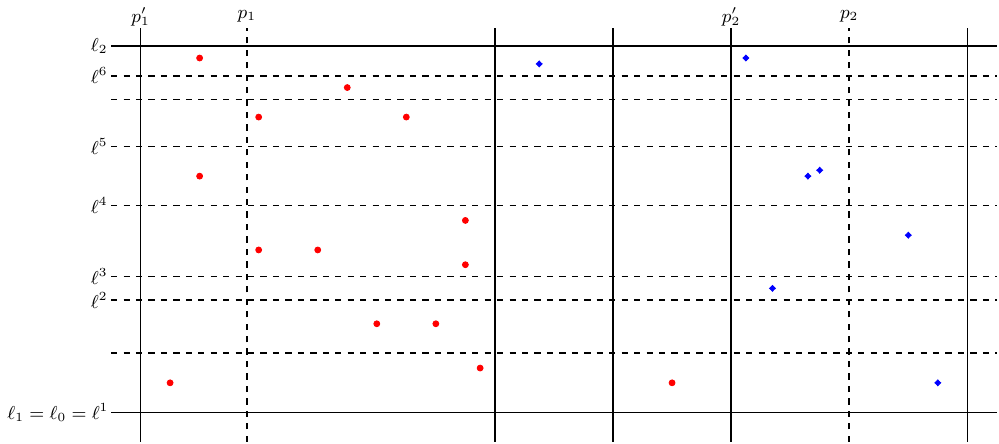}
\caption{A situation of alternation $6$. The lines of
  $\stlines_\situ$ are denoted with $\ell^i$, $1 \leq i \leq 6$.}\label{fig:alt}
\end{center}
\end{figure}

\paragraph{Definitions.}
Let $p_1,p_2$ be two consecutive elements of $\Xoptlines$ that are not consecutive elements of $\Xlines$ (i.e., there is at least one line of $\Xapxlines$ between them).
Let $\ell_1,\ell_2$ be two consecutive elements of $\Yapxlines$. 
The tuple $\situ = (p_1,p_2,\ell_1,\ell_2)$ is called a \emph{situation}.
See \cref{fig:alt} for an example.
Let $\btlines_\situ$ be the set of lines from $\Yoptlines$ that are between $\ell_1$ and $\ell_2$. 
Let $\alllines_\situ := \btlines_\situ \cup \{\ell_1\}$. 
For both $i=1,2$ let $p_i'$ be the maximum element of $\Xapxlines$ that is smaller than $p_i$.

For each $\ell \in \alllines_\situ$, we define $\area(\ell) := \area(p_1, p_2, \ell, \ell')$, where $\ell'$
is the successor of $\ell$ in $\btlines_\situ \cup \{\ell_1,\ell_2\}$.
Note that $\area(\ell) = \optcell(p_1, \ell)$ for every $\ell \in \alllines_\situ$ except for possibly
$\ell = \ell_1$ and $\ell$ being the maximum element of~$\btlines_\situ$.
However, if $\ell = \ell_1$ then $\area(\ell)$ is contained in $\optcell(p_1, \ell_1')$
where $\ell_1'$ is the predecessor of the minimum element of $\btlines_\situ$ in $\Yoptlines \cup \{(\Yapxsol)^{-1}(1)\}$, and if $\ell$ is the maximum element of $\btlines_\situ$, then $\area(\ell)$
is contained in $\optcell(p_1, \ell)$. 

Recall that $\cont \colon \Cells \to \{0, 1, 2\}$ is the function guessed in Branching Step~C that assigns to each cell its content type.
By the above inclusion-property of cells, we can extend the function $\cont$ to $\{\area(\ell) \mid \ell \in \alllines_\situ\}$ 
in the natural manner: Put $\cont(\area(\ell)) = 0$ if every cell $\cell$ contained 
in $\area(\ell)$ satisfies $\cont(\cell) = 0$ and, otherwise, $\cont(\area(\ell))$
is defined as the unique nonzero value attained by $\cont(\cell)$ for $\cell$ contained in $\area(\ell)$.
Note that the values of $\cont(\cell)$ for
cells $\cell$ contained in $\area(\ell)$ cannot attain both values $1$ and $2$, as they are all contained
in one and the same $\opt$-supercell.

An element $\ell \in \alllines_\situ$ is \emph{alternating} 
if $\cont(\area(\ell)) \neq 0$, the maximum element $\ell' \in \alllines_\situ$ with $\ell' < \ell$ and $\cont(\area(\ell')) \neq 0$
exists, and $\cont(\area(\ell)) \neq \cont(\area(\ell'))$. 
Let $\altlines_\situ$ be the set of alternating elements of $\alllines_\situ$
and let $\stlines_\situ = \altlines_\situ \cup \{\ell_0\}$ where $\ell_0$ is the minimum element of $\alllines_\situ$
with $\cont(\area(\ell_0)) \neq 0$. 
We define $\altlines_\situ = \stlines_\situ = \emptyset$ if each element $\ell \in \alllines_\situ$
has $\cont(\area(\ell)) = 0$, that is, every cell $\cell$ contained in $\area(p_1,p_2,\ell_1,\ell_2)$
satisfies $\cont(\cell) = 0$.

Consider the sequence $\seq_\situ$ consisting of values $\cont(\area(\ell))$ for $\ell \in \alllines_\situ$, ordered in the increasing order of the corresponding lines in~$\alllines_\situ$. 
Similarly, $\stseq_\situ$ is a sequence consisting of values $\cont(\area(\ell))$ for $\ell \in \stlines_\situ$, ordered in the increasing order of the corresponding lines in~$\stlines_\situ$. 

The \emph{alternation} of a situation $\situ$ is the length of the sequence $\stseq_\situ$.
Observe that equivalently we can define $\stseq_\situ$
as the maximum length of a subsequence of alternating
$1$s and $2$s in $\seq_\situ$ (the sequence may start either with a $2$ or with a $1$). 

\paragraph{Observations.}
Intuitively, in what follows we focus on alternating lines as they are the ones that separate $W_1$ from $W_2$ within the area bounded by $p_1$, $p_2$, $\ell_1$, and $\ell_2$. 
The introduced constraints are not meant to exactly focus that the content of every cell is as guessed by the function~$\cont$, but only that the alternating lines are placed correctly.
See Figure~\ref{fig:alt}.
 
We now make use of the branching steps to limit possible alternations.
\begin{lemma}\label{lem:alternations}
Assume that all guesses in the branching steps were correct regarding the solution $(X,Y)$.
For each situation $\situ = (p_1,p_2,\ell_1,\ell_2)$, the alternation equals $0$, $1$, or it is an even positive integer. 
If the alternation is at least four, then $\cont(\apxcell(p_1', \ell_1))$ and $\cont(\apxcell(p_2', \ell_1))$ are different and both are nonzero.
\end{lemma}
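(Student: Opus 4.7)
The plan is to reduce everything to consequences of Branching Step~A applied to $\Yoptlines$ lines inside $(\ell_1, \ell_2)$. Let $y^1 < y^2 < \cdots < y^m$ denote the elements of $\btlines_\situ$, and write $t_j := \cont(\area(a_j))$ with $a_0 = \ell_1$ and $a_i = y^i$ for $1 \leq i \leq m$, so that $\seq_\situ = (t_0, t_1, \ldots, t_m)$. If $m \leq 1$ then $\seq_\situ$ has length at most $2$ and the alternation is at most $2$, so both claims hold trivially; henceforth assume $m \geq 2$. Note that by the definition of $\stseq_\situ$ one has $t_0 \in \{0, s_1\}$ and $t_m \in \{0, s_a\}$.

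The key tool is a consequence of Branching Step~A: since $y^1, y^m \in Y$ and no element of $\Yapx$ lies between them (both lie inside the $\Yapx$-block $(\ell_1, \ell_2)$), every extremal point of every \apx-supercell has $y$-coordinate not strictly between $y^1$ and $y^m$, and by the discretization properties this gives $y \leq y^1$ or $y \geq y^m$. I will establish the following \emph{middle-band principle}: if an \apx-supercell $\apxcell$ contains a point $p$ that lies inside the strip $(p_1, p_2) \times (\ell_1, \ell_2)$ and inside some middle band $\area(y^i)$, $1 \leq i \leq m-1$, then its type $t = \cont(\apxcell) = t_i \in \{1,2\}$ satisfies $t \in \{t_0, t_m\}$, and moreover $t = t_0 = t_m$ whenever $\apxcell$ is \emph{internal}, meaning its $x$-range is contained in $(p_1, p_2)$. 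Indeed, the topmost of $\apxcell$ has $y \geq y(p) > y^1$ and hence, by the extremal constraint above, $y \geq y^m$; symmetrically, the bottommost of $\apxcell$ has $y \leq y^1$. When $\apxcell$ is internal, both extrema lie in the strip, in $\area(y^m)$ and $\area(\ell_1)$ respectively, forcing $t = t_0 = t_m$. When $\apxcell$ is a boundary supercell, say $\apxcell = \apxcell(p_1', \ell_1)$, the \emph{rightmost} point of $\apxcell$ has $x$-coordinate strictly greater than $p_1$ (since $p \in \apxcell$ has $x(p) > p_1$) and strictly less than $p_2$ (as $\apxcell$'s $x$-range ends at some $\Xapxlines$ element, which must be at most $p_2'$ since $p_1' \neq p_2'$), hence sits inside the strip; its $y$-coordinate lies in the bottom or top sub-interval, placing it in $\area(\ell_1)$ or $\area(y^m)$ and forcing $t \in \{t_0, t_m\}$. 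A symmetric argument with the leftmost handles $\apxcell(p_2', \ell_1)$. All other \apx-supercells do not intersect the strip and thus cannot contain $p$.

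The lemma now follows by case analysis on the alternation $a$. If $a \geq 3$ is odd, then $s_1 = s_a$, while the middle band at position~$2$ of $\stseq_\situ$ is nonempty and has type $s_2 \in \{1, 2\}$ with $s_2 \neq s_1$; the middle-band principle gives $s_2 \in \{t_0, t_m\} \subseteq \{0, s_1\}$, a contradiction. Hence $a$ must be $0$, $1$, or even. If $a \geq 4$ is even, then $s_1 \neq s_a$, $s_2 = s_a$, and $s_3 = s_1$. Applying the middle-band principle to the supporting supercell of the $s_2$-band: the internal case would require $s_a = t_0 \in \{0, s_1\}$, which is impossible since $s_a \notin \{0, s_1\}$; so the supporting supercell is a boundary supercell, and since $s_a \neq t_0$ we must have $t_m = s_a$ and the supercell is of type $s_a$. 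Analogously, the supporting supercell of the $s_3$-band is a boundary supercell of type $s_1$, and $t_0 = s_1$. As the only boundary supercells are $\apxcell(p_1', \ell_1)$ and $\apxcell(p_2', \ell_1)$, these two must realise the two distinct nonzero types $s_1$ and $s_a$. The main obstacle is the careful handling of the boundary case in the middle-band principle: one must notice that, for a middle-band point $p$ inside the strip, it is the \emph{rightmost} of the left-boundary supercell (not its top or bottom extremum, which may lurk outside the strip) that is forced into the strip, so that the Branching Step~A constraint on $y$-coordinates still applies nontrivially.
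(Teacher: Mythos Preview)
Your proof is correct and follows essentially the same approach as the paper: both arguments exploit that, after Branching Step~A, no extremal point of any \apx-supercell can have its $y$-coordinate strictly between two $Y$-lines that lie in the same $\Yapx$-gap. The paper picks an explicit extremal point in each of the two cases (a rightmost/leftmost point for the odd-alternation case, a bottommost/topmost point for the second statement) and shows directly that it violates Step~A. You instead abstract the common mechanism into your ``middle-band principle'' and then apply it twice; this is a clean repackaging of the same idea rather than a different route. One minor comment: your observation that for the left boundary supercell it is the \emph{rightmost} point (not the topmost or bottommost) whose $x$-coordinate is forced into the strip is exactly the paper's choice as well, so the ``main obstacle'' you flag is handled identically in both proofs.
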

\begin{proof}
Let $\ell$ and $\ell'$ be the minimum and maximum elements of $\altlines_\situ$, respectively.

Suppose that the contrary of the first statement holds.
Due to symmetry between $W_1$ and~$W_2$, we may assume without loss of generality that 
$\stseq_\situ = (12)^r 1$ for some $r \geq 1$.
This in particular implies $|\altlines_\situ| = 2r \geq 2$, so $|\btlines_\situ| \geq 2$.
Let $p \in \Xapxlines$ with $p_1' \leq p \leq p_2'$, $\cont(\apxcell(p, \ell_1)) = 2$, and such that some point of $W_2$ in $\apxcell(p,\ell_1)$ lies
between $\Xoptsol_X(p_1)$ and $\Xoptsol_X(p_2)$; $p$ exists as $r \geq 1$. 
Let $(x,y) \in \apxcell(p, \ell_1) \cap W_2$ be defined as follows: if $p = p_1'$, then $(x,y)$ is the rightmost
element of $\apxcell(p, \ell_1) \cap W_2$, and if $p > p_1'$, then $(x,y)$ is the leftmost element of $\apxcell(p, \ell_1) \cap W_2$.
The point $(x,y)$ is an extremal point in the \apx-supercell $\apxcell(p, \ell_1)$. 
Observe that, by the choice of $p$, coordinate~$x$ lies between $\Xoptsol_X(p_1)$ and $\Xoptsol_X(p_2)$
while, by the structure of~$\stseq_\situ$, coordinate $y + 1$ lies between $\Yoptsol_Y(\ell)$ and $\Yoptsol_Y(\ell')$.
Since no element of $\Yapxlines$ lies between $\ell$ and $\ell'$,
this contradicts the correctness of the guess at Branching Step A.
Thus the first statement holds.

For the second statement, the reasoning is similar.
For the sake of contradiction, suppose that the contrary of the second statement holds.
Then, due to symmetry between $W_1$ and $W_2$, we may assume without loss of generality that 
\begin{align}
\cont(\apxcell(p_1', \ell_1)), \cont(\apxcell(p_2',\ell_1)) \in \{0,1\}\text{.}\label[equation]{eq:alternations1}  
\end{align}
Since the alternation is at least four, we have $|\btlines_\situ| \geq |\altlines_\situ| \geq 3$. 
Let $W$ be the set of elements of $W_2$ between $p_1$ and $p_2$ and between $\ell_1$ and~$\ell_2$.
By \cref{eq:alternations1} and since $\stseq_\situ$ contains at least one~$2$, we have $W \neq \emptyset$.
If~$\stseq_\situ = (12)^r$ for some $r \geq 2$, then let $(x,y)$ be the bottommost element of $W$
and otherwise, if $\stseq_\situ = (21)^r$, then let $(x,y)$ be the topmost element of $W$.
Observe that $(x,y)$ lies in $\apxcell(p,\ell_1)$ for some $p \in \Xapxlines$ with $p_1' < p < p_2'$
and is the bottommost or topmost, respectively, element of $\apxcell(p, \ell_1)$. 
Furthermore, $y+1$ lies between $\Yoptsol_Y(\ell)$ and $\Yoptsol_Y(\ell')$.
This again contradicts the correctness of the guess at Branching Step~A.
\end{proof}

An astute reader can observe (and it will be proven formally later) than in a situation of alternation at most two, the corner constraints and filtering steps are sufficient to ensure completeness.
Thus, we introduce below alternation and alternating-lines constraints only for situations with alternation at least four.
Henceforth we assume that the studied situation $\situ = (p_1,p_2,\ell_1,\ell_2)$ has alternation at least four. %
By symmetry and Lemma~\ref{lem:alternations}, we can assume that
$\cont(\apxcell(p_1',\ell_1)) = 1$, $\cont(\apxcell(p_2',\ell_1)) = 2$
(otherwise we swap the roles of $W_1$ and $W_2$)
and additionally that 
$\stseq_\situ = (12)^r$ for some $r \geq 2$ (otherwise we reflect the instance on an arbitrary horizontal line).
We remark also that the reflection step above may require adding $+1$ to the depth of the introduced contraints; this will not influence the asymptotic number and total depth of introduced contraints.

Assume now we are given a situation $\situ = (p_1, p_2, \ell_1, \ell_2)$ and fixed values $\Xoptsol(p_1) = x_1$ and $\Xoptsol(p_2) = x_2$. 
With these values, let $\points_\situ(x_1,x_2)$ be the set of points of $W_1 \cup W_2$ in the area bounded by $p_1$, $p_2$, $\ell_1$, and $\ell_2$ (recall that $\ell_1,\ell_2 \in \Yapxlines$). 
Define the sequence $\seq(x_1,x_2) \in \{1, 2\}^*$ as follows.
Let $(w_1, \ldots, w_s)$ be the sequence of all points from $\points_\situ(x_1,x_2)$ such that for each $i \in [s]$ we have $w_i \leqy w_{i + 1}$.
Then, $\seq(x_1,x_2) := (\alpha(w_1), \ldots, \alpha(w_s))$ where $\alpha(w_i) = \beta \in \{1, 2\}$ if $w_i \in W_\beta$.
A \emph{block} is a set of points in $\points_\situ(x_1,x_2)$ that correspond to a maximal block of consecutive equal values in $\seq(x_1,x_2)$. 
The definition of blocks is depicted in Figure~\ref{fig:blocks}.

\begin{figure}[tb]
\begin{center}
\includegraphics{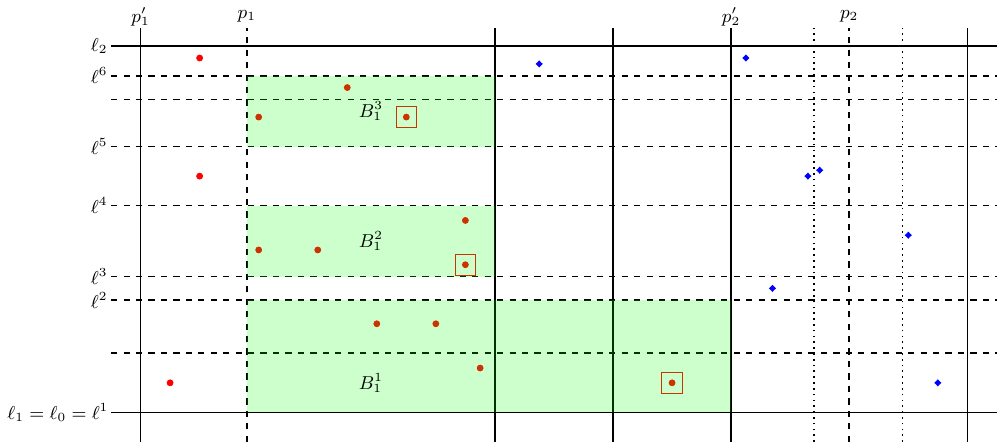}
\caption{A situation of alternation $6$ and its decompostion  to blocks when $p_1$ is positioned at $x_1\in D_{p_1}$ and $p_2$ is postioned at $x_2\in D_{p_2}$.
  The lines of $\stlines_\situ$ are denoted with $\ell^i$, $1 \leq i \leq 6$.
  Blocks of red points are denoted by $B_1^1,B_1^2,B_1^3$ and highlighted in green.
  Observe that blocks of red points do not depend on the exact position of~$x_2$.
  Leaders of red blocks are marked by a red square.
  The dotted lines indicate $x_2^\leftarrow(x_1)$ and $x_2^\rightarrow(x_1)$.
}\label{fig:blocks}
\end{center}
\end{figure}

The sequence $\stseq(x_1,x_2)$ is the subsequence of the sequence $\seq(x_1,x_2)$ that consists of the first element
and all elements whose predecessor is a different element (i.e., $\stseq(x_1,x_2)$ contains one element
for every maximal block of equal elements in $\seq(x_1,x_2)$).

We define the \emph{alternation} of points $\points_\situ(x_1,x_2)$ as follows. 
If there are two points in $\points_\situ(x_1,x_2)$ with the same $y$-coordinate but one from $W_1$ and one from $W_2$, the alternation is $+\infty$.
Otherwise, the alternation of $\points_\situ(x_1,x_2)$ is the number of maximal blocks of consecutive equal values in $\seq(x_1,x_2)$,
that is, the length of~$\stseq(x_1,x_2)$.

We have the following straightforward observation (recall that $p_1$ and $p_2$ are consecutive elements of~$\Xoptlines$).
\begin{observation}\label{obs:altx1x2}
  Let $\situ = (p_1, p_2, \ell_1, \ell_2)$ be a situation.
  If the guesses in all branching steps were correct regarding the solution $(X,Y)$ and $x_i = \Xoptsol_X(p_i)$ for $i=1,2$, then the alternations of $\points_\situ(x_1,x_2)$ and of $\situ$ are equal and $\stseq(x_1,x_2) = \stseq_\situ$.
  In particular, the alternation of $\points_\situ(x_1,x_2)$ is finite.
\end{observation}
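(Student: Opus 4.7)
The plan is to use correctness of Branching Step~C together with the fact that $(X,Y)$ is a separation in order to show that, when the points of $\points_\situ(x_1,x_2)$ are enumerated in $\leqy$-order, they decompose into consecutive groups indexed by $\ell \in \alllines_\situ$, each homogeneous of type $\cont(\area(\ell))$. The equality $\stseq(x_1,x_2)=\stseq_\situ$ and the finiteness claim both then follow directly from this decomposition together with the way $\stlines_\situ$ is defined.

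The first step I would carry out is to verify that $\{\area(\ell) : \ell \in \alllines_\situ\}$ partitions $\area(p_1,p_2,\ell_1,\ell_2)$ into horizontal strips with pairwise disjoint $y$-coordinate ranges: if $\ell^\ast$ denotes the successor of $\ell$ in $\btlines_\situ \cup \{\ell_1,\ell_2\}$, then points of $\area(\ell)$ have $y$-coordinate strictly less than $\Yoptsol_Y(\ell^\ast)$, while for any $\ell' \geq \ell^\ast$ the points of $\area(\ell')$ have $y$-coordinate strictly greater than $\Yoptsol_Y(\ell') \geq \Yoptsol_Y(\ell^\ast)$. Hence the $\leqy$-enumeration visits the areas in the order of $\alllines_\situ$, and any ties on a common $y$-coordinate fall inside a single area. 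Next, correctness of Branching Step~C guarantees that every concrete cell has exactly the content recorded by $\cont$, and, as already observed in the excerpt, each $\area(\ell)$ is contained within a single \opt-supercell (the two boundary cases $\ell=\ell_1$ and $\ell = \max \btlines_\situ$ being handled exactly as spelled out there). Since $(X,Y)$ is a separation, all points inside $\area(\ell)$ must therefore share one and the same $W_i$, and so belong to $W_{\cont(\area(\ell))}$ when $\cont(\area(\ell)) \neq 0$, or $\area(\ell)$ is empty when $\cont(\area(\ell))=0$.

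Combining these two facts, $\seq(x_1,x_2)$ is precisely the concatenation, over $\ell \in \alllines_\situ$ in increasing order, of $n_\ell$ copies of $\cont(\area(\ell))$, where $n_\ell = |\area(\ell) \cap (W_1 \cup W_2)|$. Its run-length encoding discards empty blocks and collapses equal consecutive blocks, which is precisely the rule selecting $\stlines_\situ$ (namely the first nonzero line $\ell_0$ together with every line at which the nonzero value of $\cont(\area(\cdot))$ changes). Therefore $\stseq(x_1,x_2)=\stseq_\situ$ and the two alternations coincide; finiteness drops out because two points sharing a $y$-coordinate lie inside the same $\area(\ell)$ and so cannot straddle $W_1$ and $W_2$. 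The main obstacle I anticipate is the careful bookkeeping for the partition and for the two boundary areas $\area(\ell_1)$ and $\area(\max \btlines_\situ)$, but once their containment in single \opt-supercells is invoked the remainder is a straightforward definition chase.
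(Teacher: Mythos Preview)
Your proposal is correct and essentially supplies the details that the paper omits: the paper states this observation without proof, calling it ``straightforward'' (recalling only that $p_1$ and $p_2$ are consecutive elements of $\Xoptlines$). Your argument---partitioning $\points_\situ(x_1,x_2)$ into the horizontal strips $\area(\ell)$, using the correctness of Branching Step~C to identify the content of each strip, and then observing that the run-length encoding of $\seq(x_1,x_2)$ coincides with the definition of $\stseq_\situ$---is exactly the intended unpacking. One small notational fix: when $\ell^\ast = \ell_2 \in \Yapxlines$ (and similarly for $\ell_1$), you should write $\Ysol_Y(\ell^\ast)$ rather than $\Yoptsol_Y(\ell^\ast)$, since the latter is only defined on $\Yoptlines$.
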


We say that the pair $(x_1,x_2) \in D_{p_1} \times D_{p_2}$ \emph{fits} the alternation of the situation $\situ = (p_1,p_2,\ell_1,\ell_2)$ if the conclusion of Observation~\ref{obs:altx1x2} is satisfied
for $\points_\situ(x_1,x_2)$ and the situation $\situ$, that is, $\stseq(x_1,x_2) = \stseq_\situ$ and the alternation
  of $\points_\situ(x_1,x_2)$ is finite. 
Observe the following.
\begin{observation}\label{obs:alttypes}
Let $(x_1,x_2) \in D_{p_1} \times D_{p_2}$ be a pair that does not fit the situation $\situ = (p_1,p_2,\ell_1,\ell_2)$. 
Then, one of the following is true:
\begin{enumerate}[(a)]
\item The alternation of $\points_\situ(x_1,x_2)$ is finite and smaller than the alternation of $\situ$.
That is, $\stseq(x_1,x_2)$ is a proper subsequence of the sequence $\stseq_\situ$.\label{p:alt:small}
\item  The alternation of $\points_\situ(x_1,x_2)$ is infinite or not smaller than the alternation of $\situ$. That is, $\stseq(x_1,x_2)$ is not a subsequence of the sequence~$\stseq_\situ$.\label{p:alt:large}
\end{enumerate}
\end{observation}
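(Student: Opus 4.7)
The plan is to exploit that both $\stseq(x_1,x_2)$ and $\stseq_\situ$ are by construction \emph{alternating} words over $\{1,2\}$, meaning no two consecutive entries are equal, since each is obtained from its underlying sequence by keeping exactly one representative per maximal block of equal values. Under the standing assumption of this subsection we moreover have $\stseq_\situ = (12)^r$ where $2r$ is the alternation of~$\situ$. The observation then reduces to a purely combinatorial statement about subsequence relations among alternating words, plus a small bookkeeping argument for the infinite case.

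First I would dispose of the infinite-alternation case: if the alternation of $\points_\situ(x_1,x_2)$ is infinite then by definition $(x_1,x_2)$ does not fit~$\situ$ and we land directly in alternative~(\ref{p:alt:large}); the accompanying subsequence reformulation is interpreted in this pathological case trivially, since a finite target sequence~$\stseq_\situ$ cannot accommodate the pattern witnessed by coincident points of different colours. The technical heart will then be a short lemma on alternating subwords of $(12)^r$: an alternating $\{1,2\}$-word $\tau$ with $|\tau| \leq 2r$ is a subsequence of $(12)^r$ if and only if either $|\tau| < 2r$, or $\tau = (12)^r$. I would prove this by explicit position-selection. The four possible shapes of a nonempty alternating word are $(12)^s$, $(21)^s$, $1(21)^s$, and $2(12)^s$. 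For each shape of length strictly less than $2r$ I can exhibit positions inside $(12)^r$ realising the embedding; for instance, positions $2,3,\ldots,2s+1$ of $(12)^r$ spell $(21)^s$ as soon as $2s+1 \leq 2r$, and the remaining three shapes are handled identically. For the converse direction, an alternating word of length exactly~$2r$ other than $(12)^r$ must be $(21)^r$, and this cannot embed into $(12)^r$ because $(12)^r$ starts with~$1$ and ends with~$2$, hence its only length-$2r$ subsequence is itself.

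Combining the pieces, suppose $(x_1,x_2)$ does not fit $\situ$ and the alternation of $\points_\situ(x_1,x_2)$ is finite. If this alternation is strictly smaller than~$2r$, the lemma yields a proper subsequence and we are in case~(\ref{p:alt:small}); if it is at least~$2r$, then $\stseq(x_1,x_2)$ has length at least~$2r$ and differs from $(12)^r$, so by the lemma it fails to be a subsequence, placing us in case~(\ref{p:alt:large}); and the infinite-alternation situation was handled above. The main obstacle I anticipate is merely the careful bookkeeping in the infinite case and spelling out the four alternating shapes; the substantive combinatorial step is elementary position selection and I do not expect any real difficulty.
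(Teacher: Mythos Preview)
The paper gives no proof of this observation; it is stated as immediate. Indeed, the first sentences of (a) and (b) form a tautological dichotomy: the negation of ``finite and smaller'' is precisely ``infinite or not smaller''. Your argument is therefore considerably more detailed than the paper's (nonexistent) one, and your combinatorial lemma on alternating $\{1,2\}$-words correctly establishes the ``That is'' subsequence reformulations in all finite-alternation subcases.

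One point deserves a comment. In the infinite-alternation case your justification of the subsequence clause is not quite right: even when the alternation is declared $+\infty$ (because two points of different colours share a $y$-coordinate), the sequence $\stseq(x_1,x_2)$ is still a perfectly well-defined finite alternating word over $\{1,2\}$, so ``a finite target sequence cannot accommodate the pattern'' does not directly apply. In fact it is not clear that $\stseq(x_1,x_2)$ must fail to be a subsequence of $(12)^r$ in this case. The paper is equally silent here, and its downstream uses (the monotonicity Observation~\ref{obs:altmon} and the depth bound Lemma~\ref{lem:alt-depth}) rely only on the alternation-count dichotomy of the first sentences, for which the infinite case is immediate. So your hand-wave covers exactly the part that is not actually needed, and the substantive content of your proof matches what the paper implicitly claims.
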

For a pair $(x_1,x_2) \in D_{p_1} \times D_{p_2}$ that does not fit the situation $\situ = (p_1,p_2,\ell_1,\ell_2)$, we say that $(x_1,x_2)$ is \emph{of Type~\ref{p:alt:small}} or \emph{Type~\ref{p:alt:large}}, depending on which case
of Observation~\ref{obs:alttypes} it falls into.

Observe that for every $(x_1,x_2), (x_1',x_2') \in D_{p_1} \times D_{p_2}$ with $x_1 \leq x_1'$ and 
$x_2' \leq x_2$ the set $\points_\situ(x_1',x_2')$ is a subset of the set $\points_\situ(x_1,x_2)$,
so $\seq(x_1',x_2')$ is a subsequence of $\seq(x_1,x_2)$ and thus $\stseq(x_1',x_2')$ is a subsequence of $\stseq(x_1,x_2)$.
Hence, we have the following.
\begin{observation}\label{obs:altmon}
Let $(x_1,x_2) \in D_{p_1} \times D_{p_2}$ be a pair that does not fit the situation $\situ = (p_1,p_2,\ell_1,\ell_2)$. 
If $(x_1,x_2)$ is of Type~\ref{p:alt:small} and $(x_1',x_2') \in D_{p_1} \times D_{p_2}$ is such that
$x_1 \leq x_1'$ and $x_2' \leq x_2$, then $(x_1',x_2')$ is of Type~\ref{p:alt:small}, too (in particular, it does not fit $\situ$). 
Similarly, if $(x_1,x_2)$ is of Type~\ref{p:alt:large} and $(x_1',x_2') \in D_{p_1} \times D_{p_2}$ is such that
$x_1' \leq x_1$ and $x_2 \leq x_2'$, then $(x_1',x_2')$ is of Type~\ref{p:alt:large}, too (and, again, it does not fit $\situ$).
\end{observation}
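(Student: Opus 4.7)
The plan is to exploit the monotonicity of $\points_\situ(x_1,x_2)$ with respect to set inclusion as $(x_1,x_2)$ ranges over $D_{p_1} \times D_{p_2}$, then to lift this monotonicity to the derived sequences $\seq(x_1,x_2)$ and $\stseq(x_1,x_2)$. The key easy observation is that when $x_1 \leq x_1'$ and $x_2' \leq x_2$, the strip between the vertical lines $p_1$ at coordinate $x_1'$ and $p_2$ at coordinate $x_2'$ is (weakly) narrower than that at coordinates $x_1$ and $x_2$, while $\ell_1,\ell_2$ are fixed; hence $\points_\situ(x_1',x_2') \subseteq \points_\situ(x_1,x_2)$. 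The paragraph immediately preceding the observation already records this and the fact that $\stseq(x_1',x_2')$ is then a subsequence of $\stseq(x_1,x_2)$, so I would simply invoke it, but for completeness I would also spell out the (elementary) reason: removing a single point from $\seq$ either does nothing to $\stseq$ (if the point lies strictly inside its block), removes one entry from $\stseq$ (if its block is a singleton at the boundary), or removes two consecutive entries from $\stseq$ and leaves the surrounding entries intact (if its block is a singleton in the middle, so the two neighbouring blocks merge). In each case the resulting $\stseq$ is obtained from the original $\stseq$ by deleting entries, so it is a subsequence.

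For the Type~\ref{p:alt:small} claim, $(x_1,x_2)$ has finite alternation, so no two points of $\points_\situ(x_1,x_2)$ share a $y$-coordinate while lying in different classes. Since $\points_\situ(x_1',x_2') \subseteq \points_\situ(x_1,x_2)$, no such pair exists for $(x_1',x_2')$ either, so its alternation is finite as well. Furthermore, by the subsequence fact $\stseq(x_1',x_2')$ is a subsequence of $\stseq(x_1,x_2)$, which by hypothesis is a proper subsequence of $\stseq_\situ$. Hence the length of $\stseq(x_1',x_2')$ is strictly less than that of $\stseq_\situ$, so $\stseq(x_1',x_2')$ is a proper subsequence of $\stseq_\situ$, yielding Type~\ref{p:alt:small}.

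For the Type~\ref{p:alt:large} claim, by symmetry now $\points_\situ(x_1,x_2) \subseteq \points_\situ(x_1',x_2')$ and hence $\stseq(x_1,x_2)$ is a subsequence of $\stseq(x_1',x_2')$. If $\points_\situ(x_1,x_2)$ already contains two points with the same $y$-coordinate from different classes, so does $\points_\situ(x_1',x_2')$, and we are done (alternation is $+\infty$, so not smaller than that of $\situ$). Otherwise, by Type~\ref{p:alt:large} hypothesis $\stseq(x_1,x_2)$ is not a subsequence of $\stseq_\situ$; but then neither is any sequence containing it as a subsequence, in particular $\stseq(x_1',x_2')$ is not a subsequence of $\stseq_\situ$, which is exactly Type~\ref{p:alt:large}.

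There is no real obstacle; the only mildly delicate point is the subsequence-preservation of $\stseq$ under subsequence of $\seq$, which I would settle by the one-element-removal case analysis above (or simply cite the sentence preceding the observation, which states the same fact). Everything else is bookkeeping about finite versus infinite alternation and proper versus non-proper subsequences.
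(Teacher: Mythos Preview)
Your proof is correct and follows the same approach as the paper, which simply records the set inclusion $\points_\situ(x_1',x_2') \subseteq \points_\situ(x_1,x_2)$ and the resulting subsequence relation on $\stseq$ in the sentence immediately preceding the observation and leaves the rest implicit. You spell out more detail than the paper does (the one-point-removal case analysis and the separate handling of the infinite-alternation case), but the underlying argument is identical.
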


\subsubsection{Filtering for correct alternation}
We exhaustively perform the following filtering operation for each situation $\sigma = (p_1, p_2, \ell_1, \ell_2)$: 
If there exists $x_1 \in D_{p_1}$ such that there is no $x_2 \in D_{p_2}$ such that $(x_1,x_2)$ fits $\situ$,
   we remove $x_1$ from $D_{p_1}$ and symmetrically, 
if there exists $x_2 \in D_{p_2}$ such that there is no $x_1 \in D_{p_1}$ such that $(x_1,x_2)$ fits $\situ$,
   we remove $x_2$ from $D_{p_2}$.
Henceforth we assume that
for every $x_1 \in D_{p_1}$ there is at least one $x_2 \in D_{p_2}$ such that $(x_1,x_2)$ fits $\situ$ and 
for every $x_2 \in D_{p_2}$ there is at least one $x_1 \in D_{p_1}$ such that $(x_1,x_2)$ fits $\situ$.
Clearly, if all the branching steps made a correct guesses, we do not remove 
neither $\Xoptsol_X(p_1)$ from $D_{p_1}$ nor $\Xoptsol_X(p_2)$ from $D_{p_2}$.
That is, this filtering step is sound.
It is not hard to see that it can be carried out in polynomial time.

For the alternating lines constraints that we introduce in \cref{sec:alternating-lines} we need the following observation on the structure of the remaining values.
Consider a value $x_1 \in D_{p_1}$ for the line variable~$p_1 \in \Xoptlines$.
Observation~\ref{obs:altmon} implies that the set of values $x_2 \in D_{p_2}$ such that $(x_1,x_2)$ fit $\situ$
forms a segment in $D_{p_2}$.
Let $x_2^\leftarrow(x_1)$ and $x_2^\rightarrow(x_1)$
be the minimum and maximum values $x_2 \in D_{p_2}$ for which $(x_1,x_2)$ fits $\situ$. 
Similarly, for a value $x_2 \in D_{p_2}$, let $x_1^\leftarrow(x_2)$ and $x_1^\rightarrow(x_2)$
be the minimum and maximum values $x_1 \in D_{p_1}$ for which $(x_1,x_2)$ fits $\situ$.
Observe that $x_1^\leftarrow$ defines a function $D_{p_1} \to D_{p_2}$ and analogously for $x_1^\rightarrow$, $x_2^\leftarrow$, and $x_2^\rightarrow$.
Note that Observation~\ref{obs:altmon} implies the following.
\begin{observation}\label{obs:altmon2}
The functions $x_1^\leftarrow$, $x_1^\rightarrow$, $x_2^\leftarrow$, and $x_2^\rightarrow$
are nondecreasing.
\end{observation}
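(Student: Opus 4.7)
The plan is to derive all four monotonicity statements directly from Observation~\ref{obs:altmon}. Let me focus on $x_2^\leftarrow$; the other three functions are handled symmetrically by swapping the roles of $(p_1,x_1)$ with $(p_2,x_2)$ and/or reversing the direction of the inequality.

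Take $x_1, x_1' \in D_{p_1}$ with $x_1 \leq x_1'$ and write $x_2' = x_2^\leftarrow(x_1')$, which is well-defined by the filtering step that precedes the statement. By definition, the pair $(x_1', x_2')$ fits $\situ$. I first claim that $(x_1, x_2')$ is not of Type~\ref{p:alt:small}. Indeed, if it were, then applying the Type~\ref{p:alt:small} clause of Observation~\ref{obs:altmon} with the displacement $x_1 \leq x_1'$ and $x_2' \leq x_2'$ would force $(x_1', x_2')$ to be of Type~\ref{p:alt:small} as well, contradicting the fact that it fits $\situ$. Hence $(x_1, x_2')$ either fits $\situ$ or is of Type~\ref{p:alt:large}.

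Now I claim that in either case $x_2^\leftarrow(x_1) \leq x_2'$. If $(x_1, x_2')$ fits $\situ$, this is immediate from the definition of $x_2^\leftarrow$. If instead $(x_1, x_2')$ is of Type~\ref{p:alt:large}, then for every $x_2'' \in D_{p_2}$ with $x_2'' \geq x_2'$ the Type~\ref{p:alt:large} clause of Observation~\ref{obs:altmon} (applied with $x_1 \leq x_1$ and $x_2' \leq x_2''$) yields that $(x_1, x_2'')$ is again of Type~\ref{p:alt:large} and hence does not fit $\situ$. Since the filtering step guarantees that at least one value of $x_2 \in D_{p_2}$ makes $(x_1, x_2)$ fit $\situ$, such a value must satisfy $x_2 < x_2'$, so again $x_2^\leftarrow(x_1) \leq x_2'$. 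This establishes monotonicity of $x_2^\leftarrow$.

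The proofs for the other three functions follow the same template. For $x_2^\rightarrow$: given $x_1 \leq x_1'$, set $x_2 = x_2^\rightarrow(x_1)$, exclude Type~\ref{p:alt:large} for $(x_1', x_2)$ via Observation~\ref{obs:altmon}, and then use the Type~\ref{p:alt:small} monotonicity to conclude $x_2^\rightarrow(x_1') \geq x_2$. The statements for $x_1^\leftarrow$ and $x_1^\rightarrow$ are entirely symmetric, obtained by exchanging the roles of the two coordinates throughout. I do not anticipate any real obstacle here: the observation is essentially a direct corollary of the one-step monotonicity already established in Observation~\ref{obs:altmon}, and the only subtlety is to remember that pairs outside the ``fitting interval'' on the small side must be of Type~\ref{p:alt:small} and those on the large side of Type~\ref{p:alt:large}, which is exactly what the two clauses of Observation~\ref{obs:altmon} enforce.
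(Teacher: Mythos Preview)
Your argument is correct and is exactly the unfolding of what the paper leaves implicit: the paper merely states that Observation~\ref{obs:altmon} implies Observation~\ref{obs:altmon2} without spelling out the case analysis, and your derivation via the Type~\ref{p:alt:small}/Type~\ref{p:alt:large} dichotomy together with the prior filtering step is the intended route.
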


\subsubsection{Alternation constraints}
Observation~\ref{obs:altx1x2} asserts that the values $(x_1,x_2)$ of variables $p_1$ and $p_2$ in the solution $(X,Y)$ fit the situation~$\situ$.
This motivates adding the following constraints.
For every situation $\situ = (p_1,p_2,\ell_1,\ell_2)$ of alternation at least four
we add a constraint binding $p_1$ and $p_2$ that allows only pairs of values $(x_1,x_2)$ that fit the situation~$\situ$. 

Observation~\ref{obs:altx1x2} asserts that the assignment $\Xoptsol_X \cup \Yoptsol_Y$ satisfies all alternation constraints.
Clearly, there are $\Oh(k^2)$ alternation constraints, as the choice of $p_1$ and $\ell_1$ defines the situation
$\situ = (p_1,p_2,\ell_1,\ell_2)$. 
We now prove that a single alternation constraint is a conjunction of two constraints of bounded depth.
\begin{lemma}\label{lem:alt-depth}
For a situation $\situ = (p_1,p_2,\ell_1,\ell_2)$ of alternation at least four,
the alternation constraint binding $p_1$ and $p_2$ is equivalent to a conjunction of two constraints, each with a segment representation of depth~$1$. Moreover, the latter conjunction of constraints and their segment representations can be computed in polynomial time. 
\end{lemma}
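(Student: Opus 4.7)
The plan is to split the alternation constraint into its two natural halves, corresponding to the two failure modes identified in Observation~\ref{obs:alttypes}: pairs $(x_1,x_2)$ that have too few alternations (Type~\ref{p:alt:small}), and pairs that have too many alternations or infinite alternation (Type~\ref{p:alt:large}). Let $A, B \subseteq D_{p_1} \times D_{p_2}$ be the set of pairs of Type~\ref{p:alt:small} and Type~\ref{p:alt:large}, respectively, and let $R_1 := (D_{p_1} \times D_{p_2}) \setminus A$ and $R_2 := (D_{p_1} \times D_{p_2}) \setminus B$. Then the alternation constraint, which asserts that $(x_1,x_2)$ fits $\situ$, is exactly the conjunction $(x_1,x_2) \in R_1 \wedge (x_1,x_2) \in R_2$. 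It therefore suffices to show that each of $R_1$ and $R_2$ is of depth at most $1$ and that a segment representation can be computed in polynomial time.

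To handle $R_1$, I would apply Observation~\ref{obs:altmon}: since $A$ is closed under increasing the first coordinate and decreasing the second, its complement $R_1$ is closed under decreasing the first coordinate and increasing the second. To turn this into a downwards-closed relation in the sense of the paper, I would follow the recipe of Observation~\ref{obs:ineq}: take $\pi_1$ to be the identity on $D_{p_1}$ and let $\pi_2$ be the single-segment reversion that reverses all of $D_{p_2}$. Then a straightforward unwinding shows that $R_1' := \{(x_1,\pi_2(x_2)) \mid (x_1,x_2) \in R_1\}$ is downwards-closed, giving a segment representation of depth $0+1=1$ for $R_1$. Symmetrically, since $B$ is closed under decreasing the first coordinate and increasing the second, $R_2$ is closed under increasing the first and decreasing the second, and reversing the order on $D_{p_1}$ with a one-segment reversion (with $\pi_2$ the identity this time) yields a segment representation of depth $1$ for $R_2$.

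For the polynomial-time computation, I would compute the set of fitting pairs by brute force: for every $(x_1,x_2) \in D_{p_1} \times D_{p_2}$, evaluate $\seq(x_1,x_2)$ in polynomial time by scanning the points of $W_1 \cup W_2$ between $p_1,p_2$ and $\ell_1,\ell_2$, compare it with $\stseq_\situ$, and classify the pair as fitting, Type~\ref{p:alt:small}, or Type~\ref{p:alt:large}. From this classification both $R_1'$ and $R_2'$ can be written down explicitly together with the trivial single-segment reversions $\pi_1$ and $\pi_2$ described above.

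I do not anticipate a serious obstacle in this proof: the whole argument is an application of the closure properties in Observation~\ref{obs:altmon} combined with the single-segment reversion trick of Observation~\ref{obs:ineq}. The only thing that requires a moment of care is keeping track of which coordinate gets reversed in which of $R_1$ and $R_2$, since the two ``failure'' directions of Observation~\ref{obs:altmon} are opposite.
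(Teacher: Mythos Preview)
Your proposal is correct and follows essentially the same approach as the paper: split the alternation constraint into ``not Type~(a)'' and ``not Type~(b)'' via Observation~\ref{obs:alttypes}, then use the monotonicity from Observation~\ref{obs:altmon} to see that each half becomes downwards-closed after reversing one of the two domains with a single-segment reversion. The only cosmetic difference is that the paper routes the depth-$1$ conclusion through Observation~\ref{obs:2sat} (expressing each half as a conjunction of clauses $(\Xoptsol(p_1) < x_1) \vee (\Xoptsol(p_2) > x_2)$, respectively $(\Xoptsol(p_1) > x_1) \vee (\Xoptsol(p_2) < x_2)$), whereas you argue the closure property directly and then invoke the single-reversion trick in the style of Observation~\ref{obs:ineq}; these are two phrasings of the same fact.
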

\begin{proof}
By Observation~\ref{obs:alttypes}, 
the discussed alternation constraint is a conjunction of a constraint 
``$(\Xoptsol(p_1),\Xoptsol(p_2))$ is not of Type~\ref{p:alt:small}''
and a constraint ``$(\Xoptsol(p_1), \Xoptsol(p_2))$ is not of Type~\ref{p:alt:large}''.
By Observation~\ref{obs:altmon}, the constraint
``$(\Xoptsol(p_1),\Xoptsol(p_2))$ is not of Type~\ref{p:alt:small}''
is a conjunction, 
over all pairs $(x_1,x_2) \in D_{p_1} \times D_{p_2}$ of Type~\ref{p:alt:small}
of a constraint $(\Xoptsol(p_1) < x_1) \vee (\Xoptsol(p_2) > x_2)$.
Such a conjunction can be represented with a segment representation of depth $1$ due to Observation~\ref{obs:2sat}.
Similarly, by Observation~\ref{obs:altmon}, again
the constraint ``$(\Xoptsol(p_1),\Xoptsol(p_2))$ is not of Type~\ref{p:alt:large}''
is a conjunction, 
over all pairs $(x_1,x_2) \in D_{p_1} \times D_{p_2}$ of Type~\ref{p:alt:large}
of a constraint $(\Xoptsol(p_1) > x_1) \vee (\Xoptsol(p_2) < x_2)$.
Again, such a conjunction can be represented with a segment representation of depth $1$ due to Observation~\ref{obs:2sat}.
This finishes the proof of the lemma.
\end{proof}
Consequently, by adding alternation constraints we add $\Oh(k^2)$ constraints, each of depth~$1$.

\subsection{Filtering for correct orders of extremal points}\label[section]{sec:consist-DE}

Unfortunately, alternation constraints are still not enough to ensure completeness---we need to restrict the places where the alternation occurs further in order to be able to formulate a CSP of the form described in \cref{sec:csp}.
Consider a situation $\sigma = (p_1, p_2, \ell_1, \ell_2)$, fixing a position for $p_1$, and moving the position for~$p_2$ in increasing $\leqx$-order over the positions where the correct alternation is obtained.
This gives sequences of possible positions for the horizontal lines that ensure the correct alternation.
However, intuitively, it is possible for these positions to jump within the $\leqy$-order in a noncontinuous fashion, from top to bottom and back.
We have no direct way of dealing with such discontinuity in the CSP of the form in \cref{sec:csp}. 
To avoid this behaviour, we now make crucial use of the information we have guessed in Branching Step~D and~E.
In combination with Observations~\ref{obs:altx1x2}, \ref{obs:alttypes}, and~\ref{obs:altmon}, we can smooth the admissible positions for the horizontal lines that give the correct alternation. 

Recall that we have assumed without loss of generality that $\cont(\apxcell(p_1',\ell_1)) = 1$ and $\cont(\apxcell(p_2', \ell_1)) = 2$.
Thus, having fixed the value $x_1 \in D_{p_1}$ of $\Xoptsol(p_1)$, the set of points from $W_1$ in $\points_\situ(x_1,x_2)$ is fixed, regardless of the value~$x_2 \in D_{p_2}$ of~$\Xoptsol(p_2)$.
Furthermore:
\begin{observation}\label{obs:altcontents}
Let $x_1 \in D_{p_1}$. For every $x_2 \in D_{p_2}$ with $x_2^\leftarrow(x_1) \leq x_2 \leq x_2^\rightarrow(x_1)$, we have
$$\points_\situ(x_1,x_2^\leftarrow(x_1)) \cap W_2 \subseteq \points_\situ(x_1,x_2) \cap W_2 \subseteq \points_\situ(x_1,x_2^\rightarrow(x_1)) \cap W_2.$$
\end{observation}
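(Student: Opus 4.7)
My proof plan is direct: the observation is essentially a monotonicity statement, and it reduces in one step to the monotonicity of $\points_\situ(x_1,\cdot)$ with respect to its second coordinate, intersected with $W_2$.

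The first step is to establish an auxiliary inclusion: for every $x_1 \in D_{p_1}$ and every pair $x_2', x_2 \in D_{p_2}$ with $x_2' \leq x_2$, we have $\points_\situ(x_1,x_2') \subseteq \points_\situ(x_1,x_2)$. This is immediate from the definition: by construction, $\points_\situ(x_1,x_2)$ is the set of points $(x,y) \in W_1 \cup W_2$ whose $x$-coordinate is strictly between $x_1$ and $x_2$ and whose $y$-coordinate is strictly between $\Yapxsol(\ell_1)$ and $\Yapxsol(\ell_2)$. Enlarging the right endpoint from $x_2'$ to $x_2$ only enlarges the admissible range for $x$, while the $y$-constraints and the set $W_1 \cup W_2$ are unchanged. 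Hence any point in the smaller set lies in the larger one.

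For the second step, I would apply this auxiliary inclusion twice. Since $x_2^\leftarrow(x_1) \leq x_2$, setting $x_2' := x_2^\leftarrow(x_1)$ yields $\points_\situ(x_1,x_2^\leftarrow(x_1)) \subseteq \points_\situ(x_1,x_2)$. Since $x_2 \leq x_2^\rightarrow(x_1)$, applying the inclusion with roles $x_2$ and $x_2^\rightarrow(x_1)$ yields $\points_\situ(x_1,x_2) \subseteq \points_\situ(x_1,x_2^\rightarrow(x_1))$. Intersecting both inclusions with $W_2$ preserves them and produces exactly the chain claimed in the statement.

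Since the argument uses only the integer-geometric definition of $\points_\situ$ and the numerical inequalities $x_2^\leftarrow(x_1) \leq x_2 \leq x_2^\rightarrow(x_1)$, no real obstacle arises. In particular, nothing about the notion of ``fitting'' the situation, about the alternation, or about $W_1$ is needed for the proof; the observation merely records a convenient monotonicity fact that, combined with the fixed $W_1$-content observed in the paragraph preceding the statement, will later let us reason about how the $W_2$-content of the area between $p_1$ and $p_2$ varies with $x_2$ when $x_1$ is held fixed.
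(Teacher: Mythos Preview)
Your proof is correct and matches the paper's implicit reasoning. The paper states this observation without proof, since the required monotonicity of $\points_\situ(x_1,\cdot)$ in its second argument was already noted explicitly just before Observation~\ref{obs:altmon}; your argument spells out exactly that deduction.
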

\noindent Even further, the following important observation states that the partition of $W_1 \cap \points_\situ(x_1,x_2)$ into blocks does not depend on~$x_2$.
\begin{observation}\label{obs:altblocks}
  Let $x_1 \in D_{p_1}$.
  Then, the partition of the points of $W_1 \cap \points_\situ(x_1,x_2)$ into blocks is the same for any choice of $x_2 \in D_{p_2}$ with $x_2^\leftarrow(x_1) \leq x_2 \leq x_2^\rightarrow(x_1)$.
  Symmetrically, let $x_2 \in D_{p_2}$.
  Then, the partition of the points of $W_2 \cap \points_\situ(x_1,x_2)$ into blocks is the same for any choice of $x_1 \in D_{p_1}$ with $x_1^\leftarrow(x_2) \leq x_1 \leq x_1^\rightarrow(x_2)$.
\end{observation}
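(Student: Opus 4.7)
The plan is to characterize the partition of $W_1 \cap \points_\situ(x_1,x_2)$ into blocks purely in terms of a set of ``breakpoints'' among the $W_1$-points, and then use the fitting condition to squeeze this set of breakpoints between two equal sets. The first ingredient is the observation, already noted in the excerpt just before Observation~\ref{obs:altcontents}, that for a fixed value $x_1$ of $\Xoptsol(p_1)$ the set $W_1 \cap \points_\situ(x_1,x_2)$ does not depend on $x_2$: this follows because $\cont(\apxcell(p_2', \ell_1)) = 2$, so no $W_1$-point of the area of interest lies in the \apx-supercell $\apxcell(p_2', \ell_1)$, which is the only one whose intersection with $\area(p_1,p_2,\ell_1,\ell_2)$ depends on $x_2$.

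Fix $x_1 \in D_{p_1}$ and enumerate the $W_1$-points of $\points_\situ(x_1,x_2)$ as $w_1^1, \ldots, w_1^m$ in increasing $\leqy$-order (this listing does not depend on $x_2$). For any $x_2 \in [x_2^\leftarrow(x_1), x_2^\rightarrow(x_1)]$, let $B(x_2) \subseteq \{1,\ldots,m-1\}$ be the set of indices $i$ such that there exists a point $w \in W_2 \cap \points_\situ(x_1,x_2)$ with $w_1^i \leqy w \leqy w_1^{i+1}$. Then the $W_1$-block partition of $\points_\situ(x_1,x_2)$ is exactly the partition induced by cutting the sequence $w_1^1,\ldots,w_1^m$ at every index of $B(x_2)$, and the number of $W_1$-blocks is $|B(x_2)|+1$. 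By Observation~\ref{obs:altcontents}, the sets $W_2 \cap \points_\situ(x_1,x_2)$ are monotone in $x_2$ under inclusion, so $B(x_2^\leftarrow(x_1)) \subseteq B(x_2) \subseteq B(x_2^\rightarrow(x_1))$. Since the three values $x_2^\leftarrow(x_1), x_2, x_2^\rightarrow(x_1)$ all fit $\situ$, and we have reduced to $\stseq_\situ = (12)^r$, all three of $(x_1, x_2^\leftarrow(x_1))$, $(x_1, x_2)$ and $(x_1, x_2^\rightarrow(x_1))$ induce exactly $r$ $W_1$-blocks, i.e., $|B(x_2^\leftarrow(x_1))| = |B(x_2)| = |B(x_2^\rightarrow(x_1))| = r-1$. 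The inclusion chain combined with equality of cardinalities forces $B(x_2^\leftarrow(x_1)) = B(x_2) = B(x_2^\rightarrow(x_1))$, which yields the desired equality of $W_1$-block partitions.

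The symmetric statement is proved by swapping the roles of the two axes: fixing $x_2 \in D_{p_2}$, the assumption $\cont(\apxcell(p_1', \ell_1)) = 1$ plays the same structural role as $\cont(\apxcell(p_2', \ell_1)) = 2$ above and ensures that $W_2 \cap \points_\situ(x_1,x_2)$ does not depend on $x_1$, after which the identical breakpoint-squeeze argument applies to the $W_2$-block partition on $[x_1^\leftarrow(x_2), x_1^\rightarrow(x_2)]$. The only subtle point worth double-checking is the very first one, namely verifying locality of $W_1$ points with respect to $x_2$ (and symmetrically); everything else is a clean cardinality argument, so I do not anticipate a real obstacle beyond being careful with the extended definition of $\cont$ on \apx-supercells and with the tie-breaking convention of $\leqy$ when reasoning about ``strictly between'' in the definition of $B(x_2)$.
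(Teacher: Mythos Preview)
Your proof is correct and follows essentially the same idea as the paper's: the $W_1$-points are fixed once $x_1$ is fixed, the $W_2$-points grow monotonically in $x_2$, and the fitting condition pins down the number of $W_1$-blocks, forcing the partition to be constant. The paper phrases this slightly more directly by comparing any two values $x_2 \leq x_2'$ and noting that $\seq(x_1,x_2)$ is a subsequence of $\seq(x_1,x_2')$ with the same $1$s and the same $\stseq$, hence the same maximal runs of $1$s; your breakpoint-squeeze via the endpoints $x_2^\leftarrow(x_1)$ and $x_2^\rightarrow(x_1)$ is an equivalent repackaging that additionally invokes the specific form $\stseq_\situ = (12)^r$ to count blocks, whereas the paper's argument only needs $\stseq(x_1,x_2) = \stseq(x_1,x_2')$.
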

\begin{proof}
We prove only the first statement, the second one is symmetrical.
Fix two integers $x_2^\leftarrow(x_1) \leq x_2 \leq x_2' \leq x_2^\rightarrow(x_1)$. 
Then the sequence $\seq(x_1,x_2)$ is a subsequence of $\seq(x_1,x_2')$
that contains the same number of $1$s; they differ only in the number of $2$s.
Since both $(x_1,x_2)$ and $(x_1,x_2')$ fit $\situ$, $\stseq(x_1,x_2) = \stseq(x_1,x_2')$.
Hence, the maximal sequences of consecutive $1$s in $\seq(x_1,x_2)$ and $\seq(x_1,x_2')$ are the same.
Since set of points from $W_1$ in $\points_\situ(x_1,x_2)$ and $\points_\situ(x_1,x_2')$ are the same, the statement follows.
\end{proof}
Observation~\ref{obs:altblocks} allows us to make the following filtering step for situation $\situ$ using the information guessed in Branching Steps D and E.
Informally, in these branching steps we have guessed for each point for which cell it can be an extremal point.
Since the blocks of $W_1 \cap \points_\situ(x_1,x_2)$ are fixed once $x_1$ is fixed, some of the extremal points are fixed, and we can now remove values from $D_{p_1}$ for which this guess would be incorrect.
Similar for~$x_2$.
The formal filtering step works as follows.

Recall that $\stseq(x_1,x_2^\leftarrow(x_1)) = (12)^r$ for some $r \in \mathbb{N}$, $r \geq 2$.
Let $\ell^1,\ell^2, \ldots, \ell^{2r}$ be the elements of $\stlines_\situ$ in increasing order
(i.e., $\altlines_\situ = \{\ell^2, \ell^3, \ldots, \ell^{2r}\}$, cf. Figure~\ref{fig:alt}).

For each $x_1 \in D_{p_1}$, let $B_1^1(x_1), B_1^2(x_1), \ldots, B_1^r(x_1)$ be the partition of $\points_\situ(x_1,x_2^\leftarrow(x_1)) \cap W_1$ into blocks
in the increasing order of~$\leqy$.
Similarly, for each $x_2 \in D_{p_2}$, let $B_2^1(x_2), \ldots, B_2^r(x_2)$ be the partition of $\points_\situ(x_1^\rightarrow(x_2), x_2) \cap W_2$ into blocks
in the increasing order of $\leqy$. 
For each $i \in [r]$, let $\leader_1^i(x_1)$ be the rightmost element of $B_1^i(x_1)$
and let $\leader_2^i(x_2)$ be the leftmost element of $B_2^i(x_2)$.
Below we call these elements \emph{leaders}.

Assume that all branching steps made correct guesses regarding the solution~$(X, Y)$ and consider $x_1^{X,Y} := \Xoptsol_X(p_1)$, $x_2^{X,Y} := \Xoptsol_X(p_2)$. 
Then, for every $i \in [r]$, by the definition of alternating lines, the $y$-coordinate $\Yoptsol_Y(\ell^{2i})$ lies between the $y$-coordinates of the points of $B_1^i(x_1^{X,Y})$
and $B_2^i(x_2^{X,Y})$
and, for every $i \in [r]$ with $i > 1$, the $y$-coordinate $\Yoptsol_Y(\ell^{2i-1})$ lies between the $y$-coordinates of the points of $B_2^{i-1}(x_2^{X,Y})$
and $B_1^i(x_1^{X,Y})$.
Also, for every $i \in [r]$, the element $\leader_1^i(x_1^{X,Y})$ is the rightmost element of its cell and $\leader_2^i(x_2^{X,Y})$ is the leftmost element of its cell.

Fix $i \in [r]$. 
We now observe that we can deduce from the information guessed in the branching steps 
to which cell the element $\leader_1^i(x_1^{X,Y})$ belongs.
Indeed, $B_1^i(x_1^{X,Y})$ consists of the cells $\cell(p, \ell)$ for all $p \in \Xlines$ and $\ell \in \Ylines$ with $p_1 \leq p < p_2$ and $\ell^{2i-1} \leq \ell < \ell^{2i}$.
At Branching Step~C we have guessed which of these cells are empty and which contain some element of $W_1$:
We expect that $\cont(\cell(p,\ell)) \in \{0,1\}$ for every such pair $(p,\ell)$ as above and $\cont(\cell(p,\ell)) = 1$ for at least one such pair; 
we reject the current branch if this is not the case.
The information guessed at Branching Step E allows us to infer 
\begin{itemize}
\item the cell $\cell_1^i \in \{\cell(p, \ell) \mid p_1 \leq p < p_2 \wedge \ell^{2i-1} \leq \ell < \ell^{2i}\}$ that contains $\leader_1^i(x_1^{X,Y})$; and
\item the relative order in $\leqx$ of the elements $\leader_1^i(x_1^{X,Y})$ for $1 \leq i \leq r$. 
\end{itemize}
Observe that, by \cref{obs:altblocks}, given $x_1 \in D_{p_1}$, we can in polynomial time compute whether the above two properties hold.
We remove from $D_{p_1}$ all values $x_1$ for which the order discussed in the second point above is not as expected.
Also, if the information guessed at Branching Step~D is correct, we have $\lead(\leader_1^i(x_1^{X,Y})) = \cell_1^i$. 
We remove from $D_{p_1}$ all values $x_1$ for which there exists $i \in [r]$ with $\lead(\leader_1^i(x_1)) \neq \cell_1^i$.
It is clear that this filtering step is sound and, as mentioned, it can be carried out in polynomial time.

We perform symmetrical analysis with the elements $\leader_2^i(x_2^{X,Y})$. 
That is, the information guessed at Branching Step E allows us to infer 
\begin{itemize}
\item the cell $\cell_2^i \in \{\cell(p, \ell) \mid p_1 \leq p < p_2 \wedge \ell^{2i} \leq \ell < \ell^{2i+1}\}$ (with $\ell^{2r+1} = \ell_2$) that contains $\leader_2^i(x_2^{X,Y})$; and
\item the relative order in $\leqx$ of the elements $\leader_2^i(x_2^{X,Y})$ for $i \in [r]$. 
\end{itemize}
We remove from $D_{p_2}$ all values $x_2$ for which the relative order in $\leqx$ of the elements $\leader_2^i(x_2)$ for $i \in [r]$ is not as expected above.
Also, if the information guessed at Branching Step D is correct, we have $\lead(\leader_2^i(x_2^{X,Y})) = \cell_2^i$. 
We remove from $D_{p_2}$ all values $x_2$ for which there exists $i \in [r]$ with $\lead(\leader_2^i(x_2)) \neq \cell_2^i$.

\subsection{Alternating lines constraints}\label[section]{sec:alternating-lines}
In the previous section we smoothed the possible positions for horizontal lines where the guessed alternation occurs.
This enables us now to introduce constraints that describe these positions and have a form that is suitable for the type of CSP of \cref{sec:csp}.

Fix a situation $\situ = (p_1, p_2, \ell_1, \ell_2)$ of alternation~$a \geq 4$ and let $r = a/2$.
Recall the definitions of the lines $\ell^{1}, \ell^{2}, \ldots, \ell^{2r}$, blocks $B_1^1(\cdot), B_1^2(\cdot), \ldots, B_1^r(\cdot)$, and blocks $B_2^1(\cdot), B_2^2(\cdot) \ldots, B_2^r(\cdot)$ from the previous section.
We introduce the following constraints.
\begin{enumerate}
\item 
For every $i \in [r]$, we introduce a constraint binding $p_1$ and $\ell^{2i}$ that asserts that $\Yoptsol(\ell^{2i})$ is larger than the largest $y$-coordinate
of an element of $B_1^i(\Xoptsol(p_1))$ (i.e., the line $\ell^{2i}$ is above the block $B_1^i$). 
\item 
For every $i \in [r] \setminus \{1\}$, we introduce a constraint binding $p_1$ and $\ell^{2i-1}$ that asserts that $\Yoptsol(\ell^{2i-1})$ is smaller than the smallest $y$-coordinate
of an element of $B_1^i(\Xoptsol(p_1))$ (i.e., the line $\ell^{2i-1}$ is below the block $B_1^i$). 
\item 
For every $i \in [r]$, we introduce a constraint binding $p_2$ and $\ell^{2i}$ that asserts that $\Yoptsol(\ell^{2i})$ is smaller than the smallest $y$-coordinate
of an element of $B_2^i(\Xoptsol(p_2))$ (i.e., the line $\ell^{2i}$ is below the block $B_2^i$). 
\item 
For every $i \in [r - 1]$, we introduce a constraint binding $p_2$ and $\ell^{2i+1}$ that asserts that $\Yoptsol(\ell^{2i+1})$ is larger than the largest $y$-coordinate
of an element of $B_2^i(\Xoptsol(p_2))$ (i.e., the line $\ell^{2i+1}$ is above the block $B_2^i$). 
\end{enumerate}
We call the above constraints \emph{alternating-lines constraints}.
Again, the soundness property of the new constraints is straightforward.
We now prove that the alternating lines constraints are of bounded depth (in the sense of \cref{sec:csp}) and that a corresponding representation can be computed in polynomial time.
This is the intuitive statement behind the following highly nontrivial lemma, whose proof spans the rest of this subsection.
\begin{lemma}\label{lem:imp-depth}
Let $\situ = (p_1,p_2,\ell_1,\ell_2)$ be a situation of alternation $a \geq 4$ and let $r = a/2$.
Assume that $\cont(\apxcell(p_i', \ell_1)) = i$ for $i=1,2$, where $p_i'$ is the predecessor of $p_i$ in~$\Xapxlines$, and that $\stseq_\situ = (12)^{r}$.
Then one can in polynomial time compute two rooted trees $T_j$ for $j=1,2$
with $\leaves(T_j) = \{v_j^1, v_j^2, \ldots, v_j^a, u_j^1, u_j^2, \ldots, u_j^r\}$ and
$|V(T_j)| = \Oh(r)$,
two families of segment reversions $\mathcal{G}_j = (g_{j,v})_{v \in V(T_{j}) \setminus \treeroot(T_j)}$ for $j=1,2$,
and four families of downwards-closed relations $(R_j^i)_{i=1}^r$
for $j=1,2,3,4$ such that the following holds. 
For every $i \in [r]$ and $j=1,2$, let
$v_j^i = w_{j,1}, w_{j,2}, \ldots, w_{j,b_j^i} = \treeroot(T_j)$ be the nodes on the path from $v_j^i$
to $\treeroot(T_j)$ in the tree $T_j$ and let
$u_j^i = z_{j,1}, z_{j,2}, \ldots, z_{j,c_j^i} = \treeroot(T_j)$ be the nodes on the path from $u_j^i$
to $\treeroot(T_j)$ in the tree $T_j$.
Then, for every $i \in [r]$, 
\begin{enumerate}
\item the first alternating-lines constraint for block $B_1^i$ and the line $\ell^{2i}$
is equivalent to 
$$(g_{1,w_{1,b_1^i - 1}} \circ g_{1,w_{1,b_1^i -2}} \circ \ldots \circ g_{1,w_{1,1}}(\Xoptsol(p_1)), g(\Yoptsol(\ell^{2i}))) \in R_1^i,$$
where $g$ is the segment reversion that reverses the whole domain of $\ell^{2i}$;
\item if $i > 1$, then the second alternating-lines constraint for block $B_1^i$ and the line $\ell^{2i-1}$
is equivalent to 
$$(g \circ g_{1,z_{1,c_1^i - 1}} \circ g_{1,z_{1,c_1^i -2}} \circ \ldots \circ g_{1,z_{1,1}}(\Xoptsol(p_1)), \Yoptsol(\ell^{2i})) \in R_2^i,$$
where $g$ is the segment reversion that reverses the whole domain of $p_1$;
\item the third alternating-lines constraint for block $B_2^i$ and the line $\ell^{2i}$
is equivalent to 
$$(g \circ g_{2,w_{2,b_2^i - 1}} \circ g_{2,w_{2,b_2^i -2}} \circ \ldots \circ g_{2,w_{2,1}}(\Xoptsol(p_2)), \Yoptsol(\ell^{2i})) \in R_3^i,$$
where $g$ is the segment reversion that reverses the whole domain of $p_2$; and
\item if $i < r$, then the fourth alternating-lines constraint for block $B_2^i$ and the line $\ell^{2i+1}$
is equivalent to 
$$(g_{2,z_{2,c_2^i - 1}} \circ g_{2,z_{2,c_2^i -2}} \circ \ldots \circ g_{2,z_{2,1}}(\Xoptsol(p_2)), g(\Yoptsol(\ell^{2i+1}))) \in R_4^i,$$
where $g$ is the segment reversion that reverses the whole domain of $\ell^{2i}$.
\end{enumerate}
\end{lemma}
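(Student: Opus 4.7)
The plan is to apply Lemma~\ref{lem:make-seg-rep} to a tree of segment partitions that captures how the $y$-coordinate boundaries of the blocks $B_1^i$ (resp.~$B_2^i$) shift as $p_1$ (resp.~$p_2$) slides through its domain. The first and second alternating-lines constraints (which only involve $p_1$) can be handled by a single tree $T_1$, and by a left-right symmetric construction one gets $T_2$ for the third and fourth constraints; the full-domain reversion $g$ on $D_{p_2}$ appearing in items~(3) and (4) precisely implements this mirror, so I would concentrate on $T_1$. For each $x_1 \in D_{p_1}$ set $x_2 := x_2^\leftarrow(x_1)$, which is nondecreasing by Observation~\ref{obs:altmon2}; by Observation~\ref{obs:altblocks} the blocks $B_1^1(x_1),\dots,B_1^r(x_1)$ depend only on $x_1$. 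Define leaf functions $f^\uparrow_i(x_1)$ and $f^\downarrow_i(x_1)$ as the maximum and minimum $y$-coordinates, respectively, of points in $B_1^i(x_1)$. With $g$ denoting the full-domain reversion of $D_{\ell^{2i}}$, item~(1) is equivalent to $f^\uparrow_i(\Xoptsol(p_1)) < \Yoptsol(\ell^{2i})$, and item~(2) (after prepending the full-domain reversion of $D_{p_1}$) is equivalent to $\Yoptsol(\ell^{2i-1}) < f^\downarrow_i(\Xoptsol(p_1))$.

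Second, I would construct $T_1$ along the lines sketched in Section~\ref{sec:overview} and Figure~\ref{fig:over:sliding2}: the non-leaf nodes are \emph{block incarnations} that arise as $x_1$ decreases through $D_{p_1}$, and the parent of an incarnation is the incarnation that absorbs it at the next merge event. The $\Oh(r)$ leaves correspond to the boundary functions $f^\uparrow_i$ and $f^\downarrow_i$. Crucially, the filtering of Section~\ref{sec:consist-DE} fixes the $\leqx$-order of the leaders $\leader_1^i(x_1)$ across all $x_1 \in D_{p_1}$, which makes the merge/split order — and hence the combinatorial shape of $T_1$ — invariant of $x_1$, so $T_1$ can be computed in polynomial time from the guessed data. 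The segment partition $\mathcal{P}_v$ at a node $v$ is obtained by grouping values of $x_1$ according to which descendants of $v$ are still distinct incarnations, and the type $\flf(v) \in \{\typeinc,\typedec\}$ is chosen so that within each part of $\mathcal{P}_{\parent(v)}$, the values of the relevant boundary function on the sub-parts of $\mathcal{P}_v$ are strictly monotone in the prescribed direction. The key dynamical observation is that while $x_1$ varies inside a single incarnation only the leftmost red points are being absorbed (so $f^\downarrow$ moves downward continuously while $f^\uparrow$ stays constant), whereas a merge with the sibling above causes the boundaries to jump upward and a merge with the sibling below causes them to jump downward; iterating and tracking signs provides exactly the alternation demanded by Subsection~\ref{ss:tree}. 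Applying Lemma~\ref{lem:make-seg-rep} then yields the family $\mathcal{G}_1$ and strictly increasing functions $\hat{f}^\uparrow_i, \hat{f}^\downarrow_i$ satisfying the factorizations in the statement.

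Finally, define $R_1^i := \{(a,b') \in D_{p_1} \times D_{\ell^{2i}} \mid \hat{f}^\uparrow_i(a) < g^{-1}(b')\}$; strict monotonicity of $\hat{f}^\uparrow_i$ and the fact that $g$ reverses $D_{\ell^{2i}}$ make $R_1^i$ downwards-closed, and substituting the factorization of $f^\uparrow_i$ recovers item~(1). The relation $R_2^i$ is built identically from $\hat{f}^\downarrow_i$, with the full-domain reversion $g$ on $D_{p_1}$ prepended to the $g_{1,z_{1,\cdot}}$ chain to flip the inequality direction. The relations $R_3^i$ and $R_4^i$ are obtained by repeating the construction with $W_1$, $p_1$, $x_2^\leftarrow$ replaced by $W_2$, $p_2$, $x_1^\rightarrow$. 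The main obstacle is the combinatorial analysis in the second paragraph: rigorously defining the block-incarnation tree, proving its invariance in $x_1$ from the filtering of Section~\ref{sec:consist-DE}, and verifying that the boundary functions satisfy the alternating monotonicity pattern required to form a family of leaf functions — this is where the dynamics sketched around Figure~\ref{fig:over:sliding2} must be carried out carefully case by case on merges with siblings above versus below.
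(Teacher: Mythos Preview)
Your high-level plan matches the paper's: reduce to Lemma~\ref{lem:make-seg-rep} via a tree of segment partitions whose leaf functions are the top and bottom $y$-coordinates of the blocks, then absorb the resulting nondecreasing functions into downwards-closed relations via Lemma~\ref{lem:dc-comp}. The reductions in your third paragraph are essentially what the paper does.

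There is, however, a real gap in your tree construction. You take the non-leaf nodes to be \emph{block incarnations}, with the parent being the incarnation that absorbs a given one at the next merge. But the number of incarnations (epochs) of a single block $B_1^j$ can be as large as $|D_{p_1}|$, so this tree need not have $\Oh(r)$ vertices, violating the size bound in the statement. The paper instead takes $V(T) = [r]$, one node per block \emph{index}, and encodes the incarnations as the segment partition $\mathcal{P}_j$ attached to node $j$ (its epochs); the tree $T'$ fed to Lemma~\ref{lem:make-seg-rep} is then $T$ with two extra leaves per node, giving $3r$ vertices. For this to make sense one must show that the block into which $B_1^j$ merges is the \emph{same} block index regardless of when the merge occurs. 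This is not a consequence of fixing only the $\leqx$-order of the red leaders $\leader_1^i$: whether $B_1^j$ merges upward into $B_1^{i_2}$ or downward into $B_1^{i_1}$ depends on which blue block between them disappears first, which in turn is governed by the $\leqx$-order of the blue leaders $\leader_2^i$ (the permutation $\pi_2$). The paper's parent rule explicitly compares the leftmost blue leaders on both sides; your proposal never invokes~$\pi_2$.

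A smaller point: your ``key dynamical observation'' that within a single incarnation $f^\uparrow$ stays constant is false. Within an epoch the leader of $B_1^j$ is fixed, but the block still grows on both sides as $x_1$ decreases, so $f^\uparrow_j$ is nondecreasing and $f^\downarrow_j$ is nonincreasing (both as $x_1$ decreases). This is precisely why the paper assigns $\flf(v^j) = \typedec$ and $\flf(u^j) = \typeinc$ to the two leaf children of $j$ and then relies on Claim~\ref{cl7} for the monotonicity across epochs one level up.
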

\noindent
In other words, the alternating-lines constraints have segment representations of depth $O(a)$ whose sequences of permutations correspond to root-leaf paths in two trees.

We now proceed to prove \cref{lem:imp-depth}.
Recall that $p_1,p_2 \in \Xoptlines$, $\ell_1,\ell_2 \in \Yapxlines$.
We present the proof for the first two types of alternating lines constraint; the proof of the other types is analogous (i.e., one can consider a center-symmetric image of the instance with the roles of sets $W_1$ and $W_2$ swapped).
That is, we show how to compute the tree $T_1$, the family $\mathcal{G}_1$, and the relations $(R_j^i)$ for $j=1,2$ and $1 \leq i \leq a$.

Let $B_1^1(x_1),B_1^2(x_1),\ldots, B_1^r(x_1)$ be the blocks of $W_1$ in the bottom-to-top order in the situation $\situ = (p_1,p_2,\ell_1,\ell_2)$ when $p_1$ is positioned at $x_1 \in D_{p_1}$.
Recall that, a fixed value $x_1$ for $p_1$ determines the content of $W_1 \cap \points_\situ(x_1,x_2)$ regardless of the choice of the value $x_2$ for $p_2$ (see \cref{obs:altcontents}).
Moreover, by \cref{obs:altblocks}, fixing a value $x_1$ for $p_1$ also determines the partition of $W_1 \cap \points_\situ(x_1,x_2)$ into blocks $B_1^i(x_1)$ (which justifies the notation $B_1^i(x_1)$), see Figure~\ref{fig:blocks}.

Let $\pi_1 : [r] \to [r]$ be a permutation such that $B_1^{\pi_1(1)}(x_1), B_1^{\pi_1(2)}(x_1), \ldots, B_1^{\pi_1(r)}(x_1)$ is the ordering of $B_1^i$s
in the decreasing order with regard to $\leqx$ (i.e., right-to-left) of the leaders (rightmost elements) of $B_1^i(x_1)$. 
That is, we compute a permutation $\pi_1 : [r] \to [r]$ such that for every $x_1 \in D_{p_1}$ we have
$$\leader_1^{\pi_1(r)}(x_1) \leqx \leader_1^{\pi_1(r-1)}(x_1) \leqx \ldots \leqx \leader_1^{\pi_1(1)}(x_1).$$
Observe that $\pi_1$ can be computed in polynomial time using the information guessed in Branching Step~E.

Similarly, let $B_2^1(x_2), \ldots, B_2^r(x_2)$ be the blocks of $W_2$ in the bottom-to-top order with $p_2$ positioned at $x_2 \in D_{p_2}$
and from Branching Step~E we infer a permutation $\pi_2: [r] \to [r]$ such that for every $x_2 \in D_{p_2}$ we have (recall that $\leader_2^i(x_2)$ is the leftmost element of the block $B_2^i(x_2)$)
$$\leader_2^{\pi_2(1)}(x_2) \leqx \leader_2^{\pi_2(2)}(x_2) \leqx \ldots \leqx \leader_2^{\pi_2(r)}(x_2).$$
In what follows, the argument $x_1$ or $x_2$ in $B_1^i(x_1)$ or $B_2^j(x_2)$ will sometimes be superfluous when we only discuss the bottom-to-top order
of these blocks or the left-to-right order of their leaders---these orders are fixed regardless of $x_1$ or $x_2$. In such cases we will omit the argument.

Let $\leadpos_i : D_{p_1} \to \mathbb{N}$ be the function that assigns to $x_1 \in D_{p_1}$ the $y$-coordinate of $\leader_1^i(x_1)$,
    $\leadup_i : D_{p_1} \to \mathbb{N}$ be the function that assigns to $x_1 \in D_{p_1}$ the $y$-coordinate of the topmost element of the block $B_1^i(x_1)$, 
and $\leaddown_i : D_{p_1} \to \mathbb{N}$ be the function that assigns to $x_1 \in D_{p_1}$ the $y$-coordinate of the bottommost element of the block $B_1^i(x_1)$.

The main ingredient in the proof of \cref{lem:imp-depth}, and our main technical result, is the following lemma, which captures the structure of possible placements of vertical lines as a tree-like application of a bounded number of segment reversions.

\begin{lemma}\label{lem:leaders}
In polynomial time, one can compute a rooted tree $T'$
with $\leaves(T') = \{v^1, v^2, \ldots, v^a, u^1, u^2, \ldots, u^a\}$ and $|V(T')| = \Oh(a)$,
a family of segment reversions
$\mathcal{G} = (g_{v})_{v \in V(T') \setminus \{\treeroot(T')\}}$,
  and a family of nondecreasing functions $\widehat{\mathcal{F}} = (\hat{f}_v)_{v \in \leaves(T')}$
such that the following holds.
For every $i \in [a]$, 
if $v^i = v_1, v_2, \ldots, v_b = \treeroot(T')$ is the path from $v^i$ to the root $\treeroot(T')$
and $u^i = u_1, u_2, \ldots, u_c = \treeroot(T')$ is the path from $u^i$ to the root $\treeroot(T')$,
then
\begin{align*}
\leadup_i &= \hat{f}_{v^i} \circ g_{v_{b-1}} \circ g_{v_{b-2}} \circ \ldots \circ g_{v_1},\\
\leaddown_i &= \hat{f}_{u^i} \circ g_{u_{b-1}} \circ g_{u_{b-2}} \circ \ldots \circ g_{u_1}.\\
\end{align*}
\end{lemma}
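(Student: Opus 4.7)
The plan is to build an auxiliary rooted tree on the blocks that encodes the order in which red blocks merge as $x_1$ slides leftward through $D_{p_1}$, lift this tree to a tree of segment partitions of $D_{p_1}$ in the sense of \cref{ss:tree}, and then obtain $\mathcal{G}$ and $\widehat{\mathcal{F}}$ via a direct application of \cref{lem:make-seg-rep}.

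First, I would construct an auxiliary tree $T^{\mathrm{aux}}$ on the blocks $B_1^1,\dots,B_1^r$. Starting from the largest admissible $x_1 \in D_{p_1}$ and decreasing, the companion $x_2^\leftarrow(x_1)$ nondecreases (\cref{obs:altmon2}), so blue blocks shrink; each time a blue block fully disappears, the two adjacent red blocks merge, while to preserve the alternation $(12)^r$ a new red block appears by splitting some blue block. The guessed permutations $\pi_1$ and $\pi_2$ from Branching Step~E, together with the filtering of \cref{sec:consist-DE}, determine this sequence of events entirely: the red block absorbed first is the one with the leftmost leader, namely $B_1^{\pi_1(r)}$, and each subsequent merge target is forced by the relative $x$-order of the leaders. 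This gives each non-root block a well-defined \emph{parent} block into which it gets absorbed, and the root of $T^{\mathrm{aux}}$ is $B_1^{\pi_1(1)}$, the block with the rightmost leader, which is never absorbed.

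Second, I would lift $T^{\mathrm{aux}}$ to a tree of segment partitions. For the node of $T^{\mathrm{aux}}$ corresponding to block $B_1^i$, define the segment partition $\mathcal{P}_w$ of $D_{p_1}$ whose segments are the maximal intervals of $x_1$ throughout which $B_1^i$ (and all its ancestors in $T^{\mathrm{aux}}$) have not yet undergone a merge into their respective parents; the root's partition is the most coarse one (consistent with the fact that its block never merges). This is monotone along $T^{\mathrm{aux}}$: $\mathcal{P}_w$ refines $\mathcal{P}_{\parent(w)}$, because every merge event refining an ancestor's partition also refines $\mathcal{P}_w$. To obtain $T'$, I would then attach to the node of $T^{\mathrm{aux}}$ associated with $B_1^i$ two fresh leaves $v^i$ and $u^i$ with the most refined partition (singletons); this yields $|V(T')| = \Oh(a)$. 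The type $\flf(w) \in \{\typeinc,\typedec\}$ is chosen according to whether the block corresponding to $w$ sits above or below its parent at the moment of each merge---which is consistent across merges of $w$ into its parent because the leader orders $\pi_1,\pi_2$ are fixed---and the types for $v^i, u^i$ above their parent are set to record the correct monotonicity direction of $\leadup_i$ and $\leaddown_i$.

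Third, with the tree of segment partitions in place, I would set the leaf functions to be $f_{v^i} := \leadup_i$ and $f_{u^i} := \leaddown_i$ and verify that they form a family of leaf functions. The monotonicity property to establish reads: within a segment $Q$ of $\mathcal{P}_{\parent(w)}$, the values of $\leadup_i$ (or $\leaddown_i$) taken as $x_1$ ranges across the subsegments of $\mathcal{P}_w$ inside $Q$ are strictly monotone in the direction dictated by $\flf(w)$. Between consecutive merges involving $B_1^i$ the block only grows by absorbing newly arrived red points, so $\leadup_i$ is nondecreasing and $\leaddown_i$ is nonincreasing; at each merge the block jumps in the direction of its relative position to its parent, yielding the opposite monotonicity. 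An induction on depth in $T^{\mathrm{aux}}$, carrying along the ``nested'' pattern of smooth-growth-then-jump described in the overview, gives the required strict monotonicities on each segment. Once this verification is complete, \cref{lem:make-seg-rep} applied to $T'$, the family of partitions, and the family $(\leadup_i,\leaddown_i)_i$ outputs the desired $\mathcal{G}$ and $\widehat{\mathcal{F}}$, and the resulting decompositions on the root-leaf paths in $T'$ coincide with the compositions appearing in the statement.

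The main obstacle is the third step: rigorously proving that the merge dynamics is indeed determined by $\pi_1,\pi_2$ and that $\leadup_i, \leaddown_i$ are strictly monotone of the correct type on each $\mathcal{P}_w$-segment. This requires a careful case analysis tracking how individual red points move between blocks as $x_1$ and $x_2^\leftarrow(x_1)$ shift, and exploiting that after the filtering of \cref{sec:consist-DE} every remaining $x_1 \in D_{p_1}$ produces the same leader orderings, so the combinatorial skeleton of merges stays invariant across all feasible positions.
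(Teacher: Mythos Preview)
Your proposal is correct and follows essentially the same approach as the paper: build an auxiliary rooted tree on the red blocks (the paper defines it combinatorially from $\pi_1$ and $\pi_2$ rather than dynamically via merges, but the resulting tree is the same), lift it to a tree of segment partitions of $D_{p_1}$ using the \emph{epochs} of each block (the maximal intervals of $x_1$ on which the block's leader is constant), attach leaves $v^i,u^i$ with the most refined partition and types $\typedec,\typeinc$ respectively, set $f_{v^i}=\leadup_i$, $f_{u^i}=\leaddown_i$, and apply \cref{lem:make-seg-rep}. The hard step you flag is exactly where the paper invests its effort, proving a sequence of structural claims that (i) within one epoch a block only grows as $x_1$ decreases, (ii) epochs of an ancestor are coarser than epochs of a descendant, and (iii) across the epochs of a child contained in one parent-epoch, the child's $y$-range moves strictly monotonically in the direction determined by whether the child sits above or below its parent---precisely the leaf-function property required by \cref{lem:make-seg-rep}.
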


We now show how Lemma~\ref{lem:leaders} implies Lemma~\ref{lem:imp-depth}.
First, compute the tree $T'$, segment-reversion family $\mathcal{G}$, and family of nondecreasing functions~$\widehat{\mathcal{F}}$
via Lemma~\ref{lem:leaders}.
Let $i \in [r]$ arbitrary.
Note that the first alternating lines constraint is equivalent to:
\begin{equation}\label{eq:lead0:1}
\leadup_i(\Xoptsol(p_1)) < \Yoptsol(\ell^{2i}).
\end{equation}
Let 
$v^i = v_1, v_2, \ldots, v_b = \treeroot(T')$ be the path from $v^i$ to the root $\treeroot(T')$.
By Lemma~\ref{lem:leaders}, \eqref{eq:lead0:1}~is equivalent to
\begin{equation}\label{eq:lead1:1}
\hat{f}_{v^i} \circ g_{v_{b-1}} \circ g_{v_{b-2}} \circ \ldots \circ g_{v_1}(\Xoptsol(p_1)) < \Yoptsol(\ell^{2i})\text{.}
\end{equation}
Hence, if $g$ is the segment reversion reversing $D_{\ell^{2i}}$, then~\eqref{eq:lead1:1} is equivalent to
\begin{equation}\label{eq:lead2:1}
(\hat{f}_{v^i} \circ g_{v_{b-1}} \circ g_{v_{b-2}} \circ \ldots \circ g_{v_1}(\Xoptsol(p_1)), g(\Yoptsol(\ell^{2i}))) \in R
\end{equation}
for some downwards-closed relation~$R$.
For example, we may take $R = \{(x, y) \in \mathbb{N}^2 \mid y \leq y_{\textsf{max}} - x\}$, where $y_{\textsf{max}}$ is the largest $y$-coordinate of any horizontal line.
By Lemma~\ref{lem:dc-comp}, we can compute a downwards-closed relation $R_1^i$ such that~\eqref{eq:lead2:1} is equivalent to
\begin{equation}\label{eq:lead3:1}
(g_{v_{b-1}} \circ g_{v_{b-2}} \circ \ldots \circ g_{v_1}(\Xoptsol(p_1)), g(\Yoptsol(\ell^{2i}))) \in R_1^i\text{.}
\end{equation}
Similarly, if $1 < i \leq r$, the second alternating lines constraint is equivalent to
\begin{equation}\label{eq:lead0:2}
\leaddown_i(\Xoptsol(p_1)) > \Yoptsol(\ell^{2i-1})\text{.}
\end{equation}
Let 
$u^i = u_1, u_2, \ldots, u_b = \treeroot(T')$ be the path from $u^i$ to the root $\treeroot(T')$.
By Lemma~\ref{lem:leaders},~\eqref{eq:lead0:2} is equivalent to
\begin{equation}\label{eq:lead1:2}
\hat{f}_{u^i} \circ g_{u_{b-1}} \circ g_{u_{b-2}} \circ \ldots \circ g_{u_1}(\Xoptsol(p_1)) > \Yoptsol(\ell^{2i-1})\text{.}
\end{equation}
Hence, if $g$ is the function reversing $\{\Yapxlines(\ell_1),\Yapxlines(\ell_1)+1,\ldots, \Yapxlines(\ell_2)\}$,
 then~\eqref{eq:lead1:2} is equivalent to
\begin{equation}\label{eq:lead2:2}
(g \circ \hat{f}_{u^i} \circ g_{u_{b-1}} \circ g_{u_{b-2}} \circ \ldots \circ g_{u_1}(\Xoptsol(p_1)), \Yoptsol(\ell^{2i-1})) \in R
\end{equation}
for some downwards-closed relation~$R$. 
Define $g'$ to be the segment reversion reversing the whole $D_{p_1}$
and $f' = \hat{f}_{u^i} \circ g'$. Then, since $g'$ is an involution,
 \eqref{eq:lead2:2} is equivalent to:
\begin{equation}\label{eq:lead2b:2}
(g \circ f' \circ g' \circ g_{u_{b-1}} \circ g_{u_{b-2}} \circ \ldots \circ g_{u_1}(\Xoptsol(p_1)), \Yoptsol(\ell^{2i-1})) \in R
\end{equation}
Note that $g \circ f' = g \circ \hat{f}_{u^i} \circ g'$ is a nondecreasing function.
By Lemma~\ref{lem:dc-comp} applied to $g \circ f'$ and $R$, one can compute 
a downwards-closed relation $R^i_2$ such that~\eqref{eq:lead2:2}
is equivalent to:
\begin{equation}\label{eq:lead3:2}
(g' \circ g_{u_{b-1}} \circ g_{u_{b-2}} \circ \ldots \circ g_{u_1}(\Xoptsol(p_1)), \Yoptsol(\ell^{2i-1})) \in R_2^i.
\end{equation}
Using~\eqref{eq:lead3:1} and~\eqref{eq:lead3:2}, the following satisfy the conditions of Lemma~\ref{lem:imp-depth}:
\begin{itemize}
\item the tree $T_1$ derived from $T'$ by adding an extra child $u^i_0$ to every node $u^i$, 
\item the family $\mathcal{G}$ derived from $\mathcal{G}_1$ by adding $g_{u^i_0}$, defined as the segment reversion reversing the whole $D_{p_1}$, and
\item the relations $R_j^i$.
\end{itemize}
Thus, it remains to prove Lemma~\ref{lem:leaders}.

\begin{proof}[Proof of Lemma~\ref{lem:leaders}.]
For two blocks $B_1^d$ and $B_1^e$, we say that a block $B_2^j$ is \emph{between $B_1^d$ and $B_1^e$}
if it is between $B_1^d$ and $B_1^e$ in the bottom-to-top order, that is,
if $d < e$ and $d \leq j < e$ or $e < d$ and $e \leq j < d$.

Recall that $r$ is the number of blocks of $W_1$ (and of $W_2$) and recall
the definition of the permutation~$\pi_1$ that permutes the sequence $B_1^1, B_1^2, \ldots B_1^r$ of blocks so that their leaders are increasing in the $\leqx$ order.
We define an auxiliary rooted tree $T$ with $V(T) = [r]$ as follows. The root of $T$ is $\pi_1(1)$.
For every $i \in [r] \setminus \{\pi_1(1)\}$, we define the parent of $i$ as follows. 
Let $i_1$ be the maximum index $i_1 < i$ with $\pi_1^{-1}(i_1) < \pi_1^{-1}(i)$ (i.e., the leader of $B_1^{i_1}$ being to the right of the leader of $B_1^i$). Similarly 
let $i_2$ be the minimum index $i_2 > i$ with $\pi_1^{-1}(i_2) < \pi_1^{-1}(i)$.
These indices are undefined if the maximization or minimization is chosen over an empty set; however note that, due to the presence of $B_1^{\pi_1(1)}$, at least one of these indices
is defined. If exactly one is defined, we take this index to be the parent of $i$ in $T$.
Otherwise, we look at the leftmost of all leaders of all blocks $B_2^j$ between $B_1^{i_1}$ and $B_1^i$ (i.e., $i_1 \leq j < i$)
and at the leftmost of all leaders of all blocks $B_2^j$ between $B_1^i$ and $B_1^{i_2}$ (i.e., $i \leq j < i_2$)
and choose as parent of~$i$ the index $i_\alpha$, $\alpha=1,2$, for which the aforementioned leader is more to the right (i.e., its block is later in the permutation $\pi_2$).
Note that $T$ can be constructed from the information guessed in Branching Step~E.
See Figure~\ref{fig:sliding2} for an example and Figure~\ref{fig:sliding2advanced} for a more involved example.
In the following, the parent of a node $i$ in $T$ is denoted $\parent(i)$.
Furthermore, for each $i \in [r]$, we let $T_i$ be the subtree of $T$ rooted at $i$, 
let $\widehat{B}_i$ be the union of all blocks $B_1^j$ for $j \in V(T_i)$.

We will use tree~$T$ below to define a tree of segment partitions to which we can apply the tools from \cref{ss:tree}, yielding the required family of segment reversions.
The segment partitions associated with the vertices of~$T$ will be defined based on the nested behavior of blocks when moving $p_1$ in increasing $\leqx$-order.
Before we can define the partitions associated with the vertices of $T$, we need to establish a few properties of blocks.

First, no two blocks of $W_1$ share leaders.
\begin{claim}\label{cl1}
Let $e \in W_1$ and assume $e$ is the leader of some $B_1^j(x_1)$. 
Then, $e$ is not a leader of any block $B_1^{j'}(x_1')$ with $j' \neq j$.
\end{claim}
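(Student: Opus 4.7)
The plan is to argue by contradiction using the function $\lead$ guessed in Branching Step~D and the filtering on $D_{p_1}$ carried out in Section~\ref{sec:consist-DE}. Suppose, for contradiction, that some $e \in W_1$ is a leader of both $B_1^j(x_1)$ and $B_1^{j'}(x_1')$ for indices $j \neq j'$ and values $x_1, x_1' \in D_{p_1}$ (which we may assume to lie in the filtered domain, since the blocks $B_1^{i}(\cdot)$ are only defined on $D_{p_1}$).

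The first key step is to invoke the filtering guarantee: after filtering, every $x \in D_{p_1}$ satisfies $\lead(\leader_1^i(x)) = \cell_1^i$ for every $i \in [r]$, where each $\cell_1^i$ is the cell inferred from Branching Steps~D and~E. Specializing this identity to $(x,i)=(x_1,j)$ and to $(x,i)=(x_1',j')$ yields $\lead(e) = \cell_1^j$ and $\lead(e) = \cell_1^{j'}$, respectively; since $\lead$ is a function, we conclude $\cell_1^j = \cell_1^{j'}$. The second key step is to observe that this equality is impossible: by construction $\cell_1^i$ is a cell $\cell(p, \ell)$ whose bottom side $\ell$ satisfies $\ell^{2i-1} \leq \ell < \ell^{2i}$, and the elements $\ell^1 < \ell^2 < \cdots < \ell^{2r}$ of $\stlines_\situ$ are strictly increasing, so the half-open ranges $[\ell^{2i-1}, \ell^{2i})$ are pairwise disjoint across $i \in [r]$. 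Thus $\cell_1^j \neq \cell_1^{j'}$ whenever $j \neq j'$, contradicting the equality just derived.

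I do not anticipate any substantive obstacle here, as the argument is a direct bookkeeping consequence of the filtering step; the only care required is to ensure that Branching Steps~D and~E were carried out before the filtering so that the cells $\cell_1^i$ are well-defined and the filtering condition $\lead(\leader_1^i(x)) = \cell_1^i$ is meaningful, and to note that the case $x_1 = x_1'$ is handled automatically since at a fixed $x_1$ the blocks $B_1^i(x_1)$ partition the relevant $W_1$-points and so $e$ can belong to at most one of them.
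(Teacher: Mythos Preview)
Your proposal is correct and follows essentially the same approach as the paper's proof: both derive $\lead(e) = \cell_1^j$ from the filtering step of Section~\ref{sec:consist-DE}, and conclude via the distinctness of the cells $\cell_1^i$. The paper's proof is terser and leaves the distinctness of the $\cell_1^i$ implicit, whereas you spell out why the ranges $[\ell^{2i-1},\ell^{2i})$ are disjoint; this is a harmless elaboration rather than a different route.
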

\begin{proof}
The claim follows directly from the filtering for correct orders of extremal points (\cref{sec:consist-DE}): If $e$ is the leader of $B_1^j(x_1)$, then $\lead(e) = \cell_1^j$. (Recall that $\cell_1^j$ is the cell that is expected
to contain the leader of $B_1^j(x_1)$ and is inferred from the information guessed in Branching Step~E.)
\cqed\end{proof}
Next, increasing the position of $p_1$ can only shrink blocks of $W_1$:
\begin{claim}\label{cl2}
Let $x_1, x_1'$ be two elements of $D_{p_1}$ with $x_1 < x'_1$.
Then for every block $B_1^{j'}(x_1')$ there exists a block $B_1^j(x_1)$ such that
$B_1^{j'}(x_1') \subseteq B_1^j(x_1)$. 
\end{claim}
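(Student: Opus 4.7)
The plan is to show that each $w \in B_1^{j'}(x_1')$ lies in $B_1^{j'}(x_1)$; we may identify the block index $j = j'$ across different $x_1$-values because the filtering of \cref{sec:consist-DE} forces $\leader_1^j(\cdot) \in \cell_1^j$ independently of the choice of $x_1$, so block indices carry a consistent meaning. Combined with \cref{cl1}, this lets us reduce the claim to: for each $w \in B_1^{j'}(x_1')$, we have $w \in B_1^{j'}(x_1)$.

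First, since $x^w > x_1' > x_1$ and $w \in \points_\situ(x_1', x_2^\leftarrow(x_1'))$, pick $x_2 := \max(x^w + 1, x_2^\leftarrow(x_1))$. I claim this lies in $[x_2^\leftarrow(x_1), x_2^\rightarrow(x_1)]$, using \cref{obs:altmon2} together with $x^w + 1 \leq x_2^\leftarrow(x_1') \leq x_2^\rightarrow(x_1')$ and the fact that $x_1$ survives the alternation filtering (so it has some valid companion in $D_{p_2}$). Hence $w \in \points_\situ(x_1, x_2) \cap W_1$, and by \cref{obs:altblocks} the block of $w$ at $x_1$ is determined using this particular $x_2$.

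Second, I argue that no $W_2$ point of $\points_\situ(x_1, x_2)$ has $y$-coordinate strictly between those of $w$ and $\leader_1^{j'}(x_1)$, which would place $w$ into $B_1^{j'}(x_1)$. Any hypothetical $w'' \in W_2$ with this property satisfies $x^{w''} > x_1$, and two cases arise. If $x^{w''} > x_1'$, then $w''$ is also present in a suitable $x_1'$-picture; since block index $j'$ is anchored to $\cell_1^{j'}$ at both $x_1$ and $x_1'$, the point $w''$ would sit in the $y$-extent of $B_1^{j'}(x_1')$, contradicting that $w$ belongs to this block. If $x_1 < x^{w''} \leq x_1'$, then $w''$ appears only in the $x_1$-picture; the symmetric half of \cref{obs:altblocks}, together with the fixed leader cells $\cell_2^1, \ldots, \cell_2^r$, pins down the $y$-extent of each $W_2$-block in the $x_1$-picture tightly enough to exclude $w''$ from the $y$-interval between $w$ and $\leader_1^{j'}(x_1)$.

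The main obstacle is the last case: a $W_2$ point with $x$-coordinate in the narrow interval $(x_1, x_1']$ is present only in the $x_1$-picture, so excluding it requires a precise comparison of the $W_2$-block partitions in the $x_1$- and $x_1'$-pictures. I expect to carry this out by a case analysis on which $W_2$-block index $j''$ would contain $w''$ in the $x_1$-picture; in each case, the leader of that hypothetical block ends up in a cell distinct from the fixed $\cell_2^{j''}$, contradicting the filtering from \cref{sec:consist-DE} and closing the proof.
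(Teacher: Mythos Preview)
Your reduction in the first paragraph is incorrect: the claim only asks for \emph{some} index $j$ with $B_1^{j'}(x_1') \subseteq B_1^j(x_1)$, not for $j=j'$. The stronger statement you set out to prove is simply false in general. Indeed, the later Claims~\ref{cl5} and~\ref{cl6} (and Figure~\ref{fig:sliding2}) are devoted precisely to analyzing when $B_1^{j}(x_1') \not\subseteq B_1^{j}(x_1)$, i.e., when the $j$th block ``disappears and reappears elsewhere'' as $x_1$ slides. The filtering of Section~\ref{sec:consist-DE} pins the leader of $B_1^{j}(\cdot)$ to the fixed cell $\cell_1^{j}$, but that cell can easily lie between two different pairs of horizontal $\opt$-lines for different values of $x_1$; it in no way forces $B_1^{j'}(x_1')$ and $B_1^{j'}(x_1)$ to intersect. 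Consequently your second paragraph, which tries to show that $w$ lies in the same block as $\leader_1^{j'}(x_1)$, is arguing about the wrong object.

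The paper's proof is a two-line argument that avoids all of this. Since $\cont(\apxcell(p_1',\ell_1))=1$, every $W_2$ point in the situation has $x$-coordinate larger than $\Xapxsol(p_1'')$, hence larger than both $x_1$ and $x_1'$. Thus membership of a $W_2$ point in $\points_\situ(\cdot,\cdot)$ is governed solely by the right boundary. By Observation~\ref{obs:altmon2} we have $x_2^\leftarrow(x_1)\leq x_2^\leftarrow(x_1')$, so $W_2\cap\points_\situ(x_1,x_2^\leftarrow(x_1))\subseteq W_2\cap\points_\situ(x_1',x_2^\leftarrow(x_1'))$. In words: going from the $x_1'$-picture to the $x_1$-picture can only \emph{remove} $W_2$ points, so $W_1$-blocks can only merge, never split, and every $B_1^{j'}(x_1')$ sits inside some $B_1^j(x_1)$. (Your worry about $W_2$ points with $x$-coordinate in $(x_1,x_1']$ is therefore vacuous: no such points exist.)
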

\begin{proof}
By Observation~\ref{obs:altmon2}, $x_2^\leftarrow(x_1) \leq x_2^\leftarrow(x_1')$, that is,
$$W_2 \cap \points_\situ(x_1, x_2^\leftarrow(x_1)) \subseteq W_2 \cap \points_\situ(x_1', x_2^\leftarrow(x_1')).$$
This immediately implies that every block $B_1^{j'}(x_1')$ is contained in some block
$B_1^j(x_1)$, as desired.
\cqed\end{proof}
Next, each leader has some well-defined interval of positions of $p_1$ during which it is the leader of its block.
We first state the boundaries of this interval and then prove that they are well-defined.

Let $e \in W_1$ be the leader of some block, that is, there exist $j \in [r]$ and $x_1 \in D_{p_1}$ such that $e$ is the leader of~$B_1^j(x_1)$.
Define $\leadact_1^\rightarrow(e) \in D_{p_1}$ to be the maximum element of $D_{p_1}$ that is smaller than the $x$-coordinate of~$e$.
Define $\leadact_1^\leftarrow(e)$ to be the element in $D_{p_1}$ that satisfies that $e$ is a leader of $B_1^j(x_1)$ if and only if $\leadact_1^\leftarrow(e) \leq x_1 \leq \leadact_1^\rightarrow(e)$.

Note that $\leadact_1^\rightarrow(e)$ is well-defined since, for $e$ to be leader of $B_1^j(x_1)$, value $x_1 \in D_{p_1}$ needs to be smaller than the $x$-coordinate of $e$, showing that $\leadact_1^\rightarrow(e)$ exists.
\begin{claim}\label{cl3}
  $\leadact_1^\leftarrow(e)$ is well-defined.
\end{claim}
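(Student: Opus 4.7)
Let $A := \{x_1 \in D_{p_1} \mid e \text{ is a leader of some } B_1^j(x_1)\}$. By hypothesis $A \neq \emptyset$. The plan is to show that $A$ is a segment of $D_{p_1}$ whose maximum is exactly $\leadact_1^\rightarrow(e)$; then setting $\leadact_1^\leftarrow(e) := \min A$ witnesses the claim.

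First I would observe that the block index is canonical: by Claim~\ref{cl1}, there is a unique $j^\star \in [r]$ such that whenever $x_1 \in A$, the point $e$ is the leader of $B_1^{j^\star}(x_1)$. Next I would show that $A$ is upward-closed inside $D_{p_1} \cap \{x_1 \mid x_1 < x\text{-coordinate of }e\}$. Fix $x_1 \in A$ and $x_1' \in D_{p_1}$ with $x_1 \leq x_1' < x$-coordinate of $e$. Since $e \in B_1^{j^\star}(x_1)$, we have $x$-coordinate of $e$ strictly less than $x_2^\leftarrow(x_1)$; Observation~\ref{obs:altmon2} (nondecreasingness of $x_2^\leftarrow$) then yields $x$-coordinate of $e$ strictly less than $x_2^\leftarrow(x_1')$, so $e \in \points_\situ(x_1', x_2^\leftarrow(x_1')) \cap W_1$ and $e$ lies in some block $B_1^{j'}(x_1')$. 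By Claim~\ref{cl2} applied to $(x_1, x_1')$ we get $B_1^{j'}(x_1') \subseteq B_1^{j^\star}(x_1)$, and since $e$ is the rightmost element of $B_1^{j^\star}(x_1)$, it is a fortiori the rightmost element of $B_1^{j'}(x_1')$, i.e.\ its leader; hence $x_1' \in A$.

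Combining this with the obvious upper bound $A \subseteq \{x_1 \in D_{p_1} \mid x_1 < x\text{-coordinate of }e\}$ and the fact that $\leadact_1^\rightarrow(e)$ is by definition the largest element of the latter set, the previous paragraph yields $\leadact_1^\rightarrow(e) \in A$ and hence $\max A = \leadact_1^\rightarrow(e)$. For the interval property, take $x_1 \leq x_1' \leq x_1''$ with $x_1, x_1'' \in A$; since $x_1'' \in A$ forces $x_1' \leq x_1'' \leq \leadact_1^\rightarrow(e) < x$-coordinate of $e$, the upward-closure argument applied to $x_1$ and $x_1'$ gives $x_1' \in A$. Thus $A$ is a segment of $D_{p_1}$ with maximum $\leadact_1^\rightarrow(e)$, and defining $\leadact_1^\leftarrow(e) := \min A$ establishes the claim.

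The only delicate point is the containment used in the upward-closure step: one must be careful to certify that $e$ still lies in the relevant region $\points_\situ(x_1', x_2^\leftarrow(x_1'))$ before invoking Claim~\ref{cl2}, which is why the monotonicity of $x_2^\leftarrow$ from Observation~\ref{obs:altmon2} is needed. Everything else is bookkeeping once Claims~\ref{cl1} and~\ref{cl2} are in hand.
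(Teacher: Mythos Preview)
Your proof is correct and follows essentially the same approach as the paper: both establish the upward-closure property (if $e$ is the leader of $B_1^{j}(x_1)$ then also of $B_1^{j}(x_1')$ for any $x_1 \le x_1' \le \leadact_1^\rightarrow(e)$) by invoking Claim~\ref{cl2} to get $B_1^{j'}(x_1') \subseteq B_1^{j}(x_1)$, concluding that $e$ is rightmost in the smaller block, and then using Claim~\ref{cl1} to pin down $j'=j$. The only difference is that you explicitly justify via Observation~\ref{obs:altmon2} that $e$ still lies in $\points_\situ(x_1', x_2^\leftarrow(x_1'))$, whereas the paper leaves this implicit (it follows already from the remark preceding Observation~\ref{obs:altcontents} that the $W_1$-content of $\points_\situ(x_1,x_2)$ is independent of $x_2$, since $\cont(\apxcell(p_2',\ell_1))=2$); either justification is fine.
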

\begin{proof}
It suffices to show that, if
$e$ is the leader of $B_1^j(x_1)$ for some $x_1 \in D_{p_1}$, then 
it is also the leader of $B_1^j(x_1')$ for every $x_1' \in D_{p_1}$ with $x_1 \leq x_1' \leq \leadact_1^\rightarrow(e)$
(note that $x_1 \leq \leadact_1^\rightarrow(e)$ by the definition of $\leadact_1^\rightarrow(e)$). 

Let $B_1^{j'}(x_1')$ be the block containing $e$ and let $e'$ be the leader of this block. 
By Claim~\ref{cl2} and the fact that $B_1^{j'}(x_1')$ and $B_1^{j}(x_1)$ share $e$ we have $B_1^{j'}(x_1') \subseteq B_1^j(x_1)$. 
Thus, the fact that $e'$ is the leader of $B_1^{j'}(x_1')$ implies $e \leqx e'$
while the fact that $e$ is the leader of $B_1^j(x_1)$ implies $e' \leqx e$. 
Hence, $e = e'$. By Claim~\ref{cl1}, $j = j'$ and we are done.
\cqed\end{proof}

Intuitively, there are two things that can happen to a block with some index~$j$ when moving $p_1$ to the right: It can shrink, or it can disappear and reappear elsewhere.
Now, if increasing the position of $p_1$ shrinks a block but it does not disappear, then the leader stays the same:
\begin{claim}\label{cl4}
If for some $x_1,x_1' \in D_{p_1}$ with $x_1 < x_1'$ and an index $j \in [r]$
we have $B_1^j(x_1') \subseteq B_1^j(x_1)$, then $\leader_1^j(x_1) = \leader_1^j(x_1')$.
\end{claim}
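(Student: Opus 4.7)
The plan is to leverage the structure already exposed in Claim~3 applied to the current leader at position $x_1$. Set $e^\star := \leader_1^j(x_1)$ and $e := \leader_1^j(x_1')$; the goal is to prove $e^\star = e$.

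First I would use the containment hypothesis $B_1^j(x_1') \subseteq B_1^j(x_1)$ together with $e \in B_1^j(x_1')$ to conclude $e \in B_1^j(x_1)$. Since $e^\star$ is, by definition, the $\leqx$-rightmost element of $B_1^j(x_1)$, this gives $e \leqx e^\star$. Moreover $e$, being a leader of the block at position $x_1'$, lies in $\points_\situ(x_1', x_2^\leftarrow(x_1'))$, so its $x$-coordinate must be strictly greater than $x_1'$; combined with $e \leqx e^\star$, the $x$-coordinate of $e^\star$ is strictly greater than $x_1'$ as well. Recalling that $x_1' \in D_{p_1}$ and that $\leadact_1^\rightarrow(e^\star)$ is the largest element of $D_{p_1}$ strictly smaller than the $x$-coordinate of $e^\star$, this yields $\leadact_1^\rightarrow(e^\star) \geq x_1'$.

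Next I would invoke the argument already used inside the proof of Claim~3, applied to $e^\star$: because $e^\star$ is a leader of $B_1^j$ at position $x_1$, it remains a leader of $B_1^j$ at every position $x_1'' \in D_{p_1}$ with $x_1 \leq x_1'' \leq \leadact_1^\rightarrow(e^\star)$. Specializing to $x_1'' = x_1'$, which is admissible by the previous paragraph, gives $e^\star = \leader_1^j(x_1') = e$, which is exactly the desired equality.

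I do not foresee any substantial obstacle. The only point that requires care is verifying that $x_1'$ falls inside the range of positions for which Claim~3 guarantees that $e^\star$ is still a leader of $B_1^j$; this reduces cleanly to the fact that $B_1^j(x_1')$ is nonempty (it contains $e$), which forces the $x$-coordinate of $e$, and therefore of $e^\star$, to strictly exceed $x_1'$.
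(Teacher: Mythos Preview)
Your proposal is correct and follows essentially the same approach as the paper: both argue that $\leader_1^j(x_1)$ has $x$-coordinate exceeding $x_1'$ (you via the intermediate point $e = \leader_1^j(x_1')$, the paper in one line from the containment), then use the definition of $\leadact_1^\rightarrow$ together with Claim~\ref{cl3} to conclude that $\leader_1^j(x_1)$ remains the leader at position $x_1'$. Your write-up is slightly more explicit in justifying why the $x$-coordinate of $e^\star$ exceeds $x_1'$, but the logic is identical.
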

\begin{proof}
Since $B_1^j(x_1') \subseteq B_1^j(x_1)$, the leader $\leader_1^j(x_1)$ is to the right
of the coordinate $x_1'$. Thus, $x_1' \leq \leadact_1^\rightarrow(\leader_1^j(x_1))$ by the definition of $\leadact_1^\rightarrow$.
Thus, $\leader_1^j(x_1)$ is also a leader of $B_1^j(x_1')$.
\cqed\end{proof}

\begin{figure}[tb]
\begin{center}
\includegraphics{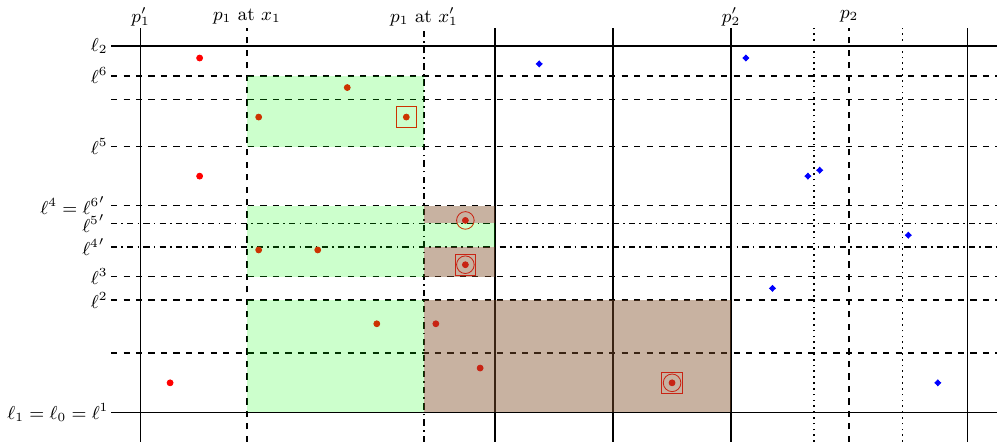}
\caption{
  Situation of Claim~\ref{cl4} and \ref{cl5}, where $p_1$ is either at position $x_1\in D_{p_1}$ or at position $x_1'\in D_{p_1}$ for $x<x'$.
  The lines of $\stlines_\situ$ for $x_1$ are denoted with $\ell^i$, $1 \leq i \leq 6$.
  The lines of $\stlines_\situ$ for $x_1'$ are denoted with $\ell'^{i}$, $1 \leq i \leq 6$.
  Blocks given by positioning $p_1$ at $x_1'$ ($B_1^{1}(x_1'),B_1^{2}(x_1'),B_1^{3}(x_1')$) are depicted by purple color with circled leaders and blocks given by positioning $p_1$ at $x_1$ ($B_1^{1}(x_1),B_1^{2}(x_1),B_1^{3}(x_1)$) are depicted by the union of green and purple color with squared leaders.
}\label{fig:sliding}
\end{center}
\end{figure}

Next we observe that, when moving $p_1$ to the right from one position $x_1$ to another position $x_1'$, then, when ordering the blocks of $W_1$ according to~$\pi_1$, that is, increasing $x$-coordinates of leaders, then there is a unique block index $i_{x_1 \leftarrow x_1'}$ such that blocks before $i_{x_1 \leftarrow x_1'}$ disappear and reappear elsewhere, and blocks after $i_{x_1 \leftarrow x_1'}$ may shrink but do not disappear.

  Let $x_1, x_1'$ be two elements of $D_{p_1}$ with $x_1 < x_1'$.
  Define $i_{x_1 \leftarrow x_1'}$ as the unique index $i_{x_1 \leftarrow x_1'} \in [r]$ that satisfies that, for every $j \in [r]$, we have $B_1^j(x_1') \subseteq B_1^j(x_1)$ if and only if $\pi_1^{-1}(j) \leq i_{x_1 \leftarrow x_1'}$.

\begin{claim}\label{cl5}
  Index $i_{x_1 \leftarrow x_1'}$ is well-defined.
\end{claim}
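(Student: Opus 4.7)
My plan is to establish the following biconditional: for each $j \in [r]$, we have $B_1^j(x_1') \subseteq B_1^j(x_1)$ if and only if the leader $\leader_1^j(x_1)$ has $x$-coordinate strictly greater than $x_1'$. Once this is proved, the definition of $\pi_1$ makes $\leader_1^{\pi_1(1)}(x_1), \leader_1^{\pi_1(2)}(x_1), \ldots, \leader_1^{\pi_1(r)}(x_1)$ a right-to-left sequence of leaders, so the set of indices satisfying the biconditional is precisely $\{\pi_1(1), \ldots, \pi_1(i)\}$ for a unique $i$. Setting $i_{x_1 \leftarrow x_1'} := i$ then gives the required index (with the understanding that $i = 0$ corresponds to the case in which no block persists).

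The forward direction of the biconditional is immediate from Claim~\ref{cl4}: if $B_1^j(x_1') \subseteq B_1^j(x_1)$ then $\leader_1^j(x_1') = \leader_1^j(x_1)$, and since $\leader_1^j(x_1') \in \points_\situ(x_1', x_2^\leftarrow(x_1'))$ this forces the $x$-coordinate of $\leader_1^j(x_1)$ to exceed $x_1'$.

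The converse is the main step. Suppose $\leader_1^j(x_1)$ has $x$-coordinate greater than $x_1'$. By Observation~\ref{obs:altmon2} we have $x_2^\leftarrow(x_1) \leq x_2^\leftarrow(x_1')$, so, since the $y$-coordinate of $\leader_1^j(x_1)$ lies between $\ell_1$ and $\ell_2$, the point $\leader_1^j(x_1)$ belongs to $\points_\situ(x_1', x_2^\leftarrow(x_1')) \cap W_1$, and hence to some block $B_1^{j''}(x_1')$. Applying Claim~\ref{cl2} there exists $j^* \in [r]$ with $B_1^{j''}(x_1') \subseteq B_1^{j^*}(x_1)$; since $\leader_1^j(x_1) \in B_1^j(x_1) \cap B_1^{j^*}(x_1)$ and blocks at a fixed position are pairwise disjoint, $j^* = j$. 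Thus every point of $B_1^{j''}(x_1')$ lies weakly to the left of $\leader_1^j(x_1)$, which itself is in $B_1^{j''}(x_1')$, so $\leader_1^{j''}(x_1') = \leader_1^j(x_1)$. Claim~\ref{cl1} then forces $j'' = j$, and we conclude $B_1^j(x_1') \subseteq B_1^j(x_1)$, as desired.

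The main obstacle is the converse direction: a priori, when $p_1$ slides rightward, the leader of $B_1^j(x_1)$ could land in a block of a different index at $x_1'$, which would break the well-definedness. It is exactly the combination of Claim~\ref{cl1} (each $W_1$-point can play the role of a leader for at most one block index across all positions) and Claim~\ref{cl2} (blocks only shrink as $p_1$ moves right) that rules this out and pins the leader onto the same index $j$.
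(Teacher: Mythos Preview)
Your proof is correct and follows essentially the same approach as the paper's. Both arguments hinge on the same key fact---that $B_1^j(x_1') \subseteq B_1^j(x_1)$ if and only if the $x$-coordinate of $\leader_1^j(x_1)$ exceeds $x_1'$---and both derive it from Claims~\ref{cl1}, \ref{cl2}, and~\ref{cl4}; you state this biconditional explicitly and then read off the prefix structure from $\pi_1$, whereas the paper argues the contrapositive directly (if $B_1^j$ fails to persist then so does every $B_1^{j''}$ with $\pi_1^{-1}(j'') > \pi_1^{-1}(j)$), but the content is the same.
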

\begin{proof}
Let $j \in [r]$ be such that $B_1^j(x_1') \subseteq B_1^{j'}(x_1)$ for some $j' \neq j$
and let $j'' \in [r]$ be such that $\pi_1^{-1}(j) < \pi_1^{-1}(j'')$.
Note that $B_1^j(x_1') \cap B_1^j(x_1) = \emptyset$ and that it suffices to show that also $B_1^{j''}(x_1') \cap B_1^{j''}(x_1) = \emptyset$.

Since $B_1^j(x_1') \subseteq B_1^{j'}(x_1)$, by \cref{cl1}, $\leadact_1^\rightarrow(\leader_1^j(x_1)) < x_1'$.
By definition of $\leadact_1^\rightarrow$ and the discretization properties thus $\leader_1^j(x_1)$ is to the left of $x_1'$.
Since $\pi_1^{-1}(j) < \pi_1^{-1}(j'')$, we have $\leader_1^{j''}(x_1) \leqx \leader_1^j(x_1)$.
Thus $\leadact_1^\rightarrow(\leader_1^{j''}(x_1)) < x_1'$.
Hence, $B_1^{j''}(x_1') \cap B_1^{j''}(x_1) = \emptyset$ as desired.
\cqed\end{proof}

For every $j \in [r]$ and two
elements $x_1, x_1' \in D_{p_1}$ with $x_1 < x'_1$
we define $\alpha_{x_1 \leftarrow x_1'}(j) \in [r]$ as follows. 
Let $\alpha_{x_1 \leftarrow x_1'}(j)$ be the ancestor of $j$ in $T$ that is closest\footnote{That is, this ancestor has the shortest path to~$j$ in $T$. A node is an ancestor of itself, that is, it can happen that $\alpha_{x_1 \leftarrow x_1'}(j) = j$.} to $j$ in~$T$ such that
for at least one block $B_2^\iota$ between $B_1^{\alpha_{x_1 \leftarrow x_1'}(j)}$ and $B_1^{\parent(\alpha_{x_1 \leftarrow x_1'}(j))}$
  the leader of $B_2^\iota(x_2^\leftarrow(x_1'))$ is to the left
  of the $x$-coordinate $x_2^\leftarrow(x_1)$. 
  We put $\alpha_{x_1 \leftarrow x_1'}(j)$ to be the root $\pi_1(1)$ if such an ancestor does not exist.

The intuition behind the notion $\alpha_{x_1 \leftarrow x_1'}(j)$ is the following: If we slide $p_1$ from $x_1$ to the right to $x_1'$, then
the $j$-th block $B_1^j$ at $x'_1$ is a subset of $B_1^{\alpha_{x_1 \leftarrow x_1'}(j)}$ at $x_1$. Furthermore, for every descendant $j_\downarrow$
of $j$,\footnote{Node $j$ is its own descendant.}
$B_1^{j_\downarrow}$ is a subset of $B_1^{\alpha_{x_1 \leftarrow x_1'}(j)}$. 
We now prove this intuition in the next three claims. 
We start with the following intermediate step.
\begin{claim}\label{cl-pre-6}
Let $x_1, x_1'$ be two elements of $D_{p_1}$ with $x_1 < x'_1$ and let $j \in [r]$.
Then
\begin{equation}\label{eq:cl6}
B_1^j(x_1')  \cup B_1^{\alpha_{x_1 \leftarrow x_1'}(j)}(x_1') \subseteq B_1^{\alpha_{x_1 \leftarrow x_1'}(j)}(x_1).
\end{equation}
\end{claim}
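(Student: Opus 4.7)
Let $\alpha := \alpha_{x_1 \leftarrow x_1'}(j)$. I would prove the more general statement that for every node $v$ on the path from $j$ to $\alpha$ in the tree~$T$, we have $B_1^v(x_1') \subseteq B_1^\alpha(x_1)$; taking $v = j$ and $v = \alpha$ then yields Claim~\ref{cl-pre-6}. The argument proceeds by induction on the distance from $\alpha$ to $v$ along this path. For the base case $v = \alpha$, I would combine Claim~\ref{cl2} with the defining property of $\alpha$: since a $W_2$ block between $B_1^\alpha$ and $B_1^{\parent(\alpha)}$ survives at $(x_1, x_2^\leftarrow(x_1))$, the block at $x_1$ containing $B_1^\alpha(x_1')$ remains separated from the block containing $B_1^{\parent(\alpha)}(x_1')$ and keeps its bottom-to-top index equal to~$\alpha$.

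The key technical observation for the inductive step is that for each $v$ on this path with $v \neq \alpha$, the defining condition of $\alpha$ as the closest ancestor of $j$ with the survival property fails at $v$. Thus every $W_2$ block $B_2^\iota$ between $B_1^v$ and $B_1^{\parent(v)}$ in the $(x_1',x_2^\leftarrow(x_1'))$ configuration has its leader---the leftmost point---with $x$-coordinate at least $x_2^\leftarrow(x_1)$, and hence disappears entirely at $(x_1, x_2^\leftarrow(x_1))$. Since $W_2$ points are only lost when moving from $(x_1',x_2^\leftarrow(x_1'))$ to $(x_1, x_2^\leftarrow(x_1))$ (as argued in the proof of Claim~\ref{cl2}), no $W_2$ barrier remains between $B_1^v(x_1')$ and $B_1^{\parent(v)}(x_1')$ at $x_1$, so these lie in the same $W_1$ block. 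Combined with the inductive hypothesis $B_1^{\parent(v)}(x_1') \subseteq B_1^\alpha(x_1)$, we conclude $B_1^v(x_1') \subseteq B_1^\alpha(x_1)$.

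The hardest part is establishing that the merged block at $x_1$ has bottom-to-top index exactly $\alpha$. Even though the argument above shows that $B_1^j(x_1')$ and $B_1^\alpha(x_1')$ end up in the same $W_1$ block at $x_1$, we must verify this block is precisely $B_1^\alpha(x_1)$. The pitfall is that new $W_1$ blocks can appear at $x_1$ from the freshly included $W_1$ points with $x$-coordinate in $(x_1, x_1']$, potentially shifting the bottom-to-top enumeration. I would handle this by tracking leaders via Claim~\ref{cl4}: the iterated merges induced by the path in $T$ always absorb a child into its parent, whose leader is more to the right by the construction of~$T$, so the final merged block's leader is the leader of $B_1^\alpha$ at~$x_1'$. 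By Claim~\ref{cl1}, this leader uniquely pins down $B_1^\alpha(x_1)$ at~$x_1$, providing the needed identification.
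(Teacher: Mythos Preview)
Your overall strategy mirrors the paper's: first show that all $B_1^v(x_1')$ for $v$ on the $T$-path from $j$ to $\alpha$ lie in a common block $B_1^{j^\circ}(x_1)$ (your inductive step does this correctly, using that the defining condition of $\alpha$ fails at intermediate nodes), and then identify $j^\circ = \alpha$ via leaders and Claim~\ref{cl1}. The gap is in this second part.

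Your argument that ``the final merged block's leader is the leader of $B_1^\alpha$ at~$x_1'$'' only establishes that $\leader_1^\alpha(x_1')$ is rightmost \emph{among the blocks on the path}. But $B_1^{j^\circ}(x_1)$ may absorb further blocks $B_1^{j_0}(x_1')$ with $j_0$ not on the path from $j$ to~$\alpha$. If some such $j_0$ satisfies $\pi_1^{-1}(j_0) < \pi_1^{-1}(\alpha)$, then $\leader_1^{j_0}(x_1')$ lies strictly to the right of $\leader_1^\alpha(x_1')$ and your leader identification fails. Your base-case argument, which relies only on separation from $\parent(\alpha)$, has the same blind spot: separation from $\parent(\alpha)$ does not by itself pin down the bottom-to-top index, and the case $\alpha = \pi_1(1)$, where $\parent(\alpha)$ is undefined, is not covered at all.

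Ruling out such a $j_0$ is exactly where the full definition of $\parent$ in $T$ enters --- specifically, the tie-breaking rule that when both candidates $i_1,i_2$ for $\parent(\alpha)$ exist, one chooses the side whose leftmost $W_2$-leader in between is more to the right. The paper argues as follows. If $B_1^{j_0}(x_1') \subseteq B_1^{j^\circ}(x_1)$, then every $W_2$-block between $B_1^\alpha$ and $B_1^{j_0}$ (at $x_2^\leftarrow(x_1')$) must vanish at $x_2^\leftarrow(x_1)$, so all their leaders lie to the right of $x_2^\leftarrow(x_1)$. For $j_0 = \parent(\alpha)$ this directly contradicts the defining property of~$\alpha$. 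For $j_0 \neq \parent(\alpha)$, it forces the leftmost $W_2$-leader between $B_1^\alpha$ and $B_1^{j_0}$ to lie strictly to the right of the one between $B_1^\alpha$ and $B_1^{\parent(\alpha)}$ (which does survive, by the choice of~$\alpha$), contradicting the tie-breaking rule. Your sketch never invokes this rule, and without it the identification $j^\circ = \alpha$ does not go through.
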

\begin{proof}
Let $j = j_1, j_2, \ldots, j_b = \alpha_{x_1 \leftarrow x_1'}(j)$ be the vertices on the path in $T$ from $j$
to $\alpha_{x_1 \leftarrow x_1'}(j)$.
By the definition of~$\alpha_{x_1 \leftarrow x_1'}(j)$, for every $i \in [b - 1]$,
the leaders of all blocks $B_2^\iota(x_2^\leftarrow(x_1'))$ between
$B_1^{j_i}$ and $B_1^{j_{i+1}}$ are to the right of $x_2^\leftarrow(x_1)$,
that is, the blocks $B_2^\iota(x_2^\leftarrow(x_1'))$ are disjoint
with $\points_\situ(x_1, x_2^\leftarrow(x_1))$. 
Thus, there is no point of $W_2$ in the area bounded by $x_2^\leftarrow(x_1)$, the predecessor~$p'_2$ of $p_2$ in $\Xlines$, and the two lines given by the $y$-coordinates of the topmost and bottommost point, respectively, in the blocks $B_1^{j_i}(x_1')$.
This implies that all blocks $B_1^{j_i}(x_1')$, $i \in [b]$, are contained in the 
same block~$B_1^{j^\circ}(x_1)$.

We now show that $\alpha_{x_1 \leftarrow x_1'}(j)$ is the first index~$j'$ in the sequence $\pi_1(1), \pi_1(2), \ldots, \pi_1(r)$ such that $B_1^{j'}(x_1')$ is a subset of $B_1^{j^\circ}(x_1)$.
The claim is immediate if $\alpha_{x_1 \leftarrow x_1'}(j) = \pi_1(1)$, so assume otherwise.
Then, $\alpha_{x_1 \leftarrow x_1'}(j)$ is not the root of $T$ and thus $\parent(\alpha_{x_1 \leftarrow x_1'}(j))$ is defined.
Assume that there exists an index~$j_0$ with $\pi_1^{-1}(j_0) < \pi_1^{-1}(\alpha_{x_1 \leftarrow x_1'}(j))$ such that $B_1^{j_0}(x_1') \subseteq B_1^{j^\circ}(x_1)$.
This implies that for every $B_2^\iota$ between $B_1^{\alpha_{x_1 \leftarrow x_1'}(j)}$ and $B_1^{j_0}$ the leader of $B_2^\iota(x_2^\leftarrow(x_1'))$ is to the left of $x_2^\leftarrow(x_1)$.
If $j_0 = \parent(\alpha_{x_1 \leftarrow x_1'}(j))$, then this is a contradiction to the fact that, by definition of $\alpha_{x_1 \leftarrow x'_1}(j)$, there is a block $B_2^\iota(x_2^\leftarrow(x_1'))$ between $B_1^{j_0}$ and $B_1^{\alpha_{x_1 \leftarrow x_1'(j)}}$ whose leader is to the right of $x_2^\leftarrow(x_1)$.
If $j_0 \neq \parent(\alpha_{x_1 \leftarrow x'_1}(j))$, then it follows that the leftmost of the leaders of blocks of $W_2$ between $B_1^{j_0}$ and $B_1^{\alpha_{x_1 \leftarrow x'_1}(j)}$ is more to the right than the leftmost of the leaders of blocks of $W_2$ between $B_1^{\alpha_{x_1 \leftarrow x'_1}(j)}$ and $B_1^{\parent(\alpha_{x_1 \leftarrow x'_1}(j))}$.
This is a contradiction to the definition of parents.
Thus indeed $\alpha_{x_1 \leftarrow x_1'}(j)$ is the earliest index~$j'$ in the sequence $\pi_1(1), \pi_1(2), \ldots, \pi_1(r)$ such that $B_1^{j'}(x_1')$ is a subset of $B_1^{j^\circ}(x_1)$.

We conclude that the leader (rightmost element) $e$ of $B_1^{j^\circ}(x_1)$
is the leader (rightmost element) of $B_1^{\alpha_{x_1 \leftarrow x_1'}(j)}(x_1')$;
by Claim~\ref{cl1} it implies that $\alpha_{x_1 \leftarrow x_1'}(j) = j^\circ$.
This establishes~\eqref{eq:cl6}.
\cqed\end{proof}

In the next claim, we treat blocks that only shrink (but do not disappear) when $p_1$ slides from $x_1$ to~$x_1'$.
\begin{claim}\label{cl-alpha-stay}
Let $x_1, x_1'$ be two elements of $D_{p_1}$ with $x_1 < x'_1$ and let $j \in [r]$.
Then the following conditions are equivalent:
\begin{enumerate}
\item $B_1^j(x_1') \subseteq B_1^j(x_1)$;
\item $\leader_1^j(x_1) = \leader_1^j(x_1')$;
\item $\alpha_{x_1 \leftarrow x_1'}(j) = j$.
\end{enumerate}
\end{claim}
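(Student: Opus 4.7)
The plan is to establish the three equivalences via a short cycle of implications, drawing directly on the previously proved Claims~\ref{cl1}, \ref{cl2}, \ref{cl4}, and \ref{cl-pre-6}.

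\textbf{(1) $\Rightarrow$ (2).} This is immediate from Claim~\ref{cl4}: if $B_1^j(x_1') \subseteq B_1^j(x_1)$, then the leaders coincide.

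\textbf{(2) $\Rightarrow$ (1).} Suppose $\leader_1^j(x_1) = \leader_1^j(x_1')$, and call this common leader $e$. By Claim~\ref{cl2} there exists an index $j^\circ$ with $B_1^j(x_1') \subseteq B_1^{j^\circ}(x_1)$. Since the rightmost element $e$ of $B_1^j(x_1')$ lies in $B_1^{j^\circ}(x_1)$ and is dominated in $\leqx$ by the leader of $B_1^{j^\circ}(x_1)$, while $e$ is also the leader of $B_1^j(x_1)$ by hypothesis, we get that $e$ is simultaneously the leader of $B_1^{j^\circ}(x_1)$ and of $B_1^j(x_1)$. Claim~\ref{cl1} then forces $j^\circ = j$, giving $B_1^j(x_1') \subseteq B_1^j(x_1)$ as required.

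\textbf{(3) $\Rightarrow$ (1).} This is a direct specialization of Claim~\ref{cl-pre-6}: substituting $\alpha_{x_1 \leftarrow x_1'}(j) = j$ into \eqref{eq:cl6} gives $B_1^j(x_1') \subseteq B_1^j(x_1)$.

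\textbf{(1) $\Rightarrow$ (3).} This is the only direction that is not essentially a one-line consequence and is the part I expect to require the most care. Set $j^\star := \alpha_{x_1 \leftarrow x_1'}(j)$; by construction $j^\star$ is an ancestor of $j$ in $T$, so I must rule out $j^\star \neq j$. Claim~\ref{cl-pre-6} gives $B_1^j(x_1') \subseteq B_1^{j^\star}(x_1)$, and the hypothesis (1) gives $B_1^j(x_1') \subseteq B_1^j(x_1)$. Because $B_1^j(x_1')$ is nonempty (it is a block), the two blocks $B_1^j(x_1)$ and $B_1^{j^\star}(x_1)$ share a point of $W_1$. Since the blocks $B_1^1(x_1), \dots, B_1^r(x_1)$ are pairwise disjoint (they are the distinct maximal $W_1$-blocks in $\points_\situ(x_1, x_2^\leftarrow(x_1))$), this forces $j^\star = j$, as desired.

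Combining the four implications yields the equivalence of conditions (1), (2), and (3).
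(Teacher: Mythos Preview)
Your proof is essentially correct and follows the same route as the paper: (1)$\Leftrightarrow$(2) via Claims~\ref{cl2} and~\ref{cl4}, and (1)$\Leftrightarrow$(3) via Claim~\ref{cl-pre-6} together with disjointness of the blocks $B_1^\iota(x_1)$.

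There is one small slip in your (2)$\Rightarrow$(1) argument: you assert that $e$ is the leader of $B_1^{j^\circ}(x_1)$ in order to invoke Claim~\ref{cl1}, but you have only shown $e \in B_1^{j^\circ}(x_1)$, not that $e$ is its rightmost element. The fix is immediate and is in fact the argument you already use for (1)$\Rightarrow$(3): since $e \in B_1^{j^\circ}(x_1)$ and $e = \leader_1^j(x_1) \in B_1^j(x_1)$, the blocks $B_1^{j^\circ}(x_1)$ and $B_1^j(x_1)$ intersect, and disjointness of the blocks at $x_1$ forces $j^\circ = j$. (This is exactly how the paper handles it: the common leader witnesses $B_1^j(x_1') \cap B_1^j(x_1) \neq \emptyset$, and then Claim~\ref{cl2} plus disjointness gives the containment.)
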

\begin{proof}
If $B_1^j(x_1') \subseteq B_1^j(x_1)$, then $\leader_1^j(x_1) = \leader_1^j(x_1')$ by Claim~\ref{cl4}.
In the other direction, 
if $\leader_1^j(x_1) = \leader_1^j(x_1')$, then $B_1^j(x_1') \cap B_1^j(x_1) \neq \emptyset$, so Claim~\ref{cl2} implies $B_1^j(x_1') \subseteq B_1^j(x_1)$.

To prove equivalence of the first and third condition, we use~\eqref{eq:cl6} of Claim~\ref{cl-pre-6} which
implies that $B_1^j(x_1') \subseteq B_1^{\alpha_{x_1 \leftarrow x_1'}(j)}(x_1)$.
Since the blocks $B_1^\iota(x_1)$ are disjoint for distinct $\iota \in [r]$, $B_1^j(x_1') \subseteq B_1^j(x_1)$ is equivalent to $\alpha_{x_1 \leftarrow x_1'}(j) = j$.
\cqed\end{proof}

In the last claim we show that, when moving $p_1$ from $x_1$ to $x_1'$, and it is the case that the $j$th block~$B_1^j$ disappears and reappears elsewhere, then $B_1^j(x'_1)$ and all the blocks at position $x'_1$ corresponding to descendants of $j$ in $T$ are contained in $B_1^{\alpha_{x_1 \leftarrow x_1'}(j)}(x_1)$.
The formal statement is as follows.

\begin{figure}[tb]
\begin{center}
\includegraphics{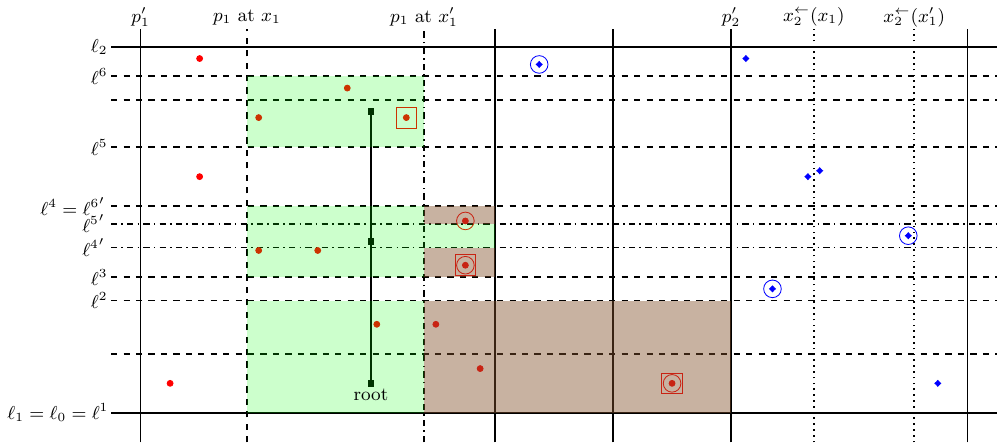}
\caption{
  Situation of Claim~\ref{cl6}, where $p_1$ is either at position $x_1\in D_{p_1}$ or at position $x_1'\in D_{p_1}$ for $x<x'$.
  The lines of $\stlines_\situ$ for $x_1$ are denoted with $\ell^i$, $1 \leq i \leq 6$.
  The lines of $\stlines_\situ$ for $x_1'$ are denoted with $\ell'^{i}$, $1 \leq i \leq 6$.
  Blocks given by positioning $p_1$ at $x_1'$ ($B_1^1(x_1'),B_1^2(x_1'),B_1^3(x_1')$) are depicted by purple color with circled leaders and blocks given by positioning $p_1$ at $x_1$ ($B_1^1(x_1),B_1^2(x_1),B_1^3(x_1)$) are depicted by the union of green and purple color with squared leaders.
  Blocks given by positioning $p_2$ at $x_2^\leftarrow(x_1')$
  with circled leaders.
 Observe that $B_1^3(x_1')\not\subseteq B_1^3(x_1)$ and $\alpha_{x_1 \leftarrow x_1'}(3)=2$.
 Then $B_1^3(x_1')\cup B_1^2(x_1') \subseteq B_1^2(x_1)$.
}\label{fig:sliding2}
\end{center}
\end{figure}

  \begin{sidewaysfigure}
\begin{center}
\includegraphics{fig-sliding2-advanced.pdf}
\caption{
More complex example of situation in Claim~\ref{cl6}, where $p_1$ is either at position $x_1\in D_{p_1}$ or at position $x_1'\in D_{p_1}$ for $x<x'$.
  The lines of $\stlines_\situ$ for $x_1$ are denoted with dashed and dotted lines.
  The lines of $\stlines_\situ$ for $x_1'$ are denoted with dashed and dash-dotted lines.
  Blocks given by positioning $p_1$ at $x_1'$ are depicted by purple color with circled leaders and blocks given by positioning $p_1$ at $x_1$ are depicted by the union of green and purple color with squared leaders.
  Blocks given by positioning $p_2$ at $x_2^\leftarrow(x_1)$ are depicted by orange color with squared leaders and blocks given by positioning $p_1$ at $x_2^\leftarrow(x_1')$ are depicted by the union of yellow and orange color with circled leaders.
  An auxiliary rooted tree $T$ for red blocks is also visualized.
}\label{fig:sliding2advanced}
\end{center}
\end{sidewaysfigure}

\begin{claim}\label{cl6}
Let $x_1, x_1'$ be two elements of $D_{p_1}$ with $x_1 < x'_1$ and let $j \in [r]$
be such that $B_1^{j}(x_1') \not \subseteq B_1^{j}(x_1)$. 
Then for every descendant
$j_\downarrow$ of $j$ in $T$ we have
$B_1^{j_\downarrow}(x_1') \subseteq B_1^{\alpha_{x_1 \leftarrow x_1'}(j)}(x_1)$. 
\end{claim}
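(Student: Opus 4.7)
The plan is to reduce the claim to an equality statement about the function $\alpha_{x_1 \leftarrow x_1'}$ and then invoke Claim~\ref{cl-pre-6}. Specifically, under the hypothesis $B_1^{j}(x_1') \not\subseteq B_1^{j}(x_1)$, I aim to prove that every descendant $j_\downarrow$ of $j$ in $T$ satisfies $\alpha_{x_1 \leftarrow x_1'}(j_\downarrow) = \alpha_{x_1 \leftarrow x_1'}(j)$. Once this is in hand, applying Claim~\ref{cl-pre-6} to $j_\downarrow$ delivers
\[
B_1^{j_\downarrow}(x_1') \subseteq B_1^{\alpha_{x_1 \leftarrow x_1'}(j_\downarrow)}(x_1) = B_1^{\alpha_{x_1 \leftarrow x_1'}(j)}(x_1),
\]
which is the content of the claim.

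To prove this equality, I use that the definition of $\alpha_{x_1 \leftarrow x_1'}$ walks upward from its argument in $T$ until it encounters the first ancestor satisfying the trigger condition (namely, some $B_2^\iota$ between that ancestor and its parent has leader at $x_2^\leftarrow(x_1')$ left of $x_2^\leftarrow(x_1)$), and defaults to the root otherwise. Since this condition depends only on the node being examined and not on the descendant from which one starts, it suffices to show that the walk from $j_\downarrow$ hits no trigger before reaching $\alpha_{x_1 \leftarrow x_1'}(j)$. The candidate ancestors to check split into three groups: strict descendants of $j$ in $T$ that are ancestors of $j_\downarrow$; the node $j$ itself; and ancestors of $j$ strictly between $j$ and $\alpha_{x_1 \leftarrow x_1'}(j)$ on the path toward the root. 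The last group contains no triggering node by minimality in the definition of $\alpha_{x_1 \leftarrow x_1'}(j)$, and $j$ itself does not trigger because we assume $\alpha_{x_1 \leftarrow x_1'}(j) \neq j$.

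The essential group is the first. For any strict descendant $j'$ of $j$ in $T$, each parent step of $T$ is defined to land on an index with strictly smaller $\pi_1^{-1}(\cdot)$, so $\pi_1^{-1}(j') > \pi_1^{-1}(j)$ and consequently $\leader_1^{j'}(x_1) \leqx \leader_1^{j}(x_1)$. Because $\leadact_1^\rightarrow$ depends only on the $x$-coordinate of its argument and is nondecreasing in that coordinate, we obtain $\leadact_1^\rightarrow(\leader_1^{j'}(x_1)) \leq \leadact_1^\rightarrow(\leader_1^{j}(x_1))$. By Claim~\ref{cl-alpha-stay}, the hypothesis $\alpha_{x_1 \leftarrow x_1'}(j) \neq j$ is equivalent to $x_1' > \leadact_1^\rightarrow(\leader_1^{j}(x_1))$, which together with the previous inequality yields $x_1' > \leadact_1^\rightarrow(\leader_1^{j'}(x_1))$; invoking Claim~\ref{cl-alpha-stay} in the reverse direction then gives $\alpha_{x_1 \leftarrow x_1'}(j') \neq j'$, so $j'$ does not trigger.

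Combining the three groups, the $\alpha$-walk from $j_\downarrow$ passes through $j$ and the subsequent ancestors without triggering until it reaches $\alpha_{x_1 \leftarrow x_1'}(j)$, so indeed $\alpha_{x_1 \leftarrow x_1'}(j_\downarrow) = \alpha_{x_1 \leftarrow x_1'}(j)$, and the proof concludes via Claim~\ref{cl-pre-6}. The only subtlety is that the trigger condition at different ancestors involves geometrically unrelated collections of $W_2$-blocks; the argument sidesteps this by translating the trigger condition into the leader-tracking statement of Claim~\ref{cl-alpha-stay} and then exploiting the $\pi_1^{-1}$ monotonicity of parents together with the $x$-coordinate monotonicity of $\leadact_1^\rightarrow$.
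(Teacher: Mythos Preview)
Your proof is correct and follows essentially the same approach as the paper's. Both arguments reduce to showing $\alpha_{x_1 \leftarrow x_1'}(j_\downarrow) = \alpha_{x_1 \leftarrow x_1'}(j)$ and then invoke Claim~\ref{cl-pre-6}; the paper establishes the non-triggering of intermediate nodes via the index $i_{x_1 \leftarrow x_1'}$ from Claim~\ref{cl5} together with the second half of~\eqref{eq:cl6}, whereas you equivalently route through $\leadact_1^\rightarrow$ and Claim~\ref{cl-alpha-stay}, but the underlying logic is the same.
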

\begin{proof}
Since, $B_1^{j}(x'_1) \not\subseteq B_1^{j}(x_1)$, we have $\pi_1^{-1}(j) > i_{x_1 \leftarrow x_1'}$. 
For every $j_\downarrow \in V(T_{j}) \setminus \{j\}$ we have
$\pi_1^{-1}(j_\downarrow) > \pi_1^{-1}(j)$ and thus $\pi_1^{-1}(j_\downarrow) > i_{x_1 \leftarrow x_1'}$ as well.
In particular, $\leader_1^{j_\downarrow}(x_1)$ is to the left of $x_1'$
and hence $B_1^{j_\downarrow}(x_1) \cap B_1^{j_\downarrow}(x_1') = \emptyset$.
Hence, no vertex $j'$ on the path in $T$ between $j_\downarrow$ and $j$ (including $j_\downarrow$ and $j$) satisfies
 $B_1^{j'}(x_1') \subseteq B_1^{j'}(x_1)$.
Thus, applying \eqref{eq:cl6} in Claim~\ref{cl-pre-6} to $j_\downarrow$ (instead of $j$), 
we obtain that $\alpha_{x_1 \leftarrow x_1'}(j_\downarrow) \neq j'$.
Thus, $\alpha_{x_1 \leftarrow x_1'}(j_\downarrow)$ is an ancestor of~$j$.
By definition of $\alpha_{x_1 \leftarrow x_1'}(j)$ we conclude that $\alpha_{x_1 \leftarrow x_1'}(j_\downarrow) = \alpha_{x_1 \leftarrow x_1'}(j)$.
Now applying \eqref{eq:cl6} in Claim~\ref{cl-pre-6} to $j_\downarrow$ again, we have $B_1^{j_\downarrow}(x_1') \subseteq B_1^{\alpha_{x_1 \leftarrow x_1'}(j)}(x_1)$.
This finishes the proof of the claim.
\cqed\end{proof}

The above claims establish the following structure: If we swipe the value of $x_1 \in D_{p_1}$
from right to left, and focus on one block $B_1^j(x_1)$, then a particular element
$e$ is a leader of $B_1^j(x_1)$ between $\leadact_1^\rightarrow(e)$, which is the rightmost
value $x_1$ that is to the left of $e$, and $\leadact_1^\leftarrow(e)$;
for every $x_1 < \leadact_1^\leftarrow(e)$, the element $e$ and the whole block
$B_1^j(\leadact_1^\leftarrow(e))$ is a subset of some other block $B_1^{j'}(x_1)$ for
an ancestor $j'$ of $j$ in the tree $T$.
Furthermore, $B_1^{j_\downarrow}(\leadact_1^\leftarrow(e))$ is also a subset
of $B_1^{j'}(x_1)$ for every $j_\downarrow \in V(T_j)$. 

For a block $B_1^j$ and an element $e$ that is the leader of $B_1^j(x_1)$ for some $x_1 \in D_{p_1}$,
the \emph{epoch} of $B_1^j$ and $e$ is the segment $[\leadact_1^\leftarrow(e), \leadact_1^\rightarrow(e)]$
in $D_{p_1}$. Note that each block $B_1^j$ partitions $D_{p_1}$ into epochs; let $\mathcal{P}_j$ be this
partition.
Note that the epochs one-to-one correspond to the intervals $[\leadact_1^\leftarrow(e), \leadact_1^\rightarrow(e)]$ where $e$ is the leader of some block in $W_1$ for some $x_1$.
Moreover, $e$ is unique to this interval.
Hence, for an epoch~$\epoch$, we may use the notation 
$\epoch = [\leadact_1^\leftarrow(\epoch), \leadact_1^\rightarrow(\epoch)]$ without ambiguity. 

We now make several observations about the structure of epochs.
Let $x_1, x'_1 \in D_{p_1}$ with $x_1 < x_1'$.
Claims~\ref{cl3} and~\ref{cl4} ensure that if $x_1, x_1'$ belong to different epochs of~$B_1^j$, then $B_1^j(x_1) \cap B_1^j(x_1') = \emptyset$, and if $x_1, x_1'$ belong to the same epoch of $B_1^j$, then $B_1^j(x_1') \subseteq B_1^j(x_1)$ and $\leader_1^j(x_1) = \leader_1^j(x_1')$.
Claims~\ref{cl-alpha-stay} and~\ref{cl6} ensure that, if $j'$ is an ancestor of $j$ in $T$, then the epochs of~$B_1^{j'}$ are supersets of the epochs of $B_1^j$, that is, the epochs of $B_1^{j'}$ form a coarser partition of $D_{p_1}$ into segments than the epochs of~$B_1^j$.
To see this, consider two distinct epochs $\epoch, \epoch'$ of $B_1^{j'}$ where $\epoch$ is to the left of $\epoch'$ and observe that the leader of $B_1^{j'}$ is different in these two epochs.
It then suffices to show that also the leader of $B_1^{j}(x_1)$, $x_1 \in \epoch_1$, is different from the leader of $B_1^{j}(x_1')$, $x'_1 \in \epoch_2$.
By Claim~\ref{cl-alpha-stay} we have $B_1^{j'}(x'_1) \not\subseteq B_1^{j'}(x_1)$.
By Claim~\ref{cl6} thus $B_1^j(x'_1) \subseteq B_1^{\alpha_{x_1 \leftarrow x_1'}(j')}(x_1)$, that is, $B_1^j(x'_1) \not\subseteq B_1^j(x_1)$.
By Claim~\ref{cl-alpha-stay} thus the leader $B_1^{j}(x_1)$ is different from $B_1^{j}(x'_1)$.
Finally, observe also that $B_1^{\pi(1)}$ has only one epoch, because $\leader_1^{\pi(1)}(x_1)$ is the rightmost point of $\apxcell(p_1', \ell_1)$ and thus stays constant for all $x_1 \in D_{p_1}$. 

For an epoch $\epoch$ of a block $B_1^j$, we denote by $\leadact_1^\downarrow(\epoch)$
and $\leadact_1^\uparrow(\epoch)$
the minimum and maximum $y$\nobreakdash-coordinate of an element of $B_1^{j_\downarrow}(x_1)$ for $x_1 \in \epoch$
and $j_\downarrow \in V(T_j)$. 
Note that the minimum and maximum values are always attained for 
$x_1 = \leadact_1^\leftarrow(\epoch)$, as the union of all blocks $B_1^{j_\downarrow}(x_1)$
for $j_\downarrow \in V(T_j)$ only grows (in the subset order) as $x_1$ decreases from $\leadact_1^\rightarrow(\epoch)$
to $\leadact_1^\leftarrow(\epoch)$.

By definition, if $j'$ is an ancestor of $j$ in~$T$ and $\epoch'$ is an epoch of $B_1^{j'}$
that contains an epoch $\epoch$ of $j$, then
$$[\leadact_1^\downarrow(\epoch), \leadact_1^\uparrow(\epoch)] \subseteq
  [\leadact_1^\downarrow(\epoch'), \leadact_1^\uparrow(\epoch')].$$
Thus, with an epoch $\epoch$ one can associate a rectangle in $\mathbb{R}^2$:
$$[\leadact_1^\leftarrow(\epoch), \leadact_1^\rightarrow(\epoch)] \times
[\leadact_1^\downarrow(\epoch), \leadact_1^\uparrow(\epoch)],$$
and we have that the rectangle of an epoch of a block $B_1^j$ is contained in the 
rectangle of a corresponding epoch of a block $B_1^{j'}$ for an ancestor $j'$ of $j$.

\begin{figure}[tb]
\begin{center}
\includegraphics[scale=0.5]{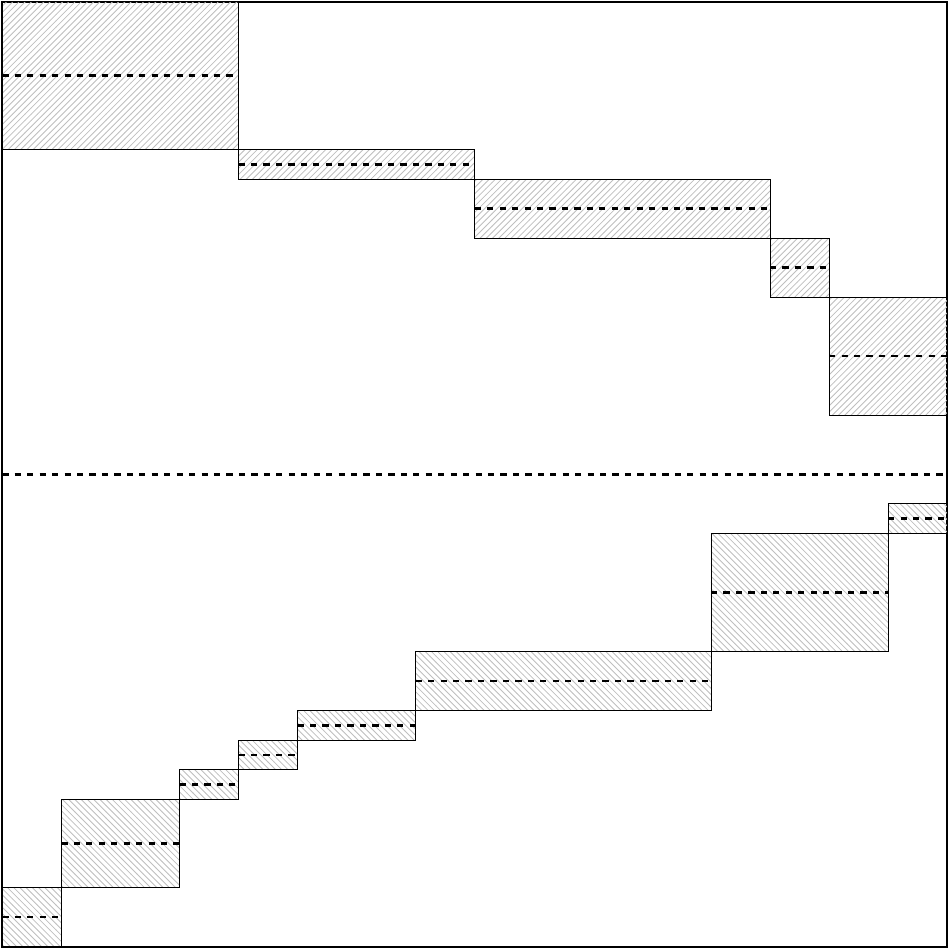}
\caption{Statement of Claim~\ref{cl7}.
  The big rectangle is the box
$[\leadact_1^\leftarrow(\epoch), \leadact_1^\rightarrow(\epoch)] \times
[\leadact_1^\downarrow(\epoch), \leadact_1^\uparrow(\epoch)]$
for one epoch of a block $B_1^j$ with two children $j_1 > j$ and $j_2 < j$.
  The small boxes above with north east lines correspond to epochs of $B_1^{j_1}$
  and the small boxes below with north west lines correspond to epochs of $B_1^{j_2}$.
  The horizontal dashed lines indicate the $y$-coordinate of the leader
  at the corresponding epoch.}\label{fig:epochs}
\end{center}
\end{figure}

Claim~\ref{cl6} implies that, for two different epochs $\epoch$ and $\epoch'$ of the 
same block $B_1^j$, the segments
$[\leadact_1^\downarrow(\epoch), \leadact_1^\uparrow(\epoch)]$ and
$[\leadact_1^\downarrow(\epoch'), \leadact_1^\uparrow(\epoch')]$ are disjoint.
(Note that, since the leader of $B_1^j$ changes between the two epochs, Claim~\ref{cl6} implies that the leader of $B_1^{j_\downarrow}$ changes for each descendant~$j_\downarrow$ of~$j$, that is, the corresponding blocks are disjoint.)
We now observe that, moreover, for a right-to-left sequence of epochs of some block, these segments are ordered top-to-bottom or vice versa (see also Figure~\ref{fig:epochs}):
\begin{claim}\label{cl7}
Let $j'$ be the parent of $j$ in $T$,
let $\epoch'$ be an epoch of $j'$, and let $\epoch_1$, $\epoch_2$, \ldots, $\epoch_b$
be the epochs of $j$ contained in $\epoch'$ in the right-to-left order.
Then the sequence of disjoint segments
$([\leadact_1^\downarrow(\epoch_i), \leadact_1^\uparrow(\epoch_i)])_{i=1}^a$ 
is monotonous, that is, if $j < j'$ then
$$\leadact_1^\downarrow(\epoch_1) \geq \leadact_1^\uparrow(\epoch_1) > \leadact_1^\downarrow(\epoch_2) \geq \leadact_1^\uparrow(\epoch_2) > \ldots > \leadact_1^\downarrow(\epoch_b) \geq \leadact_1^\uparrow(\epoch_b),$$
and if $j' < j$ then
$$\leadact_1^\downarrow(\epoch_1) \leq \leadact_1^\uparrow(\epoch_1) < \leadact_1^\downarrow(\epoch_2) \leq \leadact_1^\uparrow(\epoch_2) < \ldots < \leadact_1^\downarrow(\epoch_b) \leq \leadact_1^\uparrow(\epoch_b).$$
\end{claim}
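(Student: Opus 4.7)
The plan is to reduce Claim~\ref{cl7} to the following single-step claim: for every pair of consecutive epochs $\epoch_i$ and $\epoch_{i+1}$ of $B_1^j$ inside $\epoch'$ (ordered right to left), the $y$-intervals $[\leadact_1^\downarrow(\epoch_i), \leadact_1^\uparrow(\epoch_i)]$ and $[\leadact_1^\downarrow(\epoch_{i+1}), \leadact_1^\uparrow(\epoch_{i+1})]$ are strictly ordered, with the former entirely above the latter when $j<j'$ and entirely below when $j'<j$. Chaining this relation over $i=1,\ldots,b-1$ then yields the full monotone chain in the statement. The case $j'<j$ follows by a symmetric argument, so I focus on $j<j'$.

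The key tool is Claim~\ref{cl6}: picking $x_1'\in\epoch_i$ and $x_1\in\epoch_{i+1}$, so that $x_1<x_1'$ and $\leader_1^j(x_1)\neq\leader_1^j(x_1')$, it yields an ancestor $\alpha=\alpha_{x_1\leftarrow x_1'}(j)$ of $j$ in $T$, distinct from $j$, such that $\bigcup_{j_\downarrow\in V(T_j)} B_1^{j_\downarrow}(x_1')\subseteq B_1^{\alpha}(x_1)$. The crucial intermediate claim is $\alpha=j'$. Because $x_1,x_1'\in\epoch'$, Claim~\ref{cl-alpha-stay} gives $\alpha_{x_1\leftarrow x_1'}(j'')=j''$ for every ancestor $j''$ of $j'$ (none of those blocks merges within $\epoch'$), so $\alpha$ cannot be strictly above $j'$. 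The parent-selection rule in the definition of $T$, combined with the monotonicity of $x_2^\leftarrow$ from Observation~\ref{obs:altmon2}, then forces the first $W_2$-leader to cross $x_2^\leftarrow(x_1)$ along the ancestral chain starting at $j$ to lie between $B_1^j$ and $B_1^{j'}$, identifying $\alpha=j'$.

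With $\alpha=j'$ in hand, the inclusion places the ``old'' $y$-range $[\leadact_1^\downarrow(\epoch_i),\leadact_1^\uparrow(\epoch_i)]$ inside $B_1^{j'}(x_1)$. At the same parameter $x_1$, the union of ``new'' blocks $\bigcup_{j_\downarrow\in V(T_j)} B_1^{j_\downarrow}(x_1)$ is disjoint from $B_1^{j'}(x_1)$ (they are distinct $W_1$-blocks at $x_1$, separated by surviving $W_2$-blocks). A separate structural lemma asserts that every descendant of $j$ in $T$ sits at a bottom-to-top position strictly below $j'$ when $j<j'$. This lemma is proved by induction along paths in $T$: if a child $c$ of some $m\in V(T_j)$ satisfied $c\geq j'$, then $j'\in(m,c)$ together with $m=i_1$ (the parent-rule case for $c>m$) would force $\pi_1^{-1}(j')\geq\pi_1^{-1}(c)$, which contradicts the chain $\pi_1^{-1}(j')<\pi_1^{-1}(j)\leq\pi_1^{-1}(m)<\pi_1^{-1}(c)$ derived from the successive parent relations. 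This places the new blocks strictly below $B_1^{j'}(x_1)$, giving $\leadact_1^\uparrow(\epoch_{i+1})<\leadact_1^\downarrow(\epoch_i)$.

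The main obstacle is the identification $\alpha=j'$. It rests on the intricate parent-selection rule of $T$, which interleaves $\pi_1$ (fixing candidate parents by $W_1$-leader order) with $\pi_2$ (tie-breaking by the leftmost intervening $W_2$-leader). Formalizing this step requires carefully tracking, as $x_1$ decreases within $\epoch'$, which intervening $W_2$-leader first drops across $x_2^\leftarrow(x_1)$, and verifying that the $\pi_2$-tie-break in the construction of $T$ aligns with this first-dropping leader so that the merge target is precisely $j'$ rather than a higher ancestor.
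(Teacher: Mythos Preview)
Your overall plan matches the paper's, and the structural lemma you prove (that every $j_\downarrow\in V(T_j)$ satisfies $j_\downarrow<j'$ when $j<j'$) is correct and gives a valid alternative ending. However, you misidentify where the work lies.

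The step $\alpha_{x_1\leftarrow x_1'}(j)=j'$ is \emph{not} the main obstacle and does not need the $\pi_2$ tie-break at all. Once you have $\alpha\neq j$ (different epochs of $j$) and $\alpha_{x_1\leftarrow x_1'}(j')=j'$ (same epoch $\epoch'$ of $j'$, via Claim~\ref{cl-alpha-stay}), you are done: if $j'$ is the root this is trivial, and otherwise $\alpha_{x_1\leftarrow x_1'}(j')=j'$ forces the crossed-$B_2$-leader condition to hold at the edge $(j',\parent(j'))$, so the closest ancestor of $j$ with that property is exactly~$j'$. Your final paragraph asserts that the first crossed $W_2$-leader lies ``between $B_1^j$ and $B_1^{j'}$''; by the very definition of $\alpha$ that would yield $\alpha=j$, not $\alpha=j'$, so this passage is both unnecessary and wrong as stated. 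After establishing $\alpha=j'$, the paper also bypasses your structural lemma: the disjointness of the intervals $[\leadact_1^\downarrow(\epoch),\leadact_1^\uparrow(\epoch)]$ for distinct epochs of $B_1^j$ is already recorded just before Claim~\ref{cl7}, and since the single block $B_1^j(x_1)$ sits below $B_1^{j'}(x_1)\supseteq[\leadact_1^\downarrow(\epoch_\beta),\leadact_1^\uparrow(\epoch_\beta)]$, one gets $\leadact_1^\downarrow(\epoch_{\beta+1})<\leadact_1^\downarrow(\epoch_\beta)$, and disjointness then delivers the required strict inequality.
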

\begin{proof}
  \looseness=-1
  Recall that through the entire epoch $\epoch'$ the leader of the block $B_1^{j'}$ stays 
the same: if $x_1 < x_1'$ for $x_1,x_1' \in \epoch'$ then $\leader_1^{j'}(x_1) = \leader_1^{j'}(x_1')$
and $B_1^{j'}(x_1') \subseteq B_1^{j'}(x_1)$. 
Claim~\ref{cl-pre-6} implies moreover that $\alpha_{x_1 \leftarrow x_1'}(j') = j'$. 

Fix $\beta \in [b - 1]$, $x_1' \in \epoch_\beta$, and $x_1 \in \epoch_{\beta+1}$. 
We have $B_1^j(x_1') \not \subseteq B_1^j(x_1)$. 
Claim~\ref{cl-alpha-stay} implies that $j \neq \alpha_{x_1 \leftarrow x_1'}(j)$. 
By Claim~\ref{cl6} applied to $B_1^j$, $x_1$, and $x_1'$ we infer
that $\alpha_{x_1 \leftarrow x_1'}(j) = j'$ as $j'$ is the parent of $j$ and $B_1^{j'}(x_1') \subseteq B_1^{j'}(x_1)$. 
Furthermore, we have $B_1^{j_\downarrow}(x_1') \subseteq B_1^{j'}(x_1)$ for every $j_\downarrow \in V(T_j)$.
On the other hand, $B_1^j(x_1)$ is below $B_1^{j'}(x_1)$ if $j < j'$ and
above $B_1^{j'}(x_1)$ if $j > j'$. Hence,
      $\leadact_1^\uparrow(\epoch_{\beta+1}) < \leadact_1^\downarrow(\epoch_\beta)$ if $j < j'$
and   $\leadact_1^\downarrow(\epoch_{\beta+1}) > \leadact_1^\uparrow(\epoch_\beta)$ if $j > j'$.
This finishes the proof of the claim.
\cqed\end{proof}

Claim~\ref{cl7} allows us to conclude the proof of Lemma~\ref{lem:leaders} as follows
using the setting of Section~\ref{ss:tree}.

We construct a tree $T'$ from $T$ by appending to every $j \in V(T) = [r]$
two new children $v^j$ and $u^j$ (which are leaves of $T'$).
With every node $j \in [r]$ we associate the segment partition $\mathcal{P}_j$ of $D_{p_1}$ into epochs of $B_1^j$
and with every leaf of $T'$ we associate the most refined segment partition of $D_{p_1}$ with only singletons.
For every non-root node $j \in V(T)$ we define $\flf(j) = \typeinc$ if
$j < \parent(j)$ and $\flf(j) = \typedec$ if $j > \parent(j)$
We also define $\flf(v^j) = \typedec$ and $\flf(u^j) = \typeinc$. 
This makes $\mathbb{T} = ((D_{p_1}, \leq), T', (\mathcal{P}_v)_{v \in V(T')}, \flf)$
a tree of segment partitions.

Now define for every $j \in [r]$ functions $f_{v^j} = \leadup_j$
and $f_{u^j} = \leaddown_j$. 
Observe that within one epoch of $B_1^j$, $\leadup_j$ is nonincreasing
and $\leaddown_j$ is nondecreasing. Consequently, Claim~\ref{cl7} (together with the fact that $B_1^{\pi(1)}$ has only one epoch)
implies that the family of functions $\mathcal{F} = (f_v)_{v \in \leaves(T')}$ is a family of leaf functions
for the tree of segment partitions $\mathbb{T}$.

We apply Lemma~\ref{lem:make-seg-rep} to $\mathbb{T}$ and $\mathcal{F}$ and
obtain a family $\mathcal{G} = (g_v)_{v \in V(T') \setminus \{\treeroot(T)\}}$
of segment reversions and a family
$\widehat{\mathcal{F}} = (\hat{f}_v)_{v \in \leaves(T)}$ of nondecreasing functions.
By Lemma~\ref{lem:make-seg-rep}, we can return $T'$, $\mathcal{G}$, and $\widehat{\mathcal{F}}$
as outcomes of Lemma~\ref{lem:leaders}.
\end{proof}

\subsection{Completeness}\label[section]{sec:completeness}
We now perform a tedious but rather direct check that shows that all defined constraints and steps where we filtered out domains of some lines guarantee completeness.
\begin{lemma}\label{lem:completeness}
If an assignment $(\Xoptsol,\Yoptsol)$ that assigns to every line $\ell$ an element in $D_\ell$ satisfies all
monotonicity, corner, alternation, and alternating lines constraints, then the pair $(\{\Xoptsol(\ell) \mid \ell \in \Xoptlines\}, \{\Yoptsol(\ell) \mid \ell \in \Yoptlines\})$ is a separation.
\end{lemma}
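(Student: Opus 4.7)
The plan is to reduce completeness to showing that every abstract cell $\cell \in \Cells$ with $\cont(\cell) = 0$ is empty under the assignment. The monotonicity constraints guarantee that $\Xapxsol \cup \Xoptsol$ and $\Yapxsol \cup \Yoptsol$ are strictly increasing, so the combined grid is well-defined. Since $(\Xapx,\Yapx)$ is already a separation, each \apx-supercell contains points of at most one type in $W_1 \cup W_2$, and since every cell lies inside some \apx-supercell, a cell with $\cont(\cell) = \alpha \in \{1,2\}$ cannot contain a point of $W_{3-\alpha}$. Combined with the rejection in Branching Step~C of guesses that place cells of incompatible non-zero types into a common \opt-supercell, this gives that every \opt-supercell under the assignment contains points of at most one type, which is exactly the separation property for $(X',Y') = (\{\Xoptsol(\ell) \mid \ell \in \Xoptlines\}, \{\Yoptsol(\ell) \mid \ell \in \Yoptlines\})$.

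To prove emptiness of cells with $\cont(\cell) = 0$, I would proceed by a case analysis on the number of borders of $\cell = \cell(\ell_x,\ell_y)$ that belong to $\Xoptlines \cup \Yoptlines$. With zero opt-borders the cell coincides with its containing \apx-supercell, and emptiness follows from the rejection in Branching Step~C of any guess where a non-empty \apx-supercell has all its cells marked type~$0$. With one opt-border the four lines of the cell form an empty corner with a single opt-line, and the corner filtering step has already removed from that line's domain every value that would leave $\cell$ non-empty, so the assignment, being in the filtered domain, keeps $\cell$ empty. With two opt-borders $\cell$ is the area of interest of an empty corner with two opt-lines and the corresponding corner constraint, satisfied by the assignment, forces emptiness.

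The remaining case of three or four opt-borders is the main obstacle; here $\cell$ lies inside $\area(p_1,p_2,\ell_1,\ell_2)$ for some situation $\situ$ of alternation at least four, possibly of the symmetric top--bottom flavor. The alternation constraint ensures that $(\Xoptsol(p_1),\Xoptsol(p_2))$ fits $\situ$, so the block decomposition $B_1^i,B_2^i$ takes the guessed form; the alternating-lines constraints then pin each $\ell^{2i-1}$ and $\ell^{2i}$ into the y-gap adjacent to block $B_1^i$; and the filtering of \cref{sec:consist-DE} forces the leader of every block into its predicted cell. Combining these with monotonicity, the symmetric situations in the top--bottom direction, and the corner analysis already established for cells with fewer opt-borders, one propagates the block structure to all remaining points and concludes that any $\cont = 0$ cell inside $\area(p_1,p_2,\ell_1,\ell_2)$ falls outside every block $B_1^i$ and $B_2^i$ and is therefore empty. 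The delicate step will be handling non-alternating opt-y-lines that sit inside the y-range of a block: they must be shown to land in inter-point gaps of that block, which requires chaining together the horizontal and vertical situations through the extremal-point filtering to extract enough structural information from the surviving domains to exclude any point of $W_1 \cup W_2$ from falling into a cell marked empty.
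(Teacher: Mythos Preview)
Your reduction to ``every cell with $\cont(\cell)=0$ is empty'' is aiming for something strictly stronger than the separation property, and the constraints were not designed to guarantee it. The clearest failure is the four-opt-border case: such a cell is an \opt-supercell contained in a single \apx-supercell, so it can only contain points of one colour regardless of where the surrounding lines land. The paper adds \emph{no} constraint for such cells (cf.\ the overview discussion of cells with four borders from $X\cup Y$), and nothing prevents a $\cont=0$ four-opt-border cell from containing a point of $W_1\cup W_2$ under a satisfying assignment. This does not threaten separation, but it does falsify your intermediate claim.

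The same over-reach bites in the three-opt-border case. Your assertion that any such $\cont=0$ cell lies inside $\area(p_1,p_2,\ell_1,\ell_2)$ for some situation $\situ$ of alternation at least four is not true in general: a ternary cell can perfectly well sit inside a situation of alternation $0$, $1$, or $2$, for which no alternation or alternating-lines constraints were ever introduced. In the paper's argument, the bound ``alternation $\geq 4$'' is not a structural fact about ternary cells but a \emph{consequence of the assumed violation of separation}: starting from two unseparated points $(x_1,y_1)\in W_1$ and $(x_2,y_2)\in W_2$, Claims~\ref{cl:corner}, \ref{cl:ones-up-down}, \ref{cl:ones-up-down-sym}, and~\ref{cl:alt4} use corner constraints and the presence of \emph{both} colours in the same \opt-supercell to force enough $1$/$2$ alternations in the relevant $\stseq_\situ$. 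Without the second point of the opposite colour you have no leverage to produce those alternations, and your case analysis stalls. The fix is to abandon the cell-by-cell emptiness goal and argue by contradiction from an unseparated pair, as the paper does; then the alternation bound emerges from the geometry of the violation rather than being assumed.
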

\begin{proof}
  The proof is by contradiction.
  Assume that there exist two points $(x_1,y_1) \in W_1$ and $(x_2,y_2) \in W_2$ such that
no element of $X' := \{\Xoptsol(\ell) \mid \ell \in \Xoptlines\}$ is between $x_1$ and $x_2$
and no element of $Y' := \{\Yoptsol(\ell) \mid \ell \in \Yoptlines\}$ is between $y_1$ and $y_2$. 
Our goal is to obtain a contradiction by exhibiting either a violated constraint
or an element $\Xoptsol(\ell)$ or $\Yoptsol(\ell)$ of some domain $D_\ell$ that should have been removed in one of the filtering steps.

Let $\Xsol = \Xoptsol \cup \Xapxsol$ and $\Ysol = \Yoptsol \cup \Yapxsol$.
Observe that the choice of the domains and the monotonicity constraints ensure that
$\Xsol$ and $\Ysol$ are both increasing functions.

Let $p_1 \in \Xlines$ be such that 
$\Xsol(p_1)$ is the maximum element of $X' \cup \{1\}$ that is smaller than $x_1$ and $x_2$ and let
$p_2 \in \Ylines$ be such that $\Xsol(p_2)$ is the successor of $\Xsol(p_1)$ in $X' \cup \{1, 3n+1\}$. Note that $\Xsol(p_2) > x_1,x_2$.
Similarly, let $q_1 \in \Ylines$ be such that 
$\Ysol(q_1)$ be the maximum element of $Y' \cup \{1\}$ that is smaller than $y_1$ and $y_2$
and let $q_2 \in \Ylines$ be such that 
$\Ysol(q_2)$ be the successor of $\Ysol(q_1)$ in $Y' \cup \{1,3n+1\}$. Again, $\Ysol(q_2) > y_1,y_2$.

\begin{figure}[tb]
\begin{center}
\includegraphics{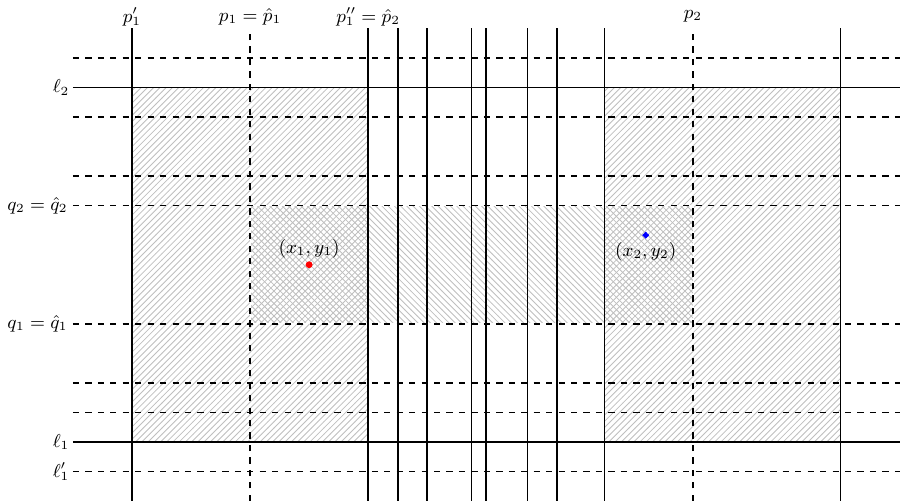}
\caption{Illustration of the proof of Lemma~\ref{lem:completeness}.
Solid lines are from $\Xapxlines \cup \Yapxlines$, dashed lines are from
  $\Xoptlines \cup \Yoptlines$. 
The \apx-supercells $\apxcell(p_1', \ell_1)$ and $\apxcell(p_2', \ell_1)$ and
the \opt-supercell $\optcell(p_1,q_1)$ are highlighted.}\label{fig:completeness}
\end{center}
\end{figure}

By symmetry, we can assume that $\cont(\optcell(p_1, q_1)) \neq 1$.
Let $p_1' \in \Xapxlines$ and $\ell_1 \in \Yapxlines$ such that $\apxcell(p_1', \ell_1)$ is the \apx-supercell containing $(x_1,y_1)$.
Note that $\cont(\apxcell(p_1', \ell_1)) = 1$. 
Let $p_1''$ be the successor of $p_1'$ in $\Xapxlines$ and let $\ell_2$ be the successor of $\ell_1$ in $\Yapxlines$.
Let $\hat{p}_1$ be the maximum of the pair $\{p_1', p_1\}$,
    $\hat{p}_2$ be the minimum of the pair $\{p_1'', p_2\}$, 
    $\hat{\ell}_1$ be the maximum of the pair $\{q_1, \ell_1\}$, and
    $\hat{\ell}_2$ be the minimum of the pair $\{q_2, \ell_2\}$. 
Note that $\Xsol(\hat{p}_1) < x_1 < \Xsol(\hat{p}_2)$ and $\Ysol(\hat{\ell}_1) < y_1 < \Ysol(\hat{\ell}_2)$. 
Consult Figure~\ref{fig:completeness} for an example of such situation.

\begin{claim}\label{cl:corner}
Exactly one of the lines $\hat{p}_1$, $\hat{p}_2$, $\hat{\ell}_1$, and $\hat{\ell}_2$ lies in $\Xapxlines \cup \Yapxlines$.
\end{claim}
\begin{proof}
First, we exclude the case when $\hat{p}_1 = p_1$, $\hat{p}_2 = p_2$, $\hat{\ell}_1 = q_1$, and $\hat{\ell}_2 = q_2$. 
If this were the case, then both $(x_1,y_1)$ and $(x_2,y_2)$ would lie in the \apx-supercell $\apxcell(p_1', \ell_1)$, contradicting the fact that $(\Xapx, \Yapx)$ is a separation.
(Recall that $(\Xapx, \Yapx)$ is the initially computed 2-approximate separation.)

Consider now the case when at least two of the lines $\hat{p}_1$, $\hat{p}_2$, $\hat{\ell}_1$, $\hat{\ell}_2$ belong to $\Xapxlines \cup \Yapxlines$.
Then the area of interest of the tuple $(\hat{p}_1, \hat{p}_2, \hat{\ell}_1, \hat{\ell}_2)$ lies inside the \apx-supercell $\apxcell(p_1',\ell_1)$
and also inside the \opt-supercell $\optcell(p_1,q_1)$.
Consider the abstract cell~$C$ corresponding to $(\hat{p}_1, \hat{p}_2, \hat{\ell}_1, \hat{\ell}_2)$.
Since $\cont(\apxcell(p_1', \ell_1)) = 1$, we have $\delta(C) \in \{0, 1\}$ by definition of $\delta$.
Since $\cont(\optcell(p_1, q_1)) \neq 1$ furthermore $\delta(C) = 0$ (again, by definition of $\delta$).
Thus, the tuple $(\hat{p}_1, \hat{p}_2, \hat{\ell}_1, \hat{\ell}_2)$ is an empty corner. 
However, then the presence of $(x_1,y_1)$ in the area of interest of the tuple $(\hat{p}_1, \hat{p}_2, \hat{\ell}_1, \hat{\ell}_2)$ is a contradiction
as it violates the corner constraint or the corner filtering step for the empty corner in question.
\cqed\end{proof}

Thus, exactly one of the lines $\hat{p}_1$, $\hat{p}_2$, $\hat{\ell}_1$, and $\hat{\ell}_2$ lies in $\Xapxlines \cup \Yapxlines$.
We claim that, by symmetry, we may assume without loss of generality that this is $\hat{p}_2$.
That is, $\hat{p}_1 = p_1$, $\hat{p}_2 = p_1''$, $\hat{\ell}_1 = q_1$, and $\hat{\ell}_2 = q_2$.
In other words, $p_1' < p_1 < p_1'' < p_2$ and $\ell_1 < q_1 < q_2 < \ell_2$. 
Indeed, to see that the above symmetry-breaking assumption is without loss of generality, we may use the fact that the addition of alternation constraints and alternating lines constraints as well as filtering of correct orders of extremal points has been performed
both in top/down and left/right directions, and that in the following we will solely rely on these filtering steps and constraints.
Observe that now $\situ = (p_1,p_2,\ell_1,\ell_2)$ is a \emph{situation}.

Let $p_2'$ be the maximum element of $\Xapxlines$ that is smaller than $p_2$.
Recall that $\btlines_\situ$ is the set of lines of $\Yoptlines$ between $\ell_1$ and $\ell_2$, 
       $\alllines_\situ = \btlines_\situ \cup \{\ell_1\}$,
      and $\area(\ell) = \area(p_1, p_2, \ell, \ell')$ for every $\ell \in \alllines_\situ$ where $\ell'$ is the successor of $\ell$ in $\btlines_\situ \cup \{\ell_1,\ell_2\}$.

\begin{claim}\label{cl:ones-up-down}
There exists an element $\ell \in \Yoptlines$, $q_2 \leq \ell < \ell_2$, such that $\cont(\optcell(p_1, \ell)) = 1$.
Similarly, there exists $\ell \in \Yoptlines$ with $\ell_1 \leq \ell < q_1$ such that $\cont(\optcell(p_1, \ell)) = 1$.
\end{claim}
\begin{proof}
  \looseness=-1
  We show only the first claim, the proof for the second one is analogous.
Assume the contrary. 
Then, as $\cont(\apxcell(p_1', \ell_1)) = 1$
while for every $\ell \in \Yoptlines$ with $q_1 \leq \ell < \ell_2$ we have $\optcell(p_1, \ell)) \neq 1$, 
every cell $\cell$ in the area of interest of the tuple $(p_1, p_1'', q_1, \ell_2)$ satisfies $\cont(\cell) = 0$. 
Hence, the tuple $(p_1,p_1'', q_1,\ell_2)$ is an empty corner.
However, the existence of $(x_1,y_1)$ violates the corner filtering or the corner constraint for that tuple. 
\cqed\end{proof}

\begin{claim}\label{cl:ones-up-down-sym}
Assume $\cont(\optcell(p_1, q_1)) = 0$.
Then, $(x_2,y_2)$ lies in the \apx-supercell $\apxcell(p_2', \ell_1)$ and, consequently, $\cont(\apxcell(p_2', \ell_1)) = 2$. 
Furthermore, 
there exists $\ell \in \Yoptlines$ with $q_1 \leq \ell < \ell_2$ such that $\cont(\optcell(p_1, \ell)) = 2$ and
that there exists $\ell \in \Yoptlines$ with $\ell_1 \leq \ell < q_1$ such that $\cont(\optcell(p_1, \ell)) = 2$.
\end{claim}
\begin{proof}
Recall that $(x_2,y_2)$ lies in the \opt-supercell $\optcell(p_1,q_1)$.
Since $p_1' < p_1 < p_1'' < p_2$ and $\ell_1 < q_1 < q_2 < \ell_2$, the \apx-supercells that 
share cells with $\optcell(p_1,q_1)$ are the cells $\apxcell(r, \ell_1)$ for $p_1' \leq r \leq p_2'$. 
Assume $(x_2,y_2)$ lies in $\apxcell(r, \ell_1)$ for some $p_1' \leq r \leq p_2'$. 

Since $(x_2,y_2) \in W_2$, point $(x_2,y_2)$ does not lie in the \apx-supercell $\apxcell(p_1', \ell_1)$, so $r \neq p_1'$. 
If $r < p_2'$, then consider the tuple $(r,r', q_1,q_2)$ where $r'$ is the successor of $r$ in $\Xapxlines$. 
Observe that the area of interest of that tuple is contained in the \apx-supercell $\apxcell(r,\ell_1)$ and in the \opt-supercell $\optcell(p_1,q_1)$
and contains $(x_2,y_2)$.
Since $(x_2,y_2)$ is in that \apx-supercell, $\cont(\apxcell(r,\ell_1)) = 2$. Since $\cont(\optcell(p_1,q_1)) = 0$, 
for every cell $\cell$ in the area of interest of $(r,r',q_1,q_2)$ we have $\cont(\cell) = 0$. Since $r,r' \in \Xapxlines$, 
it follows that the tuple~$(r,r',q_1,q_2)$ is an empty corner. Since $q_1,q_2 \in \Yoptlines$, a corner constraint has been introduced binding $q_1$ and $q_2$
and the existence of $(x_2,y_2)$ in the area of interest of $(r,r',q_1,q_2)$ violates that constaint. This establishes $r = p_2'$, that is,
$(x_2,y_2)$ lies in $\apxcell(p_2',\ell_1)$ and, consequently, $\cont(\apxcell(p_2',\ell_1)) = 2$.

For the second statement of the claim, we essentially repeat the reasoning of Claim~\ref{cl:ones-up-down}. 
Assume that for every $\ell \in \Yoptlines$ with $q_1 \leq \ell < \ell_2$ we have $\cont(\optcell(p_1,\ell)) \neq 2$ (the second case is analogous).
Consider the tuple $(p_2', q_1, p_2, \ell_2)$.
Observe that its area of interest is contained both in the union of \opt-supercells $\optcell(p_1,\ell)$ for $q_1 \leq \ell < \ell_2$ and
in the \apx-supercell $\apxcell(p_2',\ell_1)$. Since $\cont(\apxcell(p_2', \ell_1)) = 2$, for every cell $\cell$ in the area of interest of $(p_2',q_1,p_2,\ell_2)$
we have $\cont(\cell) = 0$. Since $p_2' \in \Xapxlines$ and $\ell_2 \in \Yapxlines$, $(p_2',q_1,p_2,\ell_2)$ is an empty corner and a corner constraint has
been introduced binding $q_1$ and $p_2$. However, the existence of $(x_2,y_2)$ in the area of interest of this empty corner violates this corner constraint. 
This finishes the proof of the claim.
\cqed\end{proof}

\begin{claim}\label{cl:alt4}
There exists 
\begin{enumerate}
\item a line $\ell \in \Yoptlines$ with $q_2 \leq \ell < \ell_1$ and $\cont(\optcell(p_1, \ell)) = 1$;
\item a line $\ell \in \Yoptlines$ with $q_1 \leq \ell < \ell_1$ and $\cont(\optcell(p_1, \ell)) = 2$;
\item a line $\ell \in \Yoptlines$ with $\ell_1' \leq \ell < q_1$ and $\cont(\optcell(p_1, \ell)) = 1$;
\item a line $\ell \in \Yoptlines$ with $\ell_1 \leq \ell \leq q_1$ and $\cont(\optcell(p_1, \ell)) = 2$.
\end{enumerate}
In particular, the alternation of the situation $\situ = (p_1, p_2, \ell_1, \ell_2)$ is at least four.
\end{claim}
\begin{proof}
The first part of the claim (the existence of the lines) implies  
that the alternation of $\situ$ is at least three.
As an alternation of a situation cannot be an odd integer larger than $1$ (cf.\ Lemma~\ref{lem:alternations}),
we infer that the first part of the claim implies the second one. Thus we are left with proving the first part. 

We first invoke Claim~\ref{cl:ones-up-down}, giving the first and third point.
If $\cont(\optcell(p_1, q_1)) = 2$, then we are done.
In the other case $\cont(\optcell(p_1, q_1)) = 0$, we invoke Claim~\ref{cl:ones-up-down-sym},
   obtaining the lines promised in the second and fourth point.
\cqed\end{proof}

\begin{figure}[tb]
\begin{center}
\includegraphics{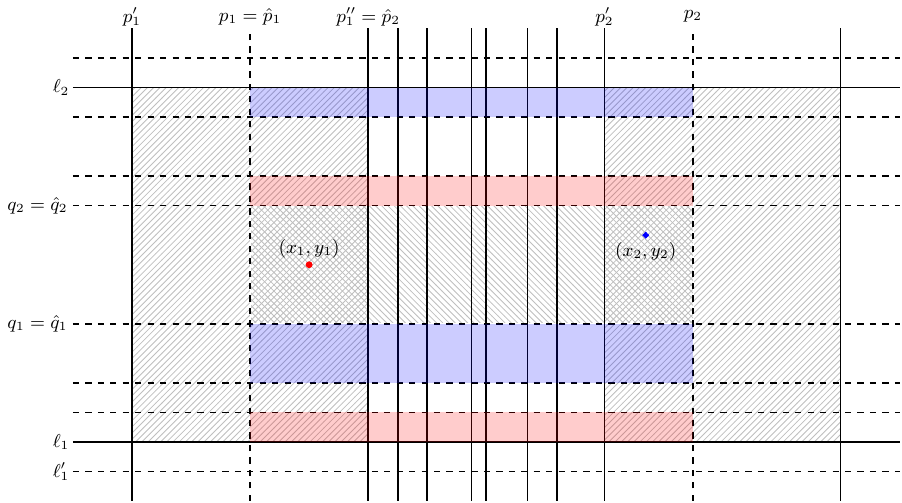}
\caption{Illustration of the proof of Lemma~\ref{lem:completeness}, directly after the proof of Claim~\ref{cl:alt4}.
Solid lines are from $\Xapxlines \cup \Yapxlines$, dashed lines are from
  $\Xoptlines \cup \Yoptlines$. 
The \apx-supercells $\apxcell(p_1', \ell_1)$ and $\apxcell(p_2', \ell_1)$ and
the \opt-supercell $\optcell(p_1,q_1)$ are highlighted.
$\cont(\optcell(p_1, \ell)) = 1$ are highlighted by red background and $\cont(\optcell(p_1, \ell)) = 2$ by blue background.
}\label{fig:completeness2}
\end{center}
\end{figure}

See Figure~\ref{fig:completeness2} which shows the situation in Claim~\ref{cl:alt4}.
By Claim~\ref{cl:alt4}, for the situation $\situ$ an alternation constraint has been added, a filtering step for correct orders of extremal points has been performed,
and a number of alternating lines constraints have been added. 

If the alternation constraint the situation $\situ$ is violated, then we have our desired contradiction.
Otherwise, $(\Xoptsol(p_1), \Xoptsol(p_2))$ fits the situation~$\situ$, that is, $\stseq_\situ = \stseq(\Xoptsol(p_1), \Xoptsol(p_2))$. 

Let $\ell^1,\ell^2, \ldots, \ell^{2r}$ be the elements of $\stlines_\situ$ in the increasing order.
By Claim~\ref{cl:alt4}, $\ell^2 \leq q_1 < \ell^{2r}$. Let $i \in \mathbb{N}$, $1 < i < 2r$, be the maximum index
with $\ell^i \leq q_1$. Observe that $\Yoptsol(\ell^i) < y_1, y_2$ and $y_1, y_2 < \Yoptsol(\ell^{i+1})$
while $\Xoptsol(p_1) < x_1,x_2$ and $x_1,x_2< \Xoptsol(p_2)$. 

We assume that $\cont(\optcell(p_1, \ell^i)) = 2$. The reasoning for $\cont(\optcell(p_1, \ell^i)) = 1$ is analogous, but uses the point $(x_2,y_2)$ instead of $(x_1,y_1)$ and the alternating lines constraint concering $p_2$ instead of $p_1$, while $\cont(\optcell(p_1,\ell^i)) \neq 0$ by the definition of $\stlines_\situ$. 

Recall that $\stseq_\situ = \stseq(\Xoptsol(p_1), \Xoptsol(p_2))$.
Let $j \in [r]$ be the index of the block of $\points_\situ(\Xoptsol(p_1), \Xoptsol(p_2))$
that contains $(x_1,y_1)$.
Since $(x_1,y_1) \in W_1$ but $\cont(\optcell(p_1,\ell^i)) = 2$, we have $i \neq j$.
Observe that $\Yoptsol(q_2) \leq \Yoptsol(\ell^{i+1})$.
Recalling the remaining inequalities, we have $\Yoptsol(\ell^i) \leq \Yoptsol(q_1) < y_1 < \Yoptsol(q_2) \leq \Yoptsol(\ell^{i+1})$. 
If $i > j$ and all monotonicity constraints are satisfied, then the alternating lines
constraint for $p_1$ and line above the $j$-th block is violated. 
Similarly, if $i < j$ and all monotonicity constraints are satisfied, then the alternating lines
constraint for $p_1$ and line below the $j$-th block is violated. 
This is the desired contradiction that finishes the proof of Lemma~\ref{lem:completeness}.
\end{proof}

One remark is in order. A meticulous reader can notice that the proof of Lemma~\ref{lem:completeness} does not in its guts use the filtering step based on Branching Steps~D and~E.
That is, they are not needed to obtain completeness (the conclusion of Lemma~\ref{lem:completeness}).
However, this filtering step has been pivotal in ensuring that alternating lines contraints are sufficiently simple in the proof of Lemma~\ref{lem:imp-depth}.

\subsection{Wrap up}\label[section]{sec:wrap-up}
We now complete the proof of Theorem~\ref{thm:main}.
As already discussed, the branching steps result in $2^{\Oh(k^2 \log k)} \log n$ subcases.
In each subcase, we perform polynomial-time computation that reduces some domains in filtering
steps, possibly discarding the subcase.
If the subcase is not discarded, it produces an auxiliary CSP instance with $k$ variables
and a number of constraints.
There are $\Oh(k)$ monotonicity constraints, $\Oh(k^2)$ corner constraints, 
and $\Oh(k^2)$ alternation constraints, each of constant depth. 
Finally, a situation $\situ = (p_1,p_2,\ell_1,\ell_2)$ of alternation $a \geq 4$
results in $\Oh(a)$ alternating lines constraints
that can be represented as a tree of size $\Oh(a)$ via Lemma~\ref{lem:imp-depth}. 
This tree, if translated directly into a forest CSP instance as discussed in Section~\ref{sec:csp},
yields $\Oh(a)$ variables and constraints. 
There are $\Oh(k)$ choices of the line $p_1$ (which determines $p_2$)
  and, for fixed $p_1$ and $p_2$, the sum of alternations of all situations $(p_1,p_2,\cdot,\cdot)$
  is $\Oh(k)$. 
Hence, adding all alternating lines constraints directly into a forest CSP instance
yields $\Oh(k^2)$ constraints and variables. 
Hence, we obtain a forest-CSP instance of apparent size $\Oh(k^2)$.
This gives fixed-parameter tractability of the \textsc{Optimal Discretization} problem
by Theorem~\ref{thm:csp}
and a running time bound of $2^{\Oh(k^2 \log k)} n^{\Oh(1)}$ by Lemma~\ref{lem:csp}.

\section{Conclusions}\label{sec:conclusions}
We would like to conclude with a number of possible future research directions.

For the \optdis{} problem, the natural direction is to try to improve our running time 
bound $2^{\Oh(k^2 \log k)} n^{\Oh(1)}$. 
Improving the parametric factor to $2^{o(k^2)}$ seems very challenging, as it would require
a significant paradigm shift: In our algorithm, even the most natural branching step (guessing the content of every cell of the solution) yields $2^{\Omega(k^2)}$ subcases. 
A different, but perhaps more accessible, direction would be to analyze and optimize the factor $n^{\Oh(1)}$ in the running time bound.

It would also be interesting to study aspects of effective data reduction.
Does \optdis\ admit a polynomial-size problem kernel\footnote{A problem kernel is a polynomial-time self reduction and its size is an upper bound on the resulting instances in terms of the parameter.}?
Note that a polynomial-size problem kernel for \optdis\ would immediately solve the above question about improving the parametric factor in our running time:
The trivial $n^{k+\Oh(1)}$-time algorithm for \optdis, pipelined with the supposed kernel,
gives a running time of $2^{\Oh(k \log k)} + n^{\Oh(1)}$.
A related and long-open question is whether there is a polynomial-size problem kernel for the fixed-parameter tractable variant of \textsc{Rectangle Stabbing} in which the rectangles are pairwise disjoint~\cite{heggernes_fixedparameter_2013}.
Since working with the geometric representations is difficult, perhaps studying first the kernelization properties of our CSP variant can give some instructive insight.

In a larger scope, the key to our tractability result is the tractability of
the \textsc{Forest CSP} problem.
How large is this isle of tractability in the CSP world?
That is, consider the family of binary CSPs with the number of variables as a parameter. 
Without any restriction on the allowed constraints, this problem is exactly 
\textsc{Multicolored Clique}, the most popular starting point of W[1]-hardness
reductions.
Restricting the allowed constraints to \textsc{Forest CSP} makes the problem fixed-parameter tractable (but still NP-hard, as any permutation can be encoded as a composition of sufficiently many segment reversions), while reducing the allowed constraints to only conjunctions of 
clauses of the form $(x < a) \vee (y > b)$ makes the problem polynomial-time solvable.
New results show a large class of tractable CSP variants~\cite{multicut-3-term} where constraints can be expressed with matrices of bounded grid minor.
Can we further relax the restrictions on the constraints in such \textsc{CSP}s while maintaining fixed-parameter tractability?
It would be interesting to obtain a FPT/W[1]-hard dichotomy here for a wide range of CSPs, but even a good definition of the range of CSPs to consider is unclear to us. 

In the same paper~\cite{multicut-3-term}, we have seen another example of an FPT algorithm for a complex problem via a tractable CSP formulation.
It thus seems that there is further potential in using well-structured CSPs to provide FPT algorithms for difficult tractable problems.

\medskip
\paragraph{Acknowledgments.}
Marcin Pilipczuk would like to thank Brian Lavallee and Blair Sullivan for introducing
him to the problem and sharing some initial thoughts during his visit at NCSU in February 2018.

We acknowledge the comfortable and inspiring atmosphere of Dagstuhl Seminar 19271, where
the details of the algorithm for \textsc{Forest CSP} have been discussed.

\bibliographystyle{plainurl}

\bibliography{references}

\end{document}